\documentclass[a4paper,11pt]{article}
\usepackage{graphicx} 

\usepackage[utf8]{inputenc}
\usepackage[T1]{fontenc}
\usepackage{titlesec}

\usepackage{amsmath,amsfonts,amssymb,amsthm}
\usepackage{fullpage}
\usepackage{subfigure}
\usepackage{epstopdf}
\usepackage{color}
\usepackage[sans]{dsfont}
\usepackage{mathtools}
\usepackage[colorlinks=true,allcolors=blue!80!black, linktocpage]{hyperref}

\usepackage{appendix}

\usepackage{cleveref}
\crefname{algocf}{algorithm}{algorithms}

\usepackage{tcolorbox}
\usepackage{multirow}
\usepackage[shortlabels]{enumitem}
\setlist[enumerate]{label=(\arabic*)}
\setlist[itemize]{noitemsep}

\usepackage{todonotes}
\usepackage{xspace}
\tcbuselibrary{skins,breakable}
\tcbset{enhanced jigsaw}
\usepackage{microtype}
\usepackage{newtxtext}
\usepackage{newtxmath}

\usepackage[backend=biber,style=alphabetic,
 maxnames=10,
  url=false,     
  isbn=false     
            ]{biblatex}
\usepackage[margin=2cm,right=3cm]{geometry}

\usepackage[linesnumbered, ruled]{algorithm2e}
\SetKwRepeat{Do}{do}{while}%
\usepackage{multirow}
\usepackage{makecell}
\usepackage{algpseudocode}

\usepackage{comment}
\usepackage{xcolor}
\usepackage{nameref}
\usepackage{array}
\usepackage{tikz}
\usetikzlibrary{decorations.pathmorphing}
\usepackage{marginnote}
\usepackage[normalem]{ulem}

\usepackage{thmtools}

\newtheorem{theorem}{Theorem}
\newtheorem{corollary}[theorem]{Corollary}
\newtheorem{lemma}[theorem]{Lemma}

\theoremstyle{definition}

\newtheorem{definition}[lemma]{Definition}

\usepackage{thm-restate}

\newcommand{\papertitle}{Faster Directed Hopsets, Distance Preservers, and Deterministic Shortcut Sets}

\newif\ifanonymous
\ifx\authoranonymous\relax\anonymoustrue\fi

\makeatletter
\newcommand{\pushright}[1]{\ifmeasuring@#1\else\omit\hfill$\displaystyle#1$\fi\ignorespaces}
\newcommand{\pushleft}[1]{\ifmeasuring@#1\else\omit$\displaystyle#1$\hfill\fi\ignorespaces}
\makeatother

\newcommand{\set}[1]{\left\{ #1 \right\}}

\newcommand{\qset}{{\mathcal{Q}}}

\newcommand{\lset}{{\mathcal{L}}}

\newcommand{\rset}{{\mathcal{R}}}

\newcommand{\polylog}{\text{\normalfont polylog }}

\newcommand{\eps}{{\varepsilon}}
\newcommand{\smol}{{\eps}}
\newcommand{\poly}{\mathsf{poly}}

\newcommand{\tc}{\mathrm{TC}}

\newcommand{\dist}{\textnormal{\textsf{dist}}}

\newcommand{\chains}{\mathcal C}

\newcommand{\cO}{\mathcal O}
\newcommand{\BigO}{\cO}
\newcommand{\Oish}{\widetilde{\cO}}
\newcommand{\setsize}[1]{\left|#1\right|}

\DeclareMathOperator{\supershortcut}{SuperShort}

\DeclareMathOperator{\coverCount}{\varphi}
\newcommand{\opCount}{k}

\newcommand{\treefinder}{\mathtt{TreeFinder}}

\newcommand{\pre}{\mathrm{pre}}
\newcommand{\suf}{\mathrm{suf}}

\newlength{\LETTERheight}
\AtBeginDocument{\settoheight{\LETTERheight}{I}}
\newcommand*{\longleadsto}[1]{\ \raisebox{0.24\LETTERheight}{\tikz \draw [->,
line join=round,
decorate, decoration={
    zigzag,
    segment length=4,
    amplitude=.9,
    post=lineto,
    post length=2pt
}] (0,0) -- (#1,0);}\ }

\crefname{enumi}{condition}{conditions}
\crefname{algocf}{algorithm}{algorithms}

\def\ceil#1{\left\lceil #1 \right\rceil}

\newif\ifhidetodos
\hidetodostrue

\newcommand{\cbox}[2][yellow]{%
  \fcolorbox{#1}{white}{\parbox{\dimexpr\linewidth-2\fboxsep}{\strut #2\strut}}%
}

\newcommand{\customtodo}[3]{\textcolor{#2}{\(\blacktriangledown\)}\marginnote{\raggedright \textcolor{#2}{\textbf{#1:} #3}}}
\newcommand{\customtododisplay}[3]{\noindent\cbox[#2]{\textcolor{#2}{\textbf{#1:} #3}}}
\newcommand{\customtodoinline}[3]{\textcolor{#2}{\textcolor{#2}{\textbf{#1:} #3}}}

\ifhidetodos
    \renewcommand{\customtodo}[3]{}
    \renewcommand{\customtododisplay}[3]{}
    \renewcommand{\customtodoinline}[3]{}
\fi

\newcommand{\yn}[1]{\customtodo{YN}{purple}{#1}}

\makeatletter
\g@addto@macro\appendix{\crefalias{section}{appendix}\crefalias{subsection}{subappendix}}
\makeatother

\makeatletter
\@ifpackageloaded{enumitem}{%
  \newlist{conditions}{enumerate}{1}%
  \setlist[conditions]{nosep,leftmargin=2.2em,label=(A\arabic*{})}%
  \crefname{conditionsi}{condition}{conditions}%
}{%
  \newenvironment{conditions}{\begin{enumerate}[({A}1)]}{\end{enumerate}}%
}
\makeatother

\setlist[enumerate]{label=(\arabic*),nosep}
\setlist[1]{labelindent=\parindent}

\title{\papertitle}
\date{April 24, 2026}

\usepackage{authblk}
\usepackage{orcidlink}

\author[1,2]{Ben Bals \orcidlink{0009-0009-1054-8444}}
\author[1]{Daniel Dadush \orcidlink{0000-0001-5577-5012}}
\author[3]{Gary Hoppenworth \orcidlink{0000-0002-6534-8935}}
\author[1,2]{\protect\\ Yasamin Nazari \orcidlink{0000-0003-1315-9355}}
\author[2]{Rajath Rao K.N \orcidlink{0000-0003-3422-6940}}
\affil[1]{CWI, Amsterdam, The Netherlands}
\affil[2]{Vrije Universiteit Amsterdam, The Netherlands}
\affil[3]{TTIC, Chicago, United States}

\begin{document}

\maketitle

\begin{abstract}
   While there have been many recent advances in getting better tradeoffs for directed distance structures such as hopsets, shortcut sets and distance preservers, most known algorithms are inefficient. In this work we provide three results:
    \begin{itemize}
        \item Faster algorithms for directed $(1+\epsilon)$-hopsets, matching the state-of-the-art size/hopbound trade-offs of \cite{bernstein_closing_2023}. Our algorithm is first designed for DAGs and  then extended to general graphs (up to $n^{o(1)}$ factors) using the recent DAG projection result of \cite{haeuplerJiangSaranurak26}.
        \item Faster \textit{deterministic} algorithms for constructing shortcut sets, matching the state-of-the-art size/hopbound tradeoffs of \cite{KP22a}.
        \item Using our improved directed hopset construction, we get significantly faster algorithms for computing source-wise distance preservers. This algorithm is based on speeding-up reductions of \cite{KP22} by using certain structural properties of our hopsets.
    \end{itemize}
   
\end{abstract}

\clearpage
\tableofcontents
\clearpage

\section{Introduction}
Shortcut sets (Thorup \cite{thorup92}) and hopsets (Cohen \cite{cohen00}) are fundamental objects that have been studied explicitly or implicitly for over three decades. They received significant attention due to their vast applications in reachability and shortest path computation in various models (e.g.~\cite{UY91, KS97, HKN14a, HKN14b, HKN15, forsterNanongkai18, JLS19, fineman2018, GW20, BGW20, CFR20, KS21, ASZ20}), in addition to their extremal properties and their connections with related objects such as reachability and distance preservers (\cite{KP22,koganParter25missing}). 

\begin{definition}
Given a directed graph $G= (V,E)$, a \textbf{shortcut set} is a set of additional (unweighted) edges $H$ in the transitive closure\footnote{Recall: The transitive closure of $G$ is the graph over the same vertex set as $G$, which has an edge $(u, v)$ iff there is a $u \leadsto v$ path in $G$.} of $G$.
The \textbf{hopbound} $\beta$ of $H$ is the maximum hop distance in $G \cup H$ over all reachable pairs.

For a weighted graph $G$, a $(1+\epsilon)$-\textbf{hopset} with hopbound $\beta$ is a set of additional weighted edges $H$ such that for each
pair $(u,v)$ in the transitive closure, we have $d_G(u,v) \leq d^{(\beta)}_{G \cup H} (u,v) \leq (1+\varepsilon)d_G(u, v)$, where $d^{(\beta)}_{G \cup H} (u,v)$ denotes the length of a shortest path in $G \cup H$ using at most $\beta$ hops (edges).
\end{definition}

Hopsets are very well-understood in undirected graphs where almost matching upper \cite{elkin2019journal} and lower \cite{abboudBodwinPettie18} bounds and fast constructions are known. However, in directed graphs both existentially and computationally the picture is much less complete. 
For a long time, the only known approach for directed graphs was the following folklore algorithm: randomly sample a set of $\Oish(n^{1/2})$ vertices,\footnote{We use $\Oish_{\varepsilon}(\cdot)$ to hide $\polylog(n)$ and $\poly(1/\varepsilon)$ factors.} and add a shortcut edge
between each pair of sampled nodes that lie in the transitive closure of the input graph. For hopsets we additionally give a weight corresponding to the distance in $G$ between the endpoints of each added edge.
This simple approach leads to a shortcut set (or an exact hopset) of $\Oish(n)$ size and hopbound $\Oish(n^{1/2})$. A line of work has focused on optimizing the running time of a similar size/hopbound tradeoff with a focus on applications in parallel reachability and shortest path computation \cite{fineman2018,JLS19,CFR20}. These algorithms are also sampling based but they sample a sequence of pivots in a careful recursive approach that leads to $m^{1+o(1)}$ running time for shortcut sets and $(1+\epsilon)$-hopsets with near-linear size and $n^{1/2+o(1)}$-hopbound.

Remarkably, the hopbound of $\Oish(n^{1/2})$ for linear size shortcut sets achieved by the folklore sampling remained nearly unimproved for 30 years, until a breakthrough of Kogan and Parter \cite{KP22a} that proved: every graph has a shortcut set on $O(n)$ edges with hopbound $\Oish(n^{1/3})$. They also gave a weaker tradeoff (still improving over the folklore algorithm) for hopsets. Later, Bernstein and Wein \cite{bernstein_closing_2023} matched the size/hopbound shortcut set tradeoff of \cite{KP22a} also for $(1+\epsilon)$-hopsets. The hopset algorithm of \cite{bernstein_closing_2023} is much more complicated, but this complexity was mainly needed for handling general graphs. For DAGs, we can show a simpler hopset algorithm (see \Cref{sec:kp-hopset} for details), which was implicit in \cite{KP22a}; as observed in \cite{haeuplerJiangSaranurak26}, combining it with DAG projections gives a simpler construction for general graphs up to $n^{o(1)}$ factors.
At the same time a sequence of lower bounds \cite{hesse03,huangPettie21, bodwinHoppenworth23,hoppenworthXuXu25} were shown. The current best lower bound for $O(n)$-size shortcut sets is $\widetilde{\Omega}(n^{1/4})$. Closing the upper and lower bound gap for shortcut sets is a major open problem.

\subparagraph{Faster Hopsets.} Despite the progress on the existential bounds, these constructions are still computationally inefficient.
Initially, the shortcut sets of \cite{KP22a} and hopsets of \cite{bernstein_closing_2023} required transitive closure and all-pairs-shortest paths time respectively. While these objects are interesting in their own right, such running times would rule out most conceivable applications in reachability and shortest path computations.
Motivated by this bottleneck, \cite{kogan_faster_nodate} showed how shortcut sets can be constructed in faster than transitive closure time, using a reduction to min cost flow for computing chain covers.
(Chains are paths in the transitive closure and a chain cover aims to cover as many vertices as possible with a fixed number of chains.)
They thus show how to return paths on the transitive closure without directly computing the transitive closure.
However so far no faster algorithms were known for the $(1+\epsilon)$-hopset constructions with improved size/hopbound tradeoffs over the folklore approach. The main challenge is that for hopsets we need these chain covers to also correspond to shortest paths between their endpoints, and we do not know how to enforce such a requirement in the min cost flow reduction of \cite{kogan_faster_nodate}.
Using a greedy algorithm instead, we get the following running time for constructing the hopsets whose size/hopbound trade-off matches the  state-of-the-art hopsets of \cite{bernstein_closing_2023}:

\begin{restatable}{theorem}{hsalldiamthm}\label{thm:hs-all-diam}
Let $G=(V,E,w)$ be an $n$-vertex, $m$-edge directed acyclic graph with nonnegative edge-weights and polynomial aspect ratio. For any $\varepsilon \in (0, 1)$ and hopbound $\beta$, there is a randomized
$\Oish_{\varepsilon}\!\left(\frac{n}{\beta}\cdot m \right)$-time algorithm computing
$\Oish_{\varepsilon}\!\left( \max\left( \frac{n^2}{\beta^3}, (\frac{n}{\beta})^{\frac32} \right)\right)$-size $(1+\varepsilon)$-hopsets with hopbound $\cO(\beta)$.
\end{restatable}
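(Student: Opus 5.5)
We follow the Kogan--Parter style construction for DAGs (implicit in \cite{KP22a}, see \Cref{sec:kp-hopset}), but replace the min-cost-flow chain cover by a greedy selection of approximate shortest paths. \emph{Step 1 (scale reduction).} Using the polynomial aspect ratio, we will handle $O(\log n/\varepsilon)$ distance scales $[2^i,2^{i+1})$ separately and take the union of the resulting hopsets, so it suffices to treat pairs whose distance lies in a fixed scale. Within a scale we round edge weights to multiples of $\varepsilon 2^i/\beta$, which gives an additive error of $O(\varepsilon 2^i)$ along any $O(\beta)$-hop path. \emph{Step 2 (sampling).} We sample a set $S$ of $\Oish(n/\beta)$ vertices. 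With high probability, every shortest path with at least $\beta$ vertices contains a sampled vertex in each window of $\beta$ consecutive vertices. We compute forward and backward Dijkstra from every $s\in S$; this costs $\Oish(\frac{n}{\beta}\, m)$ time, which is the dominant term of the claimed running time.

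\emph{Step 3 (greedy path cover).} Among the pairs $(s,t)\in S\times S$ we greedily select shortest $s\leadsto t$ paths $P_1,P_2,\dots$. Each new path is one that contains at least $\beta$ vertices not yet covered by earlier paths. Each selected path is shortcut internally in the standard way: we add $\Oish(|P_j|/\beta)$ edges so that any subpath has hop distance $O(1)$ in terms of its sampled "landmark" points. We also add exact-weight hopset edges from every $s\in S$ to a set of boundary points of each path. Since every selected path covers $\beta$ new vertices, there are at most $n/\beta$ paths, and the total size is at most $|S|\cdot\#\text{paths}\cdot(\text{points per path})$. The two terms $n^2/\beta^3$ and $(n/\beta)^{3/2}$ come from balancing this cost. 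We split the paths into "long" ones, each charged $|S|$ edges through $O(1)$ endpoints, giving $(n/\beta)\cdot(n/\beta)/\beta = n^2/\beta^3$ once we account for the $\beta$ new vertices per path. The remaining pairs of sampled vertices are handled directly via a $\sqrt{}$-threshold argument (the folklore clique on $S$, restricted to pairs not already served, bounded by $(n/\beta)^{3/2}$ through a standard degree-threshold count). The greedy step can be implemented within the same $\Oish(\frac n\beta m)$ budget: we use the precomputed shortest-path trees from $S$, and after each selection we update the count of uncovered vertices on the tree paths. Because the graph is a DAG, these counts can be maintained by DP over each tree.

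\emph{Step 4 (correctness).} Fix $u,v$ and a shortest path $\pi$ between them. By sampling, $\pi$ decomposes into $O(1)$ end segments of $\le\beta$ hops plus a subpath $\pi'$ between sampled $s,t$. If $\pi'$ has fewer than $\beta$ uncovered vertices with respect to the greedy cover, it consists of $O(\beta)$ hops of original edges together with subpaths lying on selected paths $P_j$. We follow the KP argument: on the DAG, each maximal interval of $\pi'$ that shares vertices with some $P_j$ can be replaced by a jump through $P_j$'s shortcut structure. Since $P_j$ is a shortest path and we are in a DAG, the replaced segment has length at most the original up to the rounding error. We then need the number of distinct paths $\pi'$ touches, in an interleaving way, to be $O(\beta)$. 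This is the main obstacle. KP's chain-cover argument bounds interleavings using the structure of min-cost-flow chains, whereas our greedy paths are only shortest paths. I would first try to charge each switch between distinct $P_j$'s to an uncovered vertex or to a sampled landmark, exploiting the fact that the hopset edges from $s$ to boundary points of $P_j$ let us jump to the \emph{last} intersection with $P_j$. Each $P_j$ is then used at most once, and if many distinct $P_j$'s are used we still obtain $O(\beta)$ hops by the exact edges. If $\pi'$ has $\ge\beta$ uncovered vertices, the greedy rule would have selected it, which is a contradiction. Summing the errors over scales gives a $(1+O(\varepsilon))$ stretch; we then rescale $\varepsilon$.
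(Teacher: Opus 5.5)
Your proposal has a genuine gap. It sits exactly at what you call the main obstacle, and it also breaks your size bound. Your selected objects are full shortest paths that may overlap, and no subset of them is sampled. Giving each $s\in S$ even $\cO(1)$ edges into each of up to $n/\beta$ paths can already cost $\Theta(n^2/\beta^2)$, and nothing justifies the extra division by $\beta$ ``for the new vertices per path''. Likewise, a path may touch up to $n/\beta$ distinct selected paths, far more than $\beta$. Charging one visit per path therefore does not give $\cO(\beta)$ hops, and skipping them via direct edges from $s$ is precisely what blows up the size.

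The paper resolves both issues with two ingredients you are missing:
\begin{enumerate}
\item \Cref{alg:lightest-chain-cover} adds as a chain only the currently uncovered vertices of the selected lightest path, which is still a lightest chain in $\tc(G)$. Chains are therefore vertex-disjoint, exact two-hop shortcutting of all of them costs $\cO(n\log n)$ edges, and by \Cref{lem:reroute-through-chains} each chain contributes at most three vertices to a rerouted path.
\item Each chain is sampled with probability $\Theta(\log n/\beta)$, and only the $\Oish(n/\beta^2)$ sampled chains receive edges from $V'$. This is where $|V'|\cdot|\mathcal R_L|=\Oish(n^2/\beta^3)$ comes from.
\end{enumerate}
The paper never bounds the number of chains touched. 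In \Cref{lem:route-from-sampled-suffix}, the suffix $P[v_1,t]$ after the first sampled vertex has at most $\beta$ uncovered vertices by \Cref{cond:d3}. So if the rerouted suffix is long, its last $\cO(\beta)$ vertices meet $\Omega(\beta)$ distinct chains, and w.h.p.\ one of them is sampled. A single edge $(v_1,x)$ plus two exact hops inside that chain then skips everything in between.

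The stretch of that jump is controlled by two things you do not have. One is the earliest-bucket edges of \Cref{lem:earliest-bucket}, one edge per distance bucket, losing only a factor $1+\eps$. The other is truncating every chain to weight at most $\eta L$, so the detour from the landing vertex $x$ to the chain vertex $v_2$ on the path costs $d_G(x,v_2)\le\eta L$. Your ``boundary points'' provide neither. Two further problems:
\begin{itemize}
\item \Cref{cond:d3} must hold for every lightest path from $v'$ of weight at most $L$, not only for paths of precomputed trees. This is why the paper reruns an SSSP with vertex weights $-1/(2n)$ after every successful selection, with at most $n/\beta+|V'|$ runs in total.
\item Your rounding to multiples of $\eps 2^i/\beta$ is unsafe, since the underlying shortest paths may have $\Theta(n)$ hops.
\end{itemize}

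Finally, the $(n/\beta)^{3/2}$ term does not come from a degree-threshold count on a clique over $S$. It is the regime $\beta\ge n^{1/3}$ of \Cref{thm:hs-large-diam}, where even the $\Oish(n)$ exact chain-shortcut edges exceed the budget. There the paper:
\begin{itemize}
\item makes each vertex a portal with probability about $C\log n/\delta$, where $\delta=\beta/r$ and $r=\sqrt{n/\beta}$;
\item shortcuts only the portal subsequences, costing $\Oish(n/\delta)$ edges;
\item samples chains with probability $C\log n/r$;
\item uses a backward-walk argument showing that within $10\beta$ hops one either reaches $P[s,v_1]$ or meets portals of $r$ distinct chains.
\end{itemize}
The theorem is \Cref{thm:hs-small-diam} for $\beta=\cO(n^{1/3})$ combined with \Cref{thm:hs-large-diam} for $\beta\ge n^{1/3}$; the maximum in the size bound switches exactly at $\beta=n^{1/3}$.
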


We note that while our theorem statement assumes the input graph is a directed acyclic graph (DAG), this is without loss of generality (up to $n^{o(1)}$ factors) due to a reduction of \cite{haeuplerJiangSaranurak26} (see  \cref{app:dagreduction} for details). 

For the important case of $\Oish(n)$-size hopsets with hopbound $O(n^{1/3})$, by setting $\beta = n^{1/3}$ we get $\Oish(mn^{2/3})$ running time. This improves the $O(mn)$ running time of \cite{bernstein_closing_2023}. Compare this to the best (randomized) construction time of $\Oish(mn^{1/3})$ for shortcut sets \cite{kogan_faster_nodate}. 


\subparagraph{Deterministic Shortcut Sets.\protect \footnote{We would like to point out that this part of the work was previously included in another paper (and also appeared in an earlier version of an arXiv preprint \cite{balsbodwin2025greedy}). This result was moved to this paper, which we believe is a better technical and conceptual fit for it. }}
A limitation of most known shortcut set algorithms is their reliance on randomness. With the exception of a recent work \cite{balsbodwin2025greedy} that gives a greedy deterministic algorithm for shortcut sets that match tradeoffs of \cite{KP22a}, all previous approaches required sampling vertices, or paths, or both. On the other hand, the deterministic greedy approach of \cite{balsbodwin2025greedy} is much slower ($\Omega(n^2m)$ time), as it is based on an iterative algorithm that requires examining a potential function for adding each edge.

Our second result is a faster deterministic algorithm for shortcut sets that match the existential tradeoffs of \cite{KP22a}. For the linear size shortcut sets of hopbound $\Oish(n^{1/3})$ this also leads to running time of $\Oish(mn^{2/3})$. While the running time does not match the best randomized algorithm \cite{kogan_faster_nodate}, it is still faster than transitive closure time, and is much faster than the best known deterministic algorithm implied by \cite{balsbodwin2025greedy}.

\begin{restatable}[Deterministic Shortcut Sets]{theorem}{thmSetcover}
    \label{thm:det-shortcut}
    For any $\beta = \BigO(n^{1/3})$, there is a deterministic $\Oish(m\cdot n/\beta)$-time
    algorithm computing $\Oish(n^2/\beta^3)$-size $\BigO(\beta)$-shortcut sets. 
\end{restatable}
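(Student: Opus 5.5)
We plan to derandomize the Kogan--Parter framework by replacing random sampling with a greedy set-cover style selection, keeping the standard two-part structure. First reduce to DAGs (strongly connected components can be contracted and shortcut internally with a star of $O(n)$ edges and hopbound $2$). Then build a \emph{path/chain collection}: greedily pick chains (paths in the transitive closure), each time choosing a chain that covers at least $\beta$ currently uncovered vertices, until no such chain remains. Each picked chain $C$ is shortcut with the standard $O(|C|\log|C|)$-edge path shortcut of hopbound $2$ (or $\BigO(|C|)$ edges with $O(\log n)$ hops). Since each chain covers $\ge\beta$ new vertices, there are at most $n/\beta$ chains. After this, any path in $G$ has at most $\Oish(\beta)$ vertices outside the chains in a suitable sense, because otherwise it would itself be a chain with $\beta$ uncovered vertices. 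Finding a chain of maximum uncovered weight in a DAG is a longest-path computation in $\BigO(m)$ time (paths in $G$ suffice, since any chain lifts to a path). So this phase costs $\BigO(m\cdot n/\beta)$ in total.

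Second, handle paths that touch many chains. For a chain $C$ and a vertex $v$, record the first vertex of $C$ reachable from $v$ and the last vertex of $C$ that reaches $v$. For all $n/\beta$ chains these can be computed by one BFS/DP per chain over the topological order, in $\BigO(m)$ time each, i.e. $\BigO(mn/\beta)$ in total. The randomized KP argument samples $\sim n/\beta^2$-ish chains or vertices and adds chain-to-vertex shortcuts (vertex $v$ to its first reachable vertex on sampled chains), or chain-to-chain edges. Charging $n$ vertices times $\Oish(n/\beta^3)$ sampled chains gives $\Oish(n^2/\beta^3)$ edges. To derandomize, we need a set $S$ of $\Oish(n/\beta^2)\cdot$ (appropriate) chains such that every "long" path interacting with $\ge\beta^2$ chains (or whatever the threshold is in the KP analysis) hits $S$ early. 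This is a hitting-set problem, and the greedy set cover gives a $\BigO(\log n)$-approximation to the fractional bound that random sampling certifies. The difficulty is that the family of sets to hit is implicitly exponential (all paths). We will instead use a polynomial canonical family: for each vertex $v$, the sequence of chains ordered by when $v$'s forward reach first enters them. Let $T_v$ be the first $\beta$ chains reached from $v$ in topological order of entry point, and require $S\cap T_v\neq\emptyset$ for all $v$ (and symmetrically backward). Each $T_v$ has size $\beta$, so greedy gives $|S|=\Oish(n/\beta)/\beta\cdot$… Since there are $n/\beta$ chains, $|S| = \Oish((n/\beta)/\beta)$. Adding edges from every $v$ to its entry points on chains in $S$ gives $n\cdot \Oish(n/\beta^2)$ edges, which is too many. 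So we need to restrict to KP's hierarchy, where only vertices that are chain endpoints or in a sampled set receive edges. This step is where we expect the main obstacle: matching $\Oish(n^2/\beta^3)$ rather than $n^2/\beta^2$. We would first try to replicate KP's two-level sampling (sample vertices at rate $1/\beta$ and chains at rate $1/\beta^2$) with two nested greedy hitting sets, each defined on canonical $O(n)$-size families, so that greedy runs in $\Oish(n\cdot\text{family size})$.

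Finally, the hopbound analysis follows KP: a $u\leadsto v$ path either interacts with few chains, and then it has $\BigO(\beta)$ segments each shortcut to $\BigO(1)$ hops plus $\BigO(\beta)$ uncovered vertices, or it interacts with many chains. In the latter case the hitting property yields a jump from an early hit vertex to a chain in $S$, and then along that chain to near $v$. This gives hopbound $\BigO(\beta)$ under the assumption $\beta=\BigO(n^{1/3})$, which ensures that the $n/\beta$ chain shortcuts ($\Oish(n)$ edges) are dominated by $n^2/\beta^3$. The running time is dominated by the $\Oish(n/\beta)$ linear-time DP passes, which gives $\Oish(mn/\beta)$.
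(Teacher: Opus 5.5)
\textbf{Verdict: there is a genuine gap.} Your first phase is fine. The greedy cover is deterministic, takes $O(mn/\beta)$ time, and leaves fewer than $\beta$ uncovered vertices on every path. You should also cut the chains into pieces of at most $\beta$ vertices, which at most doubles their number. The gap is in the second phase, which is where the theorem's content lies.

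\textbf{The set system is indexed by single vertices.} Your $T_v$ is the first $\beta$ chains by entry point from $v$, and symmetrically you take the last $\beta$ chains by exit point into a target. The Kogan--Parter routing, however, needs the hub vertex near the source $u$ and the selected chain near the target $x$ to lie on a \emph{common} $u\leadsto x$ path. Only then does the hub reach that chain no later than where the path meets it. Hitting $T_u$ only says that $u$ reaches some selected chain within $O(\beta)$ hops. Nothing forces that chain to reach $x$: for example, $T_u$ may consist of $\beta$ topologically early chains hanging off $u$ that cannot reach $x$. Symmetrically, the chain hitting the backward family of $x$ need not be reachable from any vertex near $u$. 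So your step ``jump from an early hit vertex to a chain in $S$, then along that chain'' is unjustified.

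\textbf{How the paper makes the objects pair-dependent.} It hits pair-dependent objects while keeping them implicit and polynomially many. Consistent DFS trees $T^C$ are rooted at the chain starts, with ties broken first along the current chain and then by a fixed global order. These trees define one canonical path per pair whose first vertex lies on a chain, and these paths are closed under suffixes (\Cref{lem:path-consistent-dfs}). The elements to cover are the relevant subpaths of normalized length $\beta$ in these trees. For a long pair, both the prefix and the suffix of its canonical path are relevant subpaths, so selected chains hit both on the \emph{same} path.

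\textbf{The size bound is left open.} You flag this yourself, and ``two nested greedy hitting sets'' is not yet an argument. The missing idea is that no separate hub set is needed: take $V':=\bigcup\mathcal{C}'$, the vertices of the selected chains themselves. The chain hitting the prefix of the canonical path supplies a vertex $v_{\mathrm{pre}}$ on that path, and the chain hitting the suffix supplies the target chain. Hence a single greedy set cover suffices. Putting weight $1/\beta$ on every chain is a fractional cover of value $O(n/\beta^2)$, since each relevant subpath meets exactly $\beta$ chains, and greedy loses only $O(\log n)$. With chains of at most $\beta$ vertices, the edge set $\{e(v,C): v\in\bigcup\mathcal{C}',\, C\in\mathcal{C}'\}$ then has $O(|\mathcal{C}'|^2\beta)=\widetilde{O}(n^2/\beta^3)$ edges.

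\textbf{Running greedy fast is also nontrivial.} The implicit instance has up to $n^2$ elements, not $O(n)$, so running greedy on it within $\widetilde{O}(mn/\beta)$ time needs its own argument. The paper counts uncovered relevant subpaths as leaves of the truncated trees. It maintains these covering potentials under chain insertions in two ways. First, ancestor updates cost $O(\beta)$ per chain vertex, after compacting consecutive same-chain nodes. Second, subtree deletions are charged amortized, each tree node being deleted at most once.
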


We also make a simple observation that gives a deterministic almost-linear time algorithm for computing $\BigO(\sqrt{n})$-shortcut sets.

\subparagraph{Faster Distance Preservers.}
Given an input graph $G$ and collection of demand pairs $P \subseteq V(G) \times V(G)$, a distance preserver is a sparse subgraph of $G$ that preserves (or approximates) distances between all demand pairs in $P$. Distance preservers were first introduced and studied by \cite{bollobasCoppersmithElkin03} and  \cite{coppersmithElkin06}.  Since then,  significant progress has been made in understanding the combinatorial bounds on distance preservers~\cite{bodwinWilliams16,bodwin17,bodwin21,cyganGrandoniKavitha13,kavithaVarma15,KP22,bernstein_closing_2023,koganParter25missing,hoppenworthXuXu25,haeuplerJiangSaranurak26}. Also, distance preservers (for undirected graphs) have found applications in computing path-reporting distance oracles~\cite{pettie09,elkinPettie16,ENW16,bodwinGrandoniParterWilliams17,DBLP:conf/focs/ElkinS23}. 

While distance preservers are widely studied from an extremal point of view, to our knowledge there were no known computationally efficient algorithms for directed graphs. The known applications in distance oracles were exclusively applied to undirected graphs, and this lack of efficient constructions could be a reason for the lack of such applications for directed graphs.
For other related objects, the closest is the work of \cite{AB18} achieved fast constructions of \emph{reachability preservers}.

Distance preservers have been studied in many different settings (e.g., directed/undirected graphs, weighted/unweighted graphs, and approximate/exact distances). In particular, we study source-wise approximate distance preservers in directed weighted graphs.

\begin{definition}[$(1+\varepsilon)$-approximate distance preservers]
    Let $G = (V, E, w)$ be an $n$-vertex directed graph, let $S \subseteq V$ be a set of source nodes, and let $P \subseteq S \times V$ be collection of demand pairs. Then for any $\varepsilon > 0$, a sourcewise $(1+\varepsilon)$-approximate distance preserver $H \subseteq G$ is a sparse subgraph such that for all $(s, t) \in P$, $\dist_H(s, t) \le (1+\varepsilon) \dist_G(s, t)$. 
\end{definition}

In directed graphs, $(1+\varepsilon)$-approximate distance preservers were first studied explicitly in \cite{KP22}, which established a blackbox reduction from $(1+\varepsilon)$-approximate hopsets to $(1+\varepsilon)$-approximate distance preservers. This reduction, combined with the subsequent work of \cite{bernstein_closing_2023}, established that every $n$-vertex directed graph $G$ and set of demand pairs $P$ of size $|P| = p$ admits a $(1+\varepsilon)$-approximate distance preserver $H \subseteq G$ of size $|E(H)| = \Oish(n^{2/3}p^{2/3} + np^{1/3})$. This is the state-of-the-art\footnote{When $p \le n^{3/4}$ the recent work of \cite{haeuplerJiangSaranurak26} obtains improved bounds.} bound on $|E(H)|$ when $p \ge n^{3/4}$.
\yn{should add running time and compare BB: See end of subsection}

We then combine our fast $(1+\varepsilon)$-hopsets with the ideas from hopset-to-preserver reduction of \cite{KP22}, to get a faster $(1+\varepsilon)$-approximate distance preservers.
We note that we cannot use this reduction directly for obtaining our desired running time, and need to white-box the reduction and adapt it to the specific properties of the hopset used.

\begin{restatable}[Fast distance preservers]{theorem}{FastSourcewisePreservers}
\label{thm:fast-sourcewise-preservers}
Let $G=(V,E,w)$ be an $n$-vertex DAG with $m$ edges, nonnegative edge-weights,
and polynomial aspect ratio. Let $S\subseteq V$ and
$P\subseteq S\times V$, and write $\sigma:=|S|$ and $p:=|P|$.
Then one can compute a $(1+\epsilon)$-approximate  distance preserver
$G_P\subseteq G$ of $G, P$ with 
$|E(G_P)| = \widetilde O\!\left(np^{1/3} + (np)^{2/3}\right)$ edges in running time
\begin{itemize}
    \item $\widetilde O\!\left(p^{2/3}m + \sigma (m+p) + np^{1/3}\right)$ when $p \le n$, and
    \item $\widetilde O\!\left((np)^{1/3}m + \sigma (m+p) + (np)^{2/3}\right)$ when $p \ge n$.
\end{itemize}
\end{restatable}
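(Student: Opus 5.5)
We follow the hopset-to-preserver reduction of \cite{KP22}, instantiated with the hopset of \Cref{thm:hs-all-diam}, and open it up so that every step runs in time proportional to the stated bounds. Fix a hopbound parameter $\beta$ and build a $(1+\varepsilon/3)$-hopset $H$ with hopbound $\cO(\beta)$ on $G$. To make the reduction work we need a structural property from our construction: every hopset edge $(x,y)\in H$ is produced as a prefix of a path segment along a greedily chosen shortest-path chain. It therefore comes with an explicit $x\leadsto y$ path $\pi_{xy}$ in $G$ of length at most $(1+\varepsilon/3)\,d_G(x,y)$, and these paths are organised into $\Oish(\cdot)$ chains whose total number of edges is controlled. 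We choose $\beta$ so that the hopset size and the path budget balance against $p\beta$. With $|H|=\Oish(\max(n^2/\beta^3,(n/\beta)^{3/2}))$, taking $\beta=(n/p)^{1/3}$ when $p\ge n$, or $\beta=n/p^{2/3}$ when $p\le n$, gives hopset-building time $\Oish(mn/\beta)$. This equals $\Oish((np)^{1/3}m)$ in the first case and $\Oish(p^{2/3}m)$ in the second, matching the first terms of the claimed running times.

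The preserver is assembled in two steps. First, for each source $s\in S$ we run a $\beta$-hop-bounded Bellman--Ford/Dijkstra-style computation in $G\cup H$ from $s$. On a DAG this can be done in $\Oish(m+|H|)$ time per source by processing in topological order with hop layers compressed via the $(1+\varepsilon)$ rounding of distances; we may need the scaling trick to avoid a factor of $\beta$. This yields, for every demand pair $(s,t)\in P$, a path $Q_{st}$ in $G\cup H$ with at most $\cO(\beta)$ hops and length at most $(1+\varepsilon/3)d_G(s,t)$. This step costs $\sigma(m+p)$ up to logs, where the $p$ term pays for extracting the predecessor pointers for the demanded targets. Second, we add the $G$-edges of all $Q_{st}$ directly, at most $p\beta$ edges in total. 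We then replace each hopset edge used by its stored path $\pi_{xy}$, but only once per hopset edge, since different demand pairs share the same replacement path.

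The size analysis follows \cite{KP22}. Original edges contribute $p\beta$, and the replaced hopset edges contribute the total length of the distinct used $\pi_{xy}$. Because the $\pi_{xy}$ are subpaths of $\Oish(n/\beta)$-ish chains, their union has $\Oish(n\cdot\text{(number of chains)}/\ldots)$ edges. The point is to charge this to $np^{1/3}+(np)^{2/3}$ by the same calculation as in the existential reduction, and the chosen $\beta$ balances it against $p\beta$. The stretch composes to $(1+\varepsilon/3)^2\le 1+\varepsilon$.

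The main obstacle is the replacement step. A black-box reduction would, for each used hopset edge, recompute a shortest path, costing $m$ per edge and far too much. To avoid this we need the hopset algorithm to output, alongside $H$, a compact representation: each edge is a pointer into a chain stored as an explicit $G$-path. Then the union of replacement paths can be materialised in time linear in its size, i.e.\ within the $np^{1/3}$ or $(np)^{2/3}$ output term. A secondary worry is that the greedy chains in \Cref{thm:hs-all-diam} are only approximately shortest. We would have to verify that sub-chain lengths remain $(1+\varepsilon)$-approximate for every hopset edge, not just the chain endpoints. If they do not, we would first try re-weighting the chain construction so that each chain is an exact shortest path in a rounded-weight graph.
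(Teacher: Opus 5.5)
Your proposal has a genuine gap at the replacement step, and that step is what decides the size bound. First, the structural property you rely on does not hold for the hopset of \Cref{thm:hs-all-diam}. Besides the chain shortcuts $F_Q$, whose endpoints lie on a chain that is a path in $\tc(G)$ rather than in $G$, $H$ contains the bucket edges $(v,x_j(v,Q))$ from sampled vertices $v\in V'$ to sampled chains $Q$. Such an edge is realised only by some shortest $v\leadsto x_j(v,Q)$ path in $G$. That path is not a subpath of any chain and may have $\Theta(n)$ edges. (Your secondary worry is unfounded: the chains are exact lightest chains by (A1).)

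Second, and more importantly, suppose every hopset edge did come with a $G$-path. A single-level reduction still only gives $|E(G_P)|\le p\beta+|\bigcup\pi_{xy}|$. Nothing in the construction bounds that union better than about $|V'|\cdot n=\Oish(n^2/\beta)$: one shortest-path tree per sampled vertex, and similarly for the $G$-paths underlying the $\Oish(n/\beta)$ chains. With $\beta=n/p^{2/3}$ for $p\le n$ this is $np^{2/3}$, not $np^{1/3}$. With the correct choice $\beta=(n^2/p)^{1/3}$ for $p\ge n$ it is $n^{4/3}p^{1/3}\ge(np)^{2/3}$. So your sentence about the union having $\Oish(n\cdot\text{(number of chains)}/\ldots)$ edges is where the argument fails. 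The ``same calculation as in the existential reduction'' cannot rescue it, because that reduction is not single-level.

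The missing idea is a hopset hierarchy. Build $H_1,\ldots,H_L$ with hopbounds $\beta_0=n>\beta_1>\cdots>\beta_L$. Expand an edge of $H_i$ not into $G$ but into a $\beta_{i-1}$-hop $(1+\eta)$-approximate path in $G\cup H_{i-1}$, recursing down to $G$. Then $|E(G_P)|\le p\beta_L+\sum_i|H_i|\beta_{i-1}$, and the levels are chosen as follows.
\begin{itemize}
\item Use $\beta_i=\beta_{i-1}^{2/3}\beta_*^{1/3}$ towards $\beta_*=\max\{n/p^{2/3},n^{1/3}\}$, with the $(n/\beta_i)^{3/2}$-size hopsets.
\item If $p\ge n$, continue with $\gamma_j=\gamma_{j-1}^{1/3}\beta_{\mathrm{II}}^{2/3}$ towards $\beta_{\mathrm{II}}=(n^2/p)^{1/3}$, with the $n^2/\gamma_j^3$-size hopsets.
\end{itemize}
This makes every level contribute $\Oish(np^{1/3})$, respectively at most $\Oish((np)^{2/3})$, with $L=O(\log\log n)$. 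The stretch compounds over $L+1$ levels, so you need $\eta=\varepsilon/(2(L+1))$ rather than $\varepsilon/3$.

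The algorithmic work is then to output these witness paths in time linear in their size after $\Oish(n/\beta_i\cdot m)$ preprocessing (\Cref{lem:witness-paths-for-hopset-edges}). The paper does this by running the rooted retrieval of \Cref{lem:rooted-path-retrieval} in $G\cup H_{i-1}$ from every sampled vertex of level $i$ (for bucket edges). For chain edges, it runs it from the midpoints of the recursive chain-shortcut decomposition. It does not use pointers into stored chains.

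Your per-source step can be made to work: a Klein--Subramanian style rounding per distance scale yields $O(\beta/\varepsilon)$-hop $(1+\varepsilon)$-approximate paths in $\Oish_{\varepsilon}(m+|H|)$ time. The paper instead rehangs a shortest-path tree along the hopset's links. Finally, your choice $\beta=(n/p)^{1/3}$ for $p\ge n$ is below $1$. It should be $(n^2/p)^{1/3}$, which is what gives $n/\beta=(np)^{1/3}$.
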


In particular, the previous theorem implies the following corollary when our demand pairs take the form $P = S \times V$. 

\begin{corollary}
\label{corr:preserver}
Let $G=(V,E,w)$ be an $n$-vertex DAG with $m$ edges, nonnegative edge-weights,
and polynomial aspect ratio. Let $S\subseteq V$  be a set of $\sigma:=|S|$ source vertices. Then one can compute a $(1+\varepsilon)$-approximate distance preserver $H \subseteq G$ for demand pairs  $S \times V$, with 
$|E(H)| = \widetilde O\!\left(n^{4/3}\sigma^{2/3}\right)$ edges in running time $\Oish(\sigma^{1/3} n^{2/3}m + \sigma m + \sigma^2 n)$.
\end{corollary}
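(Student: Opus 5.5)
This is a direct instantiation of \Cref{thm:fast-sourcewise-preservers} with $P = S\times V$, so $p = \sigma n$. I will substitute this value into the size bound and into the running-time bound. Since a source set with $\sigma \ge 1$ gives $p=\sigma n \ge n$, the second running-time case of the theorem applies.

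\emph{Size.} The bound $\Oish(np^{1/3} + (np)^{2/3})$ becomes
\[
\Oish\bigl(n\cdot \sigma^{1/3}n^{1/3} + (\sigma n^2)^{2/3}\bigr) = \Oish\bigl(\sigma^{1/3}n^{4/3} + \sigma^{2/3}n^{4/3}\bigr).
\]
The second term dominates because $\sigma \ge 1$. This gives $|E(H)| = \Oish(n^{4/3}\sigma^{2/3})$.

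\emph{Time.} The second case of the theorem gives
\[
\Oish\bigl((np)^{1/3}m + \sigma(m+p) + (np)^{2/3}\bigr).
\]
Now $(np)^{1/3} = \sigma^{1/3}n^{2/3}$, and $\sigma(m+p) = \sigma m + \sigma^2 n$. The last term $(np)^{2/3} = \sigma^{2/3}n^{4/3}$ still has to be absorbed. Since $\sigma\le n$ and (I assume, as is standard for these reductions) $m \ge n-1$, we have $\sigma^{2/3}n^{4/3} \le \sigma^{1/3}n^{2/3}\cdot n \lesssim \sigma^{1/3}n^{2/3}m$. Combining these gives the stated $\Oish(\sigma^{1/3}n^{2/3}m + \sigma m + \sigma^2 n)$.

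The only point needing care is absorbing the output-size term $(np)^{2/3}$ into the other terms. If the graph may have fewer than $n$ edges, I would first discard isolated vertices, or else observe that $\sigma^{2/3}n^{4/3}\le \max(\sigma^{1/3}n^{2/3}m,\ \sigma^2 n)$ in the relevant regime. Either way, a mild nondegeneracy assumption handles it, and otherwise the corollary is routine arithmetic.
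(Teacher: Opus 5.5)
Your proof is correct and is exactly the intended argument: the paper derives the corollary by substituting $p=\sigma n\ge n$ into the second case of \Cref{thm:fast-sourcewise-preservers}. Your absorption of $(np)^{2/3}=\sigma^{2/3}n^{4/3}\le \sigma^{1/3}n^{2/3}\cdot n$ uses $\sigma\le n$ together with the same w.l.o.g.\ assumption $n\le 2m$ (discarding isolated vertices) that the paper makes in the proof of \Cref{thm:hs-small-diam}. Rely on that reduction rather than your alternative bound via $\max(\sigma^{1/3}n^{2/3}m,\sigma^2 n)$, which fails in general for very sparse inputs with small $\sigma$ (e.g.\ $m=1$, $\sigma=1$).
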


\Cref{thm:fast-sourcewise-preservers} and \Cref{corr:preserver} recover the state-of-the-art combinatorial bounds for $(1+\varepsilon)$-approximate distance preservers in directed graphs for $p \ge n^{3/4}$ (due to \cite{KP22} and \cite{bernstein_closing_2023}) but with a faster construction time. Indeed, directly implementing the hopset-to-preserver reduction of \cite{KP22} incurs an $\Omega(W n^{\omega+1})$ running time, where $W$ is the graph's aspect ratio, stemming from its reliance on a hop-bounded APSP subroutine \cite{chechik2020distance,zwick2002all}. In contrast, our preserver construction achieves time complexity $O(mn^{\alpha})$, where $\alpha < 1$, for suitably small choices of $\sigma$ and $p$.

\subsection{Summary of Technical Contributions}

\subparagraph{Faster Hopsets.}
To exploit the improved DAG projections to find a faster algorithm for hopsets, we must find a faster algorithm to find the distance analogue of the  chain covers used in shortcut sets.\footnote{See \Cref{def:chain-cover} for a precise definition of these covers.} 
In particular, we need to find chains that are shortest chains (meaning their total weight is equal to the distance between their endpoints).
This makes the min-cost max-flow reduction used in the shortcut set setting inapplicable.
A greedy chain cover algorithm running in $\cO(mn)$ time for the case where the chains do not need to be shortest chains is folklore (see for example Section 3 of \cite{grandoni_all-pairs_2021}).
We extend it to the case with shortest chains in the same running time.
We then speed this up by adding a simple sampling argument allowing us to compute the necessary cover in $\Oish(n/\beta \cdot m)$ time (compared to the $\Oish(mn)$ algorithm relying on APSP).

\subparagraph{Deterministic Shortcut Sets.}
The Kogan-Parter~\cite{KP22a} shortcut set algorithm makes random choices in two of its steps: it samples a set of vertices and a set of chains from a chain cover.
We can avoid the vertex sampling and only keep the chains: one can use the union of the vertices in the chains of the subsampled chains in place of the sampled vertices and the analysis goes through with minimal adjustment.

Picking the subset of chains deterministically, on the other hand, poses significant technical challenges. We want to choose a set of chains that hit the length-$\beta$ prefix and suffix of any path in our graph $G$. This can be interpreted as a set cover instance.
The known random solution can be turned into a deterministic upper bound on the size of this cover using a probabilistic argument.
Of course, while this set cover directly yields a deterministic solution, it is not efficient (not even poly-time).
Observe that we can tolerate a $\cO(\log n)$-approximation of this set cover instance and thus use the greedy set cover algorithm, but even then the instance is too large to be efficiently solved.

Instead we find this solution implicitly using the specific structure of the shortcut set construction.
In particular, we do not need to cover the prefix and suffix of all paths, but can fix a canonical path between any pair of reachable nodes.
We choose these paths such that they have certain desirable properties, based on consistent DFS trees corresponding to the chain cover (see \cref{lem:path-consistent-dfs} for details).

As a final step, we efficiently compute how much each new candidate chain contributes to our set cover solution.
Crucially, to obtain an efficient algorithm, we develop a fast way to dynamically update this information as we greedily extend our partial solution.
This is only possible by the combinatorial properties of the selected canonical paths mentioned above.

\subparagraph{Faster Distance Preservers.} Hopset constructions are known to imply distance preserver constructions by a blackbox reduction due to \cite{KP22}. Thus, our faster hopset algorithm raises the possibility of obtaining a faster  algorithm for constructing distance preservers by plugging our hopset algorithm into this hopset-to-preserver reduction. Unfortunately, the reduction itself includes computational steps like all-pairs shortest paths that is too slow. To obtain the faster $\Oish(n/\beta \cdot m)$ running time, we ``whitebox'' the reduction and crucially use the assumption that the hopsets are computed using specifically our  faster hopset algorithm. In particular, for two consecutive hopsets in the hierarchy $H_i, H_{i+1}$, we build a data structure that can, for each hopset edge $e \in H_{i+1}$, provide a $(1+\smol)$-approximate $\beta_i$-hop path $\Pi(e)$ in $G \cup H_i$ in time near-linear in the output, that is in $\Oish(|\Pi(e)|)$ time.

Finding these witnessing paths is the core computational bottleneck in the reduction. The speed up relies on finding a small enough set of trees encoding the low-hop paths guaranteed by the Kogan-Parter style construction.
To construct them efficiently, we start with a shortest path-tree for a smaller set of relevant vertices.
For any shortest path $P$, we can obtain an approximate shortest path $P'$ obeying the hopbound by finding the first sampled vertex $u$ and the last sampled chain $Q$ on $P$ and rerouting via the $u$ to $Q$ edge added during the hopset construction.
Interestingly, we can do an analogous transformation on an entire shortest path tree, interpreting this rerouting as rehanging parts of the tree to reduce its height.

\section{Preliminaries}
A \emph{directed acyclic graph (DAG)} is a directed graph that does not contain a directed cycle.

\begin{lemma}[DAG SSSP DP, \cite{cormen2022introduction} Sec. 22.2]
\label{lem:dag-sssp}
    There is a $\cO(n+m)$ time algorithm that given a real-valued DAG $G=(V,E,w)$ and source $s \in V$, finds all out-distances from $s$. It can also report a shortest path tree from $s$ in the same time.
\end{lemma}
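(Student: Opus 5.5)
The plan is the textbook dynamic program over a topological order. First I will compute a topological ordering $v_1, \dots, v_n$ of $G$ by Kahn's algorithm. Kahn's algorithm repeatedly removes a vertex of in-degree zero while keeping in-degree counters and a queue. It takes $\cO(n+m)$ time, and it succeeds because $G$ is acyclic, so some vertex of in-degree zero always remains. This ordering has the key property that every edge $(v_i,v_j)\in E$ satisfies $i<j$.

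Next I will initialize $d(s)=0$ and $d(v)=\infty$ for $v\ne s$, together with parent pointers $\pi(v)=\bot$. Then I scan the vertices in topological order. For each $u$ with $d(u)<\infty$, I relax every outgoing edge $(u,v)$: if $d(u)+w(u,v)<d(v)$, I set $d(v)\gets d(u)+w(u,v)$ and $\pi(v)\gets u$. Each edge is relaxed exactly once, so this phase also costs $\cO(n+m)$.

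For correctness I will prove by induction along the topological order that $d(v_j)=\dist(s,v_j)$ when $v_j$ is scanned. Every $s\leadsto v_j$ path is acyclic, since $G$ is a DAG. So a shortest path is well defined even with arbitrary real weights, because there are no cycles and hence no negative cycles. Moreover, the last edge $(v_i,v_j)$ of that path has $i<j$. By the induction hypothesis, $d(v_i)$ was already exact when $v_i$ was scanned, and relaxing $(v_i,v_j)$ gives $d(v_j)\le \dist(s,v_j)$. In the other direction, every finite value of $d$ is the length of an actual walk from $s$, so $d(v_j)\ge\dist(s,v_j)$. Vertices that are unreachable keep the value $\infty$. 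Vertices that precede $s$ in the order are never reached, which is consistent.

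For the shortest path tree, I will take the edges $\{(\pi(v),v)\}$ over all reachable $v\ne s$. Every parent pointer goes to a vertex earlier in the order, so these edges form a tree rooted at $s$ with no cycles. After the last update to $\pi(v)$ we have $d(v)=d(\pi(v))+w(\pi(v),v)$, and $d(\pi(v))$ was already final at that moment. Therefore tree paths realize the distances. There is no real obstacle here. The only point that needs care is the remark that the weights may be negative reals, and this is harmless precisely because of acyclicity.
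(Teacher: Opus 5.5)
Your proposal is correct and is exactly the standard relax-in-topological-order algorithm of the CLRS section (Sec.~22.2) that the paper cites instead of giving its own proof. Computing the topological order with Kahn's algorithm rather than the DFS-based sort used there makes no difference.
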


\begin{definition}[$\ell$-Chain Cover]
\label{def:chain-cover}
    Let an \emph{$\ell$-chain cover} be a set of at most $\ell$ vertex-disjoint chains $\chains$ in $G$ such that for any path $P \subseteq G$, it holds that $\setsize{V(P) \cap (V\setminus V(\mathcal C))} \le 2 n/ \ell$.
\end{definition}

Note that this formalizes the output of the \texttt{DisjointChains} algorithm of \cite{kogan_beating_2022}.

The following lemma is a slightly tighter version of Lemma~1.10 from \cite{kogan_beating_2022}.
\begin{lemma}[Lemma 1.3 of \cite{Raskhodnikova10}]
    \label{lem:path-shortcut}
    For any path graph $P_n$, there is a shortcut set of size $\BigO(n \log^* n)$ for $\beta=4$, which can be computed in $\BigO(n \log^* n)$ time.
\end{lemma}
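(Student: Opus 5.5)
The statement is Lemma 1.3 of \cite{Raskhodnikova10}: a path $P_n$ admits a shortcut set of size $\BigO(n\log^* n)$ with hopbound $4$, computable in the same time. I would use the classical recursive-separator construction for paths (Alon--Schieber style, as refined by Raskhodnikova et al.). The recursion partitions the path into blocks whose sizes shrink like an iterated logarithm.

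\textbf{Step 1: one level of the construction.} Given a subpath of $k$ vertices, split it into consecutive blocks of size roughly $b=\log k$, and call the block endpoints \emph{boundary vertices}; there are $\BigO(k/\log k)$ of them.
\begin{itemize}
\item Among the boundary vertices, add a shortcut set with hopbound $2$. Recursive halving gives one of size $\BigO(t\log t)$ on $t$ vertices, i.e.\ $\BigO(k)$ edges here.
\item Connect every vertex to the first boundary vertex after it and from the last boundary vertex before it, i.e.\ to both endpoints of its own block. This costs $\BigO(k)$ edges.
\end{itemize}
For a query $u\leadsto v$ with $u,v$ in different blocks, the path $u\to$ (next boundary) $\to\to$ (previous boundary before $v$) $\to v$ uses at most $4$ hops, so $\BigO(k)$ edges suffice for pairs in different blocks.

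\textbf{Step 2: recursion and size bound.} Pairs inside a single block are handled by recursing on each block, which has size $\log k$. The recursion depth is then $\log^* n$, and every level adds $\BigO(\text{total size})=\BigO(n)$ edges. This gives $\BigO(n\log^* n)$ edges overall.

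\textbf{Step 3: running time.} Each edge is generated explicitly in $\BigO(1)$ amortized time by straightforward loops over blocks and the halving recursion. The total time is therefore linear in the output, $\BigO(n\log^* n)$.

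\textbf{Main obstacle.} The subtle point is getting hopbound exactly $4$ rather than $2+2$ per level or something growing with depth. The recursion must only be invoked for pairs lying entirely within one block, so the hop bound is never composed across levels. Likewise the boundary-to-boundary structure must be a genuine $2$-hop shortcut set and not itself recursive. I would first check the degenerate endpoint cases: $u$ or $v$ already a boundary vertex, and adjacent blocks with no intermediate boundary. In those cases fewer hops suffice, so they are fine.
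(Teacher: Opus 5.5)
Your proposal is correct and is the standard Alon--Schieber-type construction underlying the cited result: blocks of size $\log k$, a $2$-hop recursive-halving structure on the $\BigO(k/\log k)$ block endpoints, edges from every vertex to the endpoints of its block, and recursion inside blocks, giving $T(k)/k = \BigO(1) + T(\log k)/\log k$ and hence $\BigO(n\log^* n)$ edges with hopbound $4$. The paper gives no proof of its own and simply imports the lemma from \cite{Raskhodnikova10}; the one detail to state explicitly is that $b_1$ should be the first boundary vertex \emph{at or after} $u$ and $b_2$ the last \emph{at or before} $v$, which guarantees $b_1 \le b_2$ whenever $u$ and $v$ lie in different blocks.
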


The idea of \emph{supershortcuts} provides a way to shortcut a path to constant diameter.
This is a common subroutine for more general shortcutting algorithms.

\begin{corollary}
    \label{lem:supershortcut}
    For any DAG $G=(V,E)$ and any set of vertex-disjoint chains $\chains$ in $G$, there is a set of edges $H$ s.t. $\setsize{H} = \BigO(n \log^* n)$ and for any $C \in \chains$ and $u,v \in C$ with $u$ before $v$, we have $\dist_{G \cup H}(u,v) \le 4$.
    Write $\supershortcut(\chains)$ for such an $H$.
    It can be computed in $\BigO(|H|)$ time.
\end{corollary}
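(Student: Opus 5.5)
The plan is to reduce the statement to \Cref{lem:path-shortcut}, applied to each chain separately. Fix a chain $C \in \chains$ and list its vertices in chain order as $v_1, v_2, \dots, v_{k}$, where $k = |C|$. A chain is a path in the transitive closure, so for every $i<j$ there is a $v_i \leadsto v_j$ path in $G$. Now regard $C$ as the abstract path graph $P_k$ with edges $(v_i, v_{i+1})$.

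\Cref{lem:path-shortcut} gives a shortcut set $H_C$ for $P_k$ with $|H_C| = \BigO(k \log^* k)$, computable in $\BigO(k\log^* k)$ time, and with hopbound $4$ in $P_k \cup H_C$. Every edge of $H_C$ has the form $(v_i, v_j)$ with $i<j$, and so does every edge of $P_k$. Hence all of these edges lie in the transitive closure of $G$, which makes them legal shortcut edges for $G$. Put
\[
H := \bigcup_{C \in \chains} \bigl( H_C \cup \{(v_i, v_{i+1}) : v_i, v_{i+1} \text{ consecutive in } C\} \bigr).
\]
We include the consecutive pairs because a chain's consecutive vertices need not be joined by an edge of $G$.

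For correctness, take $u, v \in C$ with $u$ before $v$. The $4$-hop path from $u$ to $v$ in $P_k \cup H_C$ uses only edges of $H$, so $\dist_{G\cup H}(u,v) \le 4$.

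For size, the chains are vertex-disjoint, so $\sum_{C} |C| \le n$. Therefore
\[
|H| \le \sum_C \BigO\bigl(|C|\log^* |C|\bigr) + \sum_C |C| = \BigO(n \log^* n).
\]
For running time, each chain is processed in time linear in its output, so the total time is $\BigO(|H|)$. This assumes the chains are given as explicit vertex sequences, and one should note that $\BigO(|H|) \ge \Omega(\sum_C |C|)$ already covers reading them.

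The only step needing care is the observation that the shortcut edges for the abstract path graph are legitimate edges in the transitive closure of $G$; this follows because chains are transitive-closure paths ordered consistently with reachability. Beyond that, the construction is a direct per-chain application of \Cref{lem:path-shortcut}, and no real obstacle is expected.
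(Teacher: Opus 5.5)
Your proposal is correct and follows the same route as the paper: add the consecutive-pair edges of every chain, apply \Cref{lem:path-shortcut} to each chain separately, and use vertex-disjointness to get the $\BigO(n\log^* n)$ size bound. One small point: \Cref{lem:path-shortcut} only gives $\BigO(k\log^* k)$ running time per chain, so the $\BigO(|H|)$ time bound (yours and the paper's) tacitly assumes the construction's running time is linear in the number of edges it actually outputs.
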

\begin{proof}
    Add the edges between consecutive nodes of every chain, and apply \Cref{lem:path-shortcut} to every chain.
\end{proof}

\section{Greedy Lightest Chain Cover in $\Oish(n/\beta\cdot m)$}
\label{sec:lightest-chain-cover}

In \Cref{sec:kp-hopset} we present a simple algorithm for constructing $(1+\varepsilon)$-hopsets of DAGs due to Kogan and Parter \cite{KP22a}. This construction has the following main steps, repeated over every distance scale  in the graph: (1) sample a subset $V'$ of vertices, (2) compute a family $\chains$ of well-behaved chains, (3) super-shortcut along each chain in $\chains$, (4) sample a subset $\chains'$ of chains, and (5) connect each vertex in $V'$ to each chain in $\chains'$ when possible. The generic routing and implementation lemmas there, especially \Cref{lem:route-from-sampled-suffix,lem:build-one-scale}, show that once Step~(2) is available, the remainder of the construction goes through exactly as written. Thus, in this section it remains only to compute the chain family described in Step~(2) quickly.

Every step of the construction algorithm in \Cref{sec:kp-hopset} can be easily implemented in  $\Oish(n/\beta \cdot m)$ time, except Step~(2), where we need to take some care. Constructing the greedy chain family will be the main focus of this section.

Take $V'$ as in the construction in \Cref{sec:kp-hopset}, that is a set of vertices of size $n/\beta \cdot C\log n$ taken uniformly at random from $V$. Write $\beta$ for the target hopbound and $L$ for the distance scale we are working at. Then we aim to compute the following object.

\begin{definition}[Lightest Chain Cover for $V'$]
  \label{def:lightest-chain-cover}
  Call $\chains$ a \emph{lightest chain cover} from $V'$ (for $L$, $\beta$, and $\eta$) if
\begin{conditions}
  \item every chain $Q \in \chains$ is a lightest chain (i.e., if $u,v$ are its endpoints, then $w(Q) = d(u,v)$),\label{cond:d0}
  \item for every $Q \in \chains$, $w(Q) \le \eta L$,\label{cond:d1}
  \item $\setsize{\chains} = \cO(n/\beta)$, where the hidden constant may depend on $\eta$, \label{cond:d2}
  \item chains in $\chains$ are pairwise vertex-disjoint, and
  \item there is no lightest path $P$ starting at any $v' \in V'$ with (a) more than $\beta$ vertices not in $\chains$, and (b) $w(P) \le L$.\label{cond:d3}
\end{conditions}
\end{definition}

By \Cref{lem:prefix-sampling}, every long shortest path reaches the sampled set $V'$ within its first $\cO(\beta)$ vertices. Hence, once we control the remaining suffix by a cover as above, \Cref{lem:route-from-sampled-suffix} from \Cref{sec:kp-hopset} gives the desired low-hop route. Let us now see how to compute this chain cover efficiently.

\begin{lemma}
\label{lem:fast_chain_alg}
    Given $G$, $\beta$, $L$, and the sampled set $V'$ above, there is an algorithm computing a set $\chains$ satisfying \Cref{def:lightest-chain-cover} in $\Oish(n/\beta \cdot m)$ time.
\end{lemma}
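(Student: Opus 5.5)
We construct $\chains$ greedily, one source $v' \in V'$ at a time, maintaining the set $U := V \setminus V(\chains)$ of currently uncovered vertices. Initially $\chains = \emptyset$ and $U = V$.

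\textbf{Processing a source.} For each $v' \in V'$ in turn, we run one pass of the DAG dynamic program of \Cref{lem:dag-sssp} to compute distances $d(v',\cdot)$. We then restrict to the subgraph of tight edges, i.e.\ edges $(x,y)$ with $d(v',x)+w(x,y)=d(v',y)$, and to vertices $y$ with $d(v',y)\le L$. Every path from $v'$ in this subgraph is a lightest path of weight at most $L$. On this tight DAG we run a second topological-order DP that computes, for every $y$, the maximum number of vertices of $U$ on a tight $v'\leadsto y$ path. Ties are broken so that a maximizing path can be read off via parent pointers.

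While some $y$ has value exceeding $\beta$, we take the corresponding path $P$. Its uncovered vertices are collected into chains as follows. Consider the maximal subpaths of $P$ whose vertices lie in $U$. Each such subpath is a subpath of a lightest path, hence itself a lightest chain, so \ref{cond:d0} holds; its weight is at most $L\le \eta L$, so \ref{cond:d1} holds. Uncovered vertices that are separated by covered vertices can instead be joined into a single chain with transitive-closure hops. Its weight is still at most $L$, and it is still lightest, since it consists of consecutive-in-order vertices of a shortest path. This packs at least $\beta$ new vertices into $\cO(1)$ chains, which are vertex-disjoint from existing chains because they use only vertices of $U$. We add these chains, remove their vertices from $U$, and redo the counting DP for this $v'$.

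\textbf{Chain count.} Each iteration covers at least $\beta$ new vertices, so there are at most $n/\beta$ iterations overall, giving $|\chains|=\cO(n/\beta)$ and hence \ref{cond:d2}. When we finish $v'$, no tight path from $v'$ of weight at most $L$ has more than $\beta$ uncovered vertices. Since $U$ only shrinks afterwards, this property persists, which gives \ref{cond:d3}.

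\textbf{Running time.} There are $|V'|=\Oish(n/\beta)$ sources and each DP pass costs $\cO(m)$. Successful extractions number at most $n/\beta$, each costing $\cO(m)$ to recompute. The total is therefore $\Oish(n/\beta\cdot m)$. The polynomial aspect ratio only matters for the outer loop over $L$, which lies outside this lemma.

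\textbf{Main obstacle.} The delicate step is \ref{cond:d3}: "lightest path" must mean every lightest path, not just those inside one fixed shortest-path tree. This is handled by working in the full tight-edge DAG rather than in a tree. In that DAG the max-uncovered-count DP is exact, because every lightest path from $v'$ is a path in the tight DAG, and the tight DAG is acyclic.

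A second subtle point is that gluing uncovered pieces across covered vertices must not violate vertex-disjointness. It does not: a chain only needs to consist of its own vertices ordered along a path in the transitive closure, and the intermediate covered vertices are merely skipped, not included. If this gluing is disallowed, the fallback is to count only maximal uncovered runs and charge each new chain to at least $\beta/\text{(number of runs)}$ vertices. That accounting is weaker, so I expect the gluing formulation to be the one used.
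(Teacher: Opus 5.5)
\textbf{Gap in \ref{cond:d1}.} Your algorithm is essentially the paper's. For each $v'\in V'$ you greedily find, among lightest paths from $v'$ of weight at most $L$, one with the most uncovered vertices, and turn those vertices into a new chain. Your two-pass DP on the tight-edge DAG is equivalent to the paper's vertex weights $-1/(2n)$, i.e.\ lexicographic comparison of (distance, $-$\#uncovered). The paper also glues uncovered vertices across covered ones, just as you do, so your fallback is not needed. The persistence argument for \ref{cond:d3} and the $\Oish((n/\beta+|V'|)\,m)$ time bound match as well.

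The step that fails is your justification of \ref{cond:d1}: ``its weight is at most $L\le \eta L$.'' That inequality is false in the regime the definition is built for. There $\eta$ is a small parameter; the paper uses $\eta=\varepsilon_0/10$. It has to be small, because in \Cref{lem:route-from-sampled-suffix} the jump to an earlier vertex of a sampled chain costs an additive $d_G(x,v_2)\le w(Q)\le \eta L\le\eta(1+\varepsilon)d_G(s,t)$. If chains could weigh up to $L$, the stretch would only be about $2$, not $1+\varepsilon$.

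\textbf{The fix.} Add the paper's final truncation step. Cut each chain greedily into consecutive pieces of weight at most $\eta L$:
\begin{itemize}
  \item Every piece except the last, together with the hop to the next piece, weighs more than $\eta L$. So a chain of weight at most $L$ yields fewer than $1/\eta+1$ pieces.
  \item Consecutive pieces of a lightest chain are again lightest chains, and they stay vertex-disjoint.
  \item The covered vertex set is unchanged, so \ref{cond:d3} still holds.
  \item $|\chains|$ grows only by an $O(1/\eta)$ factor. This is exactly what \ref{cond:d2} permits, since its hidden constant may depend on $\eta$.
\end{itemize}
This post-processing takes linear time in the total chain size. With it added, your argument proves the lemma.
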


Note that in \Cref{alg:lightest-chain-cover} we assume positive integer weights. Then the whole-number part of the found distances encodes the distance by weight, and the decimal part encodes the maximum number of uncovered vertices on any shortest path, allowing us to check the if-condition. For arbitrary nonnegative real weights, the DP of \Cref{lem:dag-sssp} instead compares pairs (distance, $-$\#uncovered vertices) lexicographically. This yields the same paths without any rounding, so all chains stay exactly lightest.

\begin{algorithm}[tbhp]
        \KwIn{$G=(V, E, w)$, a directed acyclic graph with positive integer weights, a set $V' \subseteq V$, and parameters $L,\beta$}

         $\chains \gets \{\}$

        Assign weight $-1/(2n)$ to every vertex

        \For{$v' \in V'$}{
            \Repeat{the if-condition fails}{
                Compute an SSSP from $v'$ (\Cref{lem:dag-sssp})

                \If{there is a node $u$ such that its distance from $v'$ is at most $L$ and it uses at least $\beta$ uncovered vertices}{
                    Add all nodes from the lightest $v'$-$u$ path with negative weight as a chain to $\chains$

                    Set the weight of these nodes to $0$
                  }
                }
        }

        Truncate all chains into pieces of length at most $\eta L$
    \caption{Computing a lightest chain cover for $V'$, $L$, and $\beta$. \label{alg:lightest-chain-cover}}
\end{algorithm}

\begin{proof}[Proof of \Cref{lem:fast_chain_alg}]
For correctness, fix one execution of the repeat-loop for some $v' \in V'$. The modified length of a path is its ordinary edge-weight plus the sum of the current vertex weights on that path. Every vertex weight always lies in $\{0,-1/(2n)\}$, so the total perturbation on any path has absolute value less than $1/2$. Since the edge weights are positive integers, every modified shortest path from $v'$ is therefore also a lightest path in the original graph, and among lightest paths it maximizes the number of currently uncovered vertices.

Hence, whenever the if-condition succeeds, the algorithm finds a lightest path $R$ from $v'$ to some node $u$ with $w(R) \le L$ and at least $\beta$ currently uncovered vertices. The negative-weight vertices on $R$, in their order along $R$, form a lightest chain, and this is exactly the chain added to $\chains$. Afterwards those vertices are assigned weight $0$, so no later chain can contain them. Thus the chains are pairwise vertex-disjoint. After the final truncation, every piece is still a lightest chain, every piece has weight at most $\eta L$, and the covered vertex set is unchanged. Since each successful iteration covers at least $\beta$ new vertices, there are at most $n/\beta$ such iterations; the truncation increases the number of chains by only an $O_\eta(1)$ factor. This proves \Cref{cond:d0,cond:d1,cond:d2} and the pairwise-disjointness condition.

For \Cref{cond:d3}, fix $v' \in V'$. When the repeat-loop for $v'$ stops, there is no lightest path starting at $v'$ with weight at most $L$ and at least $\beta$ currently uncovered vertices. Afterwards the algorithm only adds more chains, so the set of uncovered vertices can only shrink. Therefore the same statement still holds for the final family $\chains$.

For the running time, every successful iteration covers at least $\beta$ previously uncovered vertices, so there are at most $n/\beta$ successful SSSP computations. In addition, each $v' \in V'$ contributes at most one final failed SSSP computation. Thus the running time is $\Oish((n/\beta + |V'|)\cdot m)$, which is $\Oish(n/\beta \cdot m)$ for the sampled set $V'$ used here. The final truncation takes only linear time in the total chain size.
\end{proof}

See \Cref{sec:fast-all-params} for how to plug this greedy algorithm into the hopset construction for all parameter regimes.
This yields the following main theorem.

\hsalldiamthm*

\section{Near Linear-Time Deterministic \(\sqrt{n}\)-Shortcut Sets}
\label{sec:linear-time-deterministic}
In \cite{balsbodwin2025greedy}, the authors observed that combining previous results from \cite{kogan_beating_2022} and \cite{caceres_minimum_2023} yields an almost-linear time algorithm to compute a chain cover.

\begin{lemma}[Lemma 6.1 of \cite{balsbodwin2025greedy}]
    \label{lem:fast-ell-cover}
    We can deterministically compute an $\ell$-chain cover $\chains$ in $\BigO(m^{1+o(1)})$ time.
\end{lemma}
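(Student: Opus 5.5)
The plan is to obtain the $\ell$-chain cover by first computing a minimum chain cover (a minimum-size family of chains covering all of $V$, possibly overlapping as vertex sets) in almost-linear time, and then pruning and splitting it into at most $\ell$ vertex-disjoint chains. For the first step I would invoke the result of \cite{caceres_minimum_2023}: via a reduction to min-cost / max-flow on the DAG with vertex demands, solved with almost-linear-time flow algorithms, one gets a minimum path/chain cover in $m^{1+o(1)}$ time. I would make sure the reduction is deterministic, using the deterministic almost-linear max-flow algorithms. A flow decomposition then gives the chains. This step is routine given the cited work, except that the total output size must stay $m^{1+o(1)}$; for that I would keep only the induced chains, i.e. the ordered sets of vertices, not the explicit paths.

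The second step follows the \texttt{DisjointChains} procedure of \cite{kogan_beating_2022}. By Dilworth-type duality, the minimum chain cover size $k$ equals the maximum antichain size. Make the chains disjoint by assigning each vertex to a single chain, which still leaves chains in the transitive closure. If $k \le \ell$, every vertex is covered, so the condition $|V(P)\setminus V(\chains)| \le 2n/\ell$ holds trivially. Otherwise, sort the chains by size and keep the $\ell$ largest. Any path meets each discarded chain in many vertices only if that chain is large, so I would argue by a counting/greedy argument that the uncovered part of any path is at most $2n/\ell$.

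The argument I have in mind for this is the one from \cite{kogan_beating_2022}. Recompute chain covers iteratively on the remaining uncovered vertices, each time taking the chains that contain at least $n/\ell$ new vertices. Stop when a path can only contain few uncovered vertices. Such a path is itself a chain in the transitive closure of the residual graph, so a maximum residual chain has length below $2n/\ell$.

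The main obstacle is that this iteration might need many rounds, destroying the almost-linear bound. I would try a geometric scheme. In each round, compute a minimum cover of the residual vertex set and select all chains with at least $n/\ell$ vertices; there are at most $\ell$ of these overall because they are disjoint. If the longest residual chain has at least $2n/\ell$ vertices, the selected chains cover a constant fraction of what remains, so $O(\log n)$ rounds suffice and the total time is $m^{1+o(1)}$. If that fraction argument fails, I would fall back on computing a single cover weighted toward long chains via one min-cost flow, as in \cite{caceres_minimum_2023}.
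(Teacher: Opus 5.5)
There is a genuine gap. Both mechanisms you rely on start from a \emph{minimum} chain cover, and a minimum chain cover gives no control over how the vertices of a long path are spread across its chains. Take a path $p_1\to\dots\to p_M$ and an antichain $a_1,\dots,a_K$ with edges $a_i\to p_{(i-1)M/K+1}$. The width is $K$. Let the $i$-th chain be $a_i$ followed by the $i$-th block of $M/K$ consecutive path vertices. These $K$ chains form a disjoint minimum chain cover with all chains of equal size, and an exact algorithm is free to return it.

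Keeping any $\ell=K/2$ of these chains leaves $M/2$ vertices of the path uncovered. For $M\approx n$ this is far more than $2n/\ell$. Your counting argument breaks because a single path can meet many small discarded chains.

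With $K=\ell^2$, every chain has fewer than $n/\ell$ vertices. So your iterative round selects nothing, even though a residual chain with $M\approx n$ vertices exists. Hence the constant-fraction progress claim, the $O(\log n)$ round bound, and even termination of your scheme are unfounded.

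The missing idea is the one you mention only as a vague fallback, and it is really the whole proof. Compute $\ell$ disjoint chains covering the \emph{maximum} number of vertices, not a minimum cover of all vertices. This is a single min-cost flow of value $\ell$ on the vertex-split graph, which has $\BigO(m)$ arcs:
\begin{itemize}
\item for each vertex, a capacity-one arc of cost $-1$ in parallel with an uncapacitated cost-$0$ pass-through arc;
\item source and sink arcs at every vertex.
\end{itemize}
The covered vertices of each flow path form a chain, and the capacity-one arcs make these chains vertex-disjoint.

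The $2n/\ell$ guarantee then comes from concavity. Let $M_k$ be the optimal coverage with $k$ chains. Since min-cost flow cost is convex in the flow value and $M_0=0$, we get $M_{\ell+1}-M_\ell\le M_\ell/\ell\le n/\ell$. On the other hand, a path with $x$ uncovered vertices lets you route one more unit along it, using the cost-$(-1)$ arcs at its uncovered vertices, so $M_{\ell+1}\ge M_\ell+x$. Hence $x\le n/\ell$.

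This is essentially the flow formulation behind \texttt{DisjointChains} of \cite{kogan_beating_2022}. The paper does not reprove the lemma; it cites \cite{balsbodwin2025greedy}, which gets deterministic $m^{1+o(1)}$ time by combining this with a deterministic almost-linear flow solver and the almost-linear chain-extraction machinery of \cite{caceres_minimum_2023}. Your Dilworth/minimum-chain-cover step is not needed at all.
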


This directly gives a deterministic algorithm to compute \(\Oish(n)\)-size shortcut sets for \(D = \BigO(\sqrt{n})\) in the same time.

The fastest known algorithm for $\sqrt n$-shortcut sets is \texttt{FasterShortcutSqrtN} from \cite{kogan_beating_2022}, running in \(\Oish(m + n^{3/2})\) time.
On the flip side, the best known near-linear time algorithm for shortcut sets is by Cao, Fineman, and Russel \cite{CFR20}.
It achieves a diameter of $\BigO(n^{1/2+o(1)})$.

This results ties the quality of folklore shortcut sets with a near-linear time deterministic algorithm.

\begin{corollary}
    There is a deterministic $\BigO(m^{1+o(1)})$ time algorithm to compute a $\BigO(n \log^* n)$-size $\BigO(\sqrt{n})$-shortcut set for any DAG $G=(V,E)$.
\end{corollary}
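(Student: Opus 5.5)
We take $\ell = \sqrt{n}$ and compute an $\ell$-chain cover $\chains$ deterministically with \Cref{lem:fast-ell-cover}, in $\BigO(m^{1+o(1)})$ time. We then output $H = \supershortcut(\chains)$. By \Cref{lem:supershortcut} this set has size $\BigO(n\log^* n)$, since the chains are vertex-disjoint, and it can be computed in $\BigO(|H|) = \BigO(n\log^* n)$ additional time. So the total running time is $\BigO(m^{1+o(1)})$, because we may assume $n = \BigO(m)$ after discarding isolated vertices. Every edge of $H$ lies in the transitive closure of $G$: each edge of \Cref{lem:path-shortcut} joins two vertices of a single chain in chain order, and chains are paths in the transitive closure.

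It remains to bound the hopbound. Fix a reachable pair $(s,t)$ and a path $P$ from $s$ to $t$ in $G$. By \Cref{def:chain-cover}, $P$ contains at most $2n/\ell = 2\sqrt n$ vertices outside $V(\chains)$. The remaining vertices of $P$ lie on chains, and we must bound how often $P$ switches between chains.

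The key observation concerns a chain $C \in \chains$. Let $x$ be the first vertex of $P$ on $C$ and $y$ the last. Then $x$ precedes or equals $y$ on $C$: otherwise $y$ would reach $x$ along $C$ while $x$ reaches $y$ along $P$, giving a cycle in the DAG. So we replace the whole subpath $P[x,y]$ by a $\le 4$-hop path in $G\cup H$, which exists by \Cref{lem:supershortcut}. We do this greedily from the start of $P$: at the current vertex $v$, if $v$ is uncovered we take one edge of $P$; if $v$ lies on chain $C$, we jump to the last occurrence of $C$ on $P$ in at most $4$ hops and then take one edge of $P$.

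Each chain is visited at most once, since afterwards $P$ never returns to it. Each uncovered vertex is likewise visited at most once. The hop count is therefore at most $5|\chains| + 2n/\ell + 1 \le 5\sqrt n + 2\sqrt n + 1 = \BigO(\sqrt n)$.

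The main thing to check is that this greedy walk is well defined: every jump lands at a vertex of $P$ that comes strictly later, and the total counting is correct. Both follow from the acyclicity argument above and the vertex-disjointness of chains. Everything else is a direct invocation of \Cref{lem:fast-ell-cover} and \Cref{lem:supershortcut}.
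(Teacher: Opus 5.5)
Your proof is correct and takes the same approach as the paper. You compute a $\sqrt n$-chain cover with \Cref{lem:fast-ell-cover}, super-shortcut every chain via \Cref{lem:supershortcut}, and charge at most $2\sqrt n$ hops to uncovered vertices plus $\BigO(1)$ hops to each of the at most $\sqrt n$ chains. The only difference is presentational: the paper takes a shortest path in $G \cup H$ and uses an exchange argument to show it meets each chain in at most five vertices, while you build the short path greedily from a path of $G$, which has the small advantage of applying the chain-cover property (stated for paths of $G$) directly.
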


\begin{proof}
    The algorithm uses \cref{lem:fast-ell-cover} to compute a $\sqrt n$-chain cover $\chains$ for $G$.
    It then uses \cref{lem:supershortcut} to super-shortcut all chains in $\chains$.
    It directly follows that the algorithm has the desired running time and produces a set of edges of the desired size.

    To see that it is indeed yields a $\BigO(\sqrt n)$-shortcut set, take any pair of reachable nodes $(u,v) \in \tc(G)$.
    Let $\pi$ be a shortest path from $u$ to $v$ in the shortcutted graph.
    By the property of the $\sqrt{n}$-chain cover, there are at most $2\sqrt n$ nodes not on any chain (i.e., from $V \setminus \cup \chains$) on $\pi$.
    Additionally, $\pi$ contains nodes from at most $\sqrt{n}$ chains (since $\setsize{\chains}=\sqrt{n}$) and at most five nodes from each chain.
    If there were more than five nodes from one chain, take the first of these last nodes and substitute their path of length four (which exists by the definition of the super-shortcuts) for the relevant subpath in $\pi$.
    This is a shorter path, showing that the assumption that there are at most five nodes from any chain holds.
    Combining these facts, we see that $\pi$ contains at most $7\sqrt{n}$ nodes.
\end{proof}

\section{Deterministic \(n^{1/3}\)-Shortcut Sets using Greedy Set Cover}
\label{sec:setcover-greedy}
In this section we present a deterministic version of the shortcut set algorithm of~\cite{KP22a}. 
In the process we obtain a number of results we feel are insightful in their own right, revealing more about the connection between $\ell$-chain covers and shortcut sets.

Concretely, to achieve determinism we need to avoid both the sampling of the chains and of the nodes in Kogan and Parter's algorithm in~\cite{KP22a}.
Our solution to this is based on these two observations:
\begin{enumerate}
    \item Instead of sampling from $\chains$ to pick $\chains'$, we can use an implicit set cover instance to efficiently pick a good enough set of chains.
    \item We do not need to sample vertices at all. It is enough to pick the vertices on the set of chains, that is $V' \coloneqq \bigcup \chains'$. 
\end{enumerate}

This yields the main theorem of this section.

\thmSetcover*

In particular, for the state-of-the-art setting $\beta=n^{1/3}$, this yields the following result.
\begin{corollary}
    There is a deterministic
    $\Oish(mn^{2/3})$-time
    algorithm computing $\Oish(n)$-size $\BigO(n^{1/3})$-shortcut sets. 
\end{corollary}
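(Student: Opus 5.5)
The corollary is the special case $\beta = n^{1/3}$ of \Cref{thm:det-shortcut}. So I would only instantiate that theorem and check that the parameter restriction and the bounds specialize as claimed.

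\textbf{Step 1: admissibility.} \Cref{thm:det-shortcut} applies to any $\beta = \BigO(n^{1/3})$. Choosing $\beta := \lceil n^{1/3}\rceil$ is trivially admissible.

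\textbf{Step 2: running time.} With this choice, the deterministic running time $\Oish(m \cdot n/\beta)$ becomes $\Oish(m \cdot n / n^{1/3}) = \Oish(mn^{2/3})$. The rounding in the ceiling affects only constants.

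\textbf{Step 3: size and hopbound.} The size bound $\Oish(n^2/\beta^3)$ becomes $\Oish(n^2/n) = \Oish(n)$. The hopbound guarantee $\BigO(\beta)$ becomes $\BigO(n^{1/3})$.

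No step is a genuine obstacle, since all the work sits inside \Cref{thm:det-shortcut}. The only thing to be careful about is that the polylogarithmic factors hidden by $\Oish$ in the theorem stay polylogarithmic after substitution. They do, because $\beta$ is polynomial in $n$.
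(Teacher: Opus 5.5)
Your proposal is correct and matches the paper's approach: the paper obtains this corollary directly by setting $\beta = n^{1/3}$ in \Cref{thm:det-shortcut}. Your substitution of the running time, size, and hopbound bounds, including the remark about rounding $\beta$ up, is exactly what is needed.
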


Before we give our concrete construction, we observe the following general structure helps us gain more intuition about why the $\ell$-chain cover is useful for calculating shortcut sets.

\begin{definition}[Path Complex]
\label{def:path-complex}
    Given a DAG $G=(V,E)$, define a \emph{path complex} as a set of paths $\mathcal P$ in $G$ s.t. for any path $\pi \in \mathcal P$, any subpath $\pi'$ of $\pi$ is also in $\mathcal P$.
\end{definition}

Essentially, this allows us to focus on specific paths in the graph, while still being able to obtain shortcut sets guarantees.
This is important to make our set cover efficient as considering all paths or an arbitrarily fixed shortest path between any two nodes would not feasibly lead to a fast algorithm.
The following notion allows us to capture which connections a path complex is supposed to represent.

\begin{definition}
\label{def:path-complex-for-s}
    Given a DAG $G=(V,E)$, define a \emph{path complex for $S \subseteq V \times V$} as a path complex $\mathcal P$ in $G$ s.t. for any pair $(u,v) \in S \cap \tc(G)$, there is a unique path from $u$ to $v$ in $\mathcal P$.
\end{definition}

Intuitively, $S$ encodes the part of the reachability structure of $G$ that should be captured by $\mathcal P$.
This concept is useful because we will effectively show that shortcutting all the paths in an appropriately chosen path complex is enough to shortcut the entire graph.
In particular, we will derive this path complex from an $\ell$-chain cover, and we pick $S$ to be the out-reachability of the chains.
Intuitively, this shows us that the outgoing connectivity from this $\ell$-chain cover is what we need to consider for finding good shortcut sets.
Or in other words: $\ell$-chain covers help computing shortcut sets with smaller diameter because they allow us the focus on a smaller set of connections to shortcut.
Additionally, these connections exhibit strong structure.

For the technical part of this section, fix an $n$-vertex DAG $G=(V,E)$, let $\beta = \BigO(n^{1/3})$ be the target diameter, and fix an $\ell$-chain cover $\chains$ for $G$ with $\ell \coloneqq n/\beta$.
Assuming a DAG is w.l.o.g. by the following folklore observation: in a general digraph, add a bidirected star on every strongly connected component and contract the components, which costs $\BigO(n)$ edges and $\BigO(m)$ time.
Chains are paths in $\tc(G)$, but $\supershortcut(\chains)$ contains the edges between consecutive chain nodes anyway. So we add these at most $n$ edges to $G$; this preserves $\tc(G)$ and the covering property, and makes every chain a path in $G$.
We do not repeat these in the statements of definitions, lemmas, or theorems to improve readability.

Our efficient construction of shortcut sets relies on the fact that the structure outlined in the following lemma compactly represents a path complex for the reachability structure outgoing from $\chains$ (i.e., the union of all paths in the trees form a path complex for $S \coloneqq \cup \chains \times V$).

Since we later prove that shortcutting all paths in this path complex is good enough to obtain a shortcut set in the original graph, this specific path complex from the next lemma is useful as it gives us computationally quick access to the encoded paths and their reachability information. 

\begin{lemma}
    \label{lem:path-consistent-dfs}
    In $\BigO(\ell m)$ time,
    we can compute a set of subtrees $\{T^C\}_{C \in \chains}$ of $G$, each rooted at the first node of the respective chain such that for all $C, C' \in \chains$, $u \in C$, $v \in V$ with $(u,v) \in \tc(G)$, we have
    \begin{enumerate}
        \item there is a path from $u$ to $v$ in $T^C$,
        \item if the path $\pi$ from $u$ to $v$ in $T^C$ uses a node $w$ from $C'$, then $\pi[w,v]$ is a path in $T^{C'}$, and
        \item on the path $u$ to $v$ in $T^C$, any nodes from $C'$ appear consecutively.
    \end{enumerate}
\end{lemma}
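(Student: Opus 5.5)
The plan is to build each $T^C$ as a \emph{reachability tree} from the chain $C$, using one graph search per chain, and to enforce consistency between different chains by processing the chains in a suitable order and \emph{grafting} previously built trees.

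\textbf{Single chain.} Fix $C=(c_1,\dots,c_k)$. Recall that we have added the edges between consecutive chain nodes to $G$. I would run a search with a single visited set, which is shared over the whole chain. The search starts from $c_k$, then from $c_{k-1}$, and so on down to $c_1$. In each phase only unvisited vertices are explored, and the edge $(c_{i},c_{i+1})$ is used to make $c_{i+1}$ a child of $c_i$. Since $\mathrm{Reach}(c_{i})\supseteq \mathrm{Reach}(c_{i+1})$, induction on $i$ shows that the subtree of $c_i$ is exactly $\mathrm{Reach}(c_i)$. Property~(1) follows, and the root is $c_1$. This costs $\BigO(m)$ per chain, hence $\BigO(\ell m)$ overall.

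\textbf{Consistency.} Properties~(2) and~(3) say that as soon as the tree path in $T^C$ enters a chain $C'$ at a node $w$, it continues exactly as in $T^{C'}$, and it never returns to $C'$ after leaving it. I would enforce this by a grafting rule. When the search for $C$ first enters a vertex $w$ of another chain $C'$, it stops ordinary exploration below $w$ and copies the subtree of $w$ in $T^{C'}$, restricted to vertices not yet visited. Property~(3) would come from two facts inside $T^{C'}$: its path from $w$ runs along $C'$ as long as it stays on $C'$, and every vertex reachable from a later node of $C'$ already lies under that node. The second fact is exactly the "subtree $=$ reach set" invariant proved above, so the path cannot leave $C'$ and come back. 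To make $T^{C'}$ available when needed, I would fix an order of the chain nodes and compute the trees with respect to it. The natural candidate is reverse topological order of the node $w$ at which the graft happens. Everything below $w$ is then topologically later, so the needed subtrees already exist.

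\textbf{Main obstacle.} The delicate step is the interaction between pruning visited vertices and property~(2). Suppose the $T^{C'}$-path from $w$ to $v$ passes through a vertex $x$ that was already visited in $T^C$ via some other branch. Then the copied path is broken, and $v$ might hang elsewhere in $T^C$, violating~(2). My first attempt would be to organize the search so that this cannot happen, using three ingredients. First, explore out-edges in a fixed global priority that is identical in every tree, namely chain successor first and then a fixed edge order. Second, graft whole subtrees atomically. Third, prove by induction in reverse topological order the invariant that, in every tree, the subtree of each chain node $w$ is either entirely fresh or entirely already present as a copy of the same $T^{C'}$-subtree. Because the visit order is globally fixed, a vertex $x$ visited earlier would then already carry the same descendants as in $T^{C'}$, so paths agree.

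\textbf{Running time.} For the bound I would not copy subtrees explicitly. Instead each grafted region is represented by a pointer into $T^{C'}$, so each tree costs $\BigO(m)$ to build and the total stays $\BigO(\ell m)$. Verifying this invariant is where I expect most of the work to lie.
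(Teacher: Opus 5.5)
\paragraph{There is a genuine gap: property (2) is never proved.} Your argument for property~(1) is fine. If each phase of your search is a DFS with a fixed out-neighbor priority, your phase order $c_k,c_{k-1},\dots,c_1$ visits vertices exactly as a single DFS from $c_1$ that always takes the chain successor first. The problem is property~(2), which you flag as the main obstacle and then leave as an unproved invariant.

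\begin{itemize}
\item The grafting rule is not well-defined. Copying the subtree of $w$ in $T^{C'}$ ``restricted to vertices not yet visited'' need not give a tree: an already visited intermediate vertex $x$ cuts off its $T^{C'}$-descendants, and nothing in your argument rules this out.
\item The dependencies between whole trees are cyclic. In a DAG you can have $c_1\leadsto c'_1$ and $c'_2\leadsto c_2$, so $T^C$ needs part of $T^{C'}$ and vice versa. Ordering by the graft node $w$ therefore means building all trees interleaved node by node, not one $\BigO(m)$ search per chain.
\item Your proof of~(3) reasons inside $T^{C'}$, so it already presupposes~(2).
\end{itemize}

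\paragraph{The missing idea, which makes grafting unnecessary.} Give every vertex a root-independent priority on its out-neighbors: chain successor first, then a fixed global numbering. This is your ``first ingredient''. Run a plain DFS from the first node of each chain.
\begin{itemize}
\item On a DAG, such a DFS first reaches every vertex $v$ along the lexicographically smallest root-to-$v$ path. Argue by induction on visit order: suppose a prefix of that path ended at a vertex $x$ that was visited earlier along a different, hence lexicographically smaller, path. Splicing that path with the remainder would give a smaller path to $v$, since any concatenation of paths in a DAG is a path.
\item By the same splicing argument, suffixes of lexicographically smallest paths are again lexicographically smallest.
\item Hence $\pi[w,v]$ and the $w$--$v$ path in $T^{C'}$ (which exists by~(1)) are both the lexicographically smallest $w$--$v$ path. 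This gives~(2) with $\BigO(m)$ time per tree.
\end{itemize}

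Property~(3) then follows directly from the chain-successor rule. Suppose the tree path contains $v'_i$ and later $v'_j$ of $C'$. Then $v'_{i+1}$ was unvisited when $v'_i$ was entered. Otherwise $v'_{i+1}$ would be finished, and so would everything reachable from it, including $v'_j$. So the DFS goes to $v'_{i+1}$ next, and inductively the tree follows $C'$ from $v'_i$ to $v'_j$.
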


This structure stores a path complex (which could contain $\BigO(n^2)$ paths of $\BigO(n)$ length) as a set of $\ell$ subtrees of $G$, that is within $\BigO(\ell n)$ space.
Additionally, performing searches on these trees allows us to answer queries about the outgoing reachability of the chains efficiently.
Note that property (3) also holds for $C=C'$, that is $C$ is always a subpath of $T^C$.

\begin{proof}
    We compute a set of trees with the claimed properties by running a modified DFS from the first node in any chain.
    For this modified DFS, we make two adjustments to the order in which nodes are explored: (A) if there are multiple possible nodes to explore and at least one of them is on the same chain as the current node, explore the one of these nodes that is first in this chain, and (B) otherwise, break ties between possible next choices consistently (e.g., according to a fixed numbering of all nodes).

    Rule (A) ensures that properties (1) and (3) hold and both rules together imply property (2).
    For (1) observe that (A) ensures that during the computation of $C$ any node $u' \in C$ is processed before any node that is reachable from $u'$ in $G$.
    By the definition of a DFS this means that all these nodes will be in the subtree of $T^C$ rooted at $u'$.

    For (3), assume that there are $u \in C$ and $v \in V$ such that the path $\pi$ from $u$ to $v$ in $T^C$ uses nodes from some chain $C'$ non-consecutively. Let $v'_i \in C'$ be the first node from $C'$ that appears on this path and let $v'_j$ be the last.
    When $v'_i$ was processed by the DFS, this has made the node $v'_{i+1}$ available, which by rule (A) must have been processed next.
    Continue this argument inductively to see that $v_i' \rightsquigarrow v_j'$ must be a path in $T^C$.
    Since paths in trees are unique, this means that $\pi$ must contain $v_i' \rightarrow v'_j$ as a subpath, contradicting our assumption that there are non-consecutive nodes from $C'$ in $T^C$.

    For (2), note that rules (A) and (B) fix, for every node, a priority order on its out-neighbors that does not depend on the root of the DFS.
    On a DAG, such a DFS reaches every node first along the lexicographically smallest path w.r.t.\ these priorities (a path along which it is reached earlier would be lexicographically smaller).
    Moreover, prefixes and suffixes of lexicographically smallest paths are again lexicographically smallest (otherwise, exchanging them would yield a smaller path, as $G$ is acyclic).
    Hence, $\pi[w,v]$ and the $w$-$v$ path in $T^{C'}$, which exists by (1), are both the lexicographically smallest $w$-$v$ path, so they coincide. Intuitively, all trees use the same canonical path between any two nodes.
\end{proof}

For the rest of this section, fix the set $\{T^C\}_{C \in \chains}$ described in this lemma.
For $v' \in C$, let $T^C_{v'}$ be the subtree of $T^C$ rooted at $v'$ without the subtree of the successor of $v'$ on $C$. By (1), these are the nodes for which $v'$ is the last node of $C$ that can reach them. In particular, $v'$ is the only node of $C$ in $T^C_{v'}$. If $v'$ is the $i$-th node of $C$, we also write $T^C_i$.

\begin{corollary}
    The union of all paths in the trees $\{T^C\}_{C \in \chains}$ form a path complex for $S \coloneqq \cup \chains \times V$.
\end{corollary}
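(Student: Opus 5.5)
The plan is to read the corollary off \Cref{lem:path-consistent-dfs}, checking the two requirements of \Cref{def:path-complex-for-s}: existence and uniqueness of a $u$-$v$ path for each pair in $S \cap \tc(G)$, and closure of the family under taking subpaths (\Cref{def:path-complex}). Define $\mathcal P$ as the set of all directed paths contained in some tree $T^C$, $C \in \chains$. Every tree path is a path in $G$, since each $T^C$ is a subtree of $G$.

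For closure under subpaths, take $\pi \in \mathcal P$ lying in $T^C$. Any subpath $\pi'$ of $\pi$ is again a root-directed path inside the same tree $T^C$, so $\pi' \in \mathcal P$. This step is immediate.

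For existence, fix $(u,v) \in S \cap \tc(G)$. Then $u \in C$ for some chain $C$, and these chains are vertex-disjoint, so $C$ is unique. Property (1) of \Cref{lem:path-consistent-dfs} gives a $u$-$v$ path in $T^C$, and hence in $\mathcal P$.

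Uniqueness is the only real step. Inside a single tree the $u$-$v$ path is unique, so the concern is that some other tree $T^{C''}$ with $C'' \neq C$ might also contain a $u$-$v$ path $\pi''$. I would handle this with the argument from the proof of property (2). The DFS tie-breaking rules (A) and (B) fix a priority order on each node's out-neighbours, and this order does not depend on the root. With these priorities, every root-to-node path in any of the trees is the lexicographically smallest path to that node, and subpaths of lexicographically smallest paths are again lexicographically smallest. So $\pi''$, being a subpath of a root-to-$v$ path in $T^{C''}$, is the lexicographically smallest $u$-$v$ path in $G$. The same holds for the path in $T^C$, so $\pi'' = \pi$.

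If one prefers to use only the stated properties, property (2) with $w = u \in C$ gives the same conclusion: a $u$-$v$ path in $T^{C''}$ passing through $u$ has its suffix from $u$ in $T^C$, so it coincides with the unique tree path there. The main obstacle is making this uniqueness across trees rigorous, which relies on the root-independence of the canonical path rather than on properties (1)–(3) in isolation.
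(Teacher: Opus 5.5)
Your proposal is correct and matches the paper, which gives no separate proof of this corollary and treats it as immediate from \Cref{lem:path-consistent-dfs}: closure under subpaths is trivial, property (1) gives existence, and the root-independent lexicographically smallest paths behind property (2) give uniqueness across different trees, exactly as in your main argument. Your alternative argument using only the stated properties also works, but you need to apply (2) to the root-to-$v$ path of $T^{C''}$ (which starts in $C''$ and passes through $u$) rather than to $\pi''$ itself; done that way, property (2) by itself already yields uniqueness, contrary to your closing remark.
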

Path complexes only serve as intuition; the proofs below work directly with the trees $\{T^C\}_{C\in\chains}$.

To achieve the first goal outlined at the beginning of this section (say, for $\beta = n^{1/3}$), we need to pick a subset of the chains \(\chains' \subseteq \chains\) such that (1) \(\setsize{\chains'} = \Oish(n^{1/3})\) and (2) for any two reachable nodes which are far enough apart, there is a path where one of the sampled chains appears in the first \(\BigO(n^{1/3})\) nodes and one of the sampled chains appears in the last \(\BigO(n^{1/3})\) nodes.
Regarding (1), note that due to the probability in the random sampling, in the Kogan and Parter algorithm \(\chains'\) has size \(n^{1/3} \log n\).
Regarding (2), note that the above holds with high probability when each chain is sampled with probability \(\log n / n^{1/3}\) by applying a simple Chernoff bound.
We will instead construe the problem as a set cover instance as follows.
We fix a path complex that encodes all out-connections from $\chains$ (using $\{T^C\}_{C \in \chains}$).
The elements to be covered are the prefixes and suffixes of (normalized, see below) length $\beta$ of these paths, and each chain covers all of them that it intersects.
It is easy to see that this instance is potentially too large to be constructed explicitly.
Luckily, the greedy algorithm for set cover can be executed iteratively and implicitly.

We first prove the following lemma which allows us to ensure that no chain is too long.
This will be important in running time calculations later.

\begin{lemma}
    \label{lem:split-up-chains}
    Given an $\ell$-chain cover $\chains$, there is a set of at most $2\ell$ vertex-disjoint chains $\chains^*$ with $\cup\chains^* = \cup\chains$ s.t. every chain in $\chains^*$ has length at most $n/\ell$.
    Also, $\chains^*$ can be computed from $\chains$ in linear time.
\end{lemma}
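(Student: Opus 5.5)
The plan is to cut every chain of $\chains$ into consecutive pieces of length at most $n/\ell$, and then count the pieces. For each chain $C \in \chains$ with $|C|$ vertices, we split it into $\ceil{|C| \ell / n}$ consecutive subchains. Each subchain has at most $n/\ell$ vertices, for example by taking blocks of exactly $\lfloor n/\ell \rfloor$ vertices with a shorter last block; if $n/\ell$ is not an integer, blocks of size $\lfloor n/\ell\rfloor$ still satisfy the length bound. A contiguous subsequence of a chain is again a chain, since chains are paths in $\tc(G)$ and any subsequence of such a path is also a path in $\tc(G)$. The pieces inherit vertex-disjointness from $\chains$, and their union is exactly $\cup\chains$. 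Let $\chains^*$ be the collection of all pieces.

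For the count, each chain $C$ produces at most $|C|\ell/n + 1$ pieces. Summing over all chains gives
\[
|\chains^*| \le \frac{\ell}{n}\sum_{C\in\chains}|C| + |\chains| \le \frac{\ell}{n}\cdot n + \ell = 2\ell,
\]
using that the chains are vertex-disjoint, so $\sum_C |C| \le n$, and that $|\chains| \le \ell$. The only point needing care is the rounding when $n/\ell$ is not an integer. To get the clean bound above, I would cut with blocks of size $b = \lceil n/\ell\rceil$ if that is $\le n/\ell$, and otherwise argue directly: with block size $\lfloor n/\ell \rfloor \ge (n/\ell)/2$, a chain yields at most $|C|/\lfloor n/\ell\rfloor + 1$ pieces. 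This could double the first term. If that happens, I would instead note that only chains with $|C| > n/\ell$ are split at all, so that each cut "consumes" at least $n/\ell$ vertices, and hence the total number of cuts is at most $n/(n/\ell) = \ell$. Concretely, the full blocks each contain at least $\lfloor n/\ell\rfloor$ vertices, but only cuts strictly inside chains add pieces. Counting extra pieces as the number of full-length blocks that are followed by more vertices gives at most $\sum_C \lfloor |C| / \lfloor n/\ell\rfloor \rfloor$, which I expect to be bounded by $\ell$ once we assume $\ell \le n$ and handle the integrality slightly loosely. This rounding bookkeeping is the only mild obstacle, and in the worst case it changes the constant $2$ by a lower-order amount that the later uses tolerate.

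The covering property of $\chains^*$ is not required by the statement, but it is preserved anyway because the covered vertex set is unchanged. For the running time, a single scan over each chain with a counter that starts a new piece every $\lfloor n/\ell \rfloor$ vertices takes $\BigO(\sum_C |C|) = \BigO(n)$ time, which is linear.
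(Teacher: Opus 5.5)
\textbf{Same approach as the paper; only the non-integer rounding step is unresolved.} You cut each chain into consecutive segments of length at most $n/\ell$. By vertex-disjointness ($\sum_C |C|\le n$) there are at most $\ell$ full segments, plus at most one leftover segment per original chain. When $n/\ell$ is an integer your count $|C|\ell/n+1$ per chain is exactly the paper's argument, and the paper silently assumes this integrality.

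\textbf{What goes wrong in your rounding paragraph.} It does not close the non-integer case, and parts of it are false.
\begin{itemize}
\item Your opening claim that $\lceil |C|\ell/n\rceil$ pieces of at most $n/\ell$ vertices always suffice fails for $n/\ell=5/2$ and $|C|=5$: two pieces of at most $2$ vertices hold only $4$ vertices.
\item The bound you ``expect'' is false, and so is the ``each cut consumes $n/\ell$ vertices'' reasoning, since with blocks of size $b<n/\ell$ a cut consumes only $b$ vertices. Take $n=2\ell-1$, so $b=\lfloor n/\ell\rfloor=1$, and a single chain through all $n$ vertices. Then $\sum_C\lfloor |C|/b\rfloor=2\ell-1$, and there are $2\ell-2>\ell$ cuts.
\end{itemize}
Bounding the base chains and the extra cuts separately by $\ell$ each therefore breaks.

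\textbf{The fix needs no loosening of the constant.} Bound the two terms jointly. Many cuts per chain can only happen when $b$ is small, and then the leftover term carries a factor $(b-1)/b$ that compensates. Concretely, with $b=\lfloor n/\ell\rfloor\ge 1$ and blocks of exactly $b$ vertices (shorter last block), use $\lceil x/b\rceil\le (x+b-1)/b$ and $n<(b+1)\ell$:
\[
|\chains^*| = \sum_{C\in\chains}\left\lceil \frac{|C|}{b}\right\rceil \le \sum_{C\in\chains}\frac{|C|+b-1}{b} \le \frac{n}{b}+\frac{(b-1)\ell}{b} < \frac{(b+1)\ell+(b-1)\ell}{b} = 2\ell .
\]
This gives the stated $2\ell$ exactly. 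With that replacement your proof is complete, including the linear-time single scan.
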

\begin{proof}
    Simply take $\chains$ and split every chain with length more than $n/\ell$ into segments (along the topological order of the chain) of length at most $n/\ell$ to obtain $\chains^*$.
    Since the chains are vertex-disjoint, there can be at most $\ell$ chains of length $n/\ell$ in $\chains^*$.
    Splitting up the chains can result in at most one chain in $\chains^*$ with length smaller than $n/\ell$ for every chain in $\chains$. 
    Thus, there are at most $\ell$ of those.
    It is easy to verify that \(\chains^*\) retains the covering property as $\cup \chains = \cup \chains^*$.
\end{proof}

To efficiently compute the set cover instance, we need to make few technical assumptions on which kind of long paths we wish to cover and how we measure their length. 

\begin{definition}
   For any pair $(v,w)$ with $v \in \cup \chains$, define their \emph{normalized distance $d'(v,w)$} as the number of different chains on the path from $v$ to $w$ in $T^C$, where $C$ is the chain of $v$.
   Define the \emph{normalized length} of a path analogously.
\end{definition}
Intuitively, after super-shortcutting, the normalized length of a path determines its hop length up to a constant factor and the at most $2\beta$ nodes outside $\cup\chains$.

From this we can deduce the definition for the set of paths we want to hit with our set cover.

\begin{definition}
    Let $C \in \chains$. A path $\pi$ in $T^C$ is a relevant subpath if $\pi$ has normalized length $\beta$, there is exactly one node from $C$ on $\pi$, namely its first node, the last node from $\pi$ is on a chain from $\chains$ and is the only node from that chain on $\pi$.
\end{definition}

Before we show how to efficiently compute a small set of chains covering all relevant subpaths, let us prove that it in fact yields a shortcut set.
The core of this argument is similar to the analysis of the original Kogan and Parter analysis, but requires some care to respect the path complex given by $\{T^C\}_{C \in \chains}$.
See \Cref{fig:setcover-gives-shortcuts} for an illustration of the resulting short path.

\begin{lemma}
    \label{lem:setcover-gives-shortcuts}
    If $\chains' \subseteq \chains$ is a set s.t. every relevant subpath contains at least one node from a chain in $\chains'$, then $\supershortcut(\chains) \cup \hat H$ for $\hat H \coloneqq \{e(v,C) \mid v \in \cup \chains', C \in \chains'\}$ is a $\BigO(\beta)$-shortcut set of size $\BigO(n \log^* n + \setsize{\chains'}^2 \cdot \beta)$,
    where $e(v,C)$ is an edge from $v$ to the first node reachable from $v$ in $C$.
\end{lemma}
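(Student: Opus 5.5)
The size bound is immediate: $\supershortcut(\chains)$ has $\BigO(n\log^* n)$ edges by \Cref{lem:supershortcut}. The edges $\hat H$ number at most $|\cup\chains'|\cdot|\chains'|$. After applying \Cref{lem:split-up-chains}, every chain has length at most $n/\ell=\beta$, so $|\cup\chains'|\le \beta|\chains'|$ and $|\hat H| \le \beta|\chains'|^2$. Each $e(v,C)$ lies in $\tc(G)$ by definition, so the added edges are legal shortcuts. The real work is the hopbound.

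Fix $(u,v)\in\tc(G)$. I will build a $u$-$v$ path with $\BigO(\beta)$ hops in $G\cup H$. First I reduce to paths in the trees. If a shortest path from $u$ to $v$ in $G$ avoids $\cup\chains$ entirely, then by the $\ell$-chain cover property it has at most $2n/\ell = 2\beta$ nodes, and we are done. Otherwise, let $x$ be the first chain node on some $u$-$v$ path. Its prefix $u\leadsto x$ contains at most $2\beta+1$ nodes (one chain node plus at most $2\beta$ uncovered ones). Now take the canonical path $\pi$ from $x$ to $v$ in $T^{C_x}$, which exists by property (1) of \Cref{lem:path-consistent-dfs}.

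By property (3), the nodes of each chain appear consecutively on $\pi$. Hence $\pi$ decomposes into at most $d'(x,v)$ chain blocks, separated by at most $2\beta$ uncovered nodes. Using the super-shortcuts, each block can be traversed in at most $4$ hops, so $\pi$ can be realized with $\BigO(d'(x,v)+\beta)$ hops. If $d'(x,v)\le 2\beta+2$, we are done.

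Otherwise $\pi$ is long, and I extract two relevant subpaths from it.

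\emph{Prefix subpath.} Let $w_1$ be the last node on $\pi$ of the first chain $C_1$. Walk forward from $w_1$ until the normalized length reaches $\beta$, and end at the first node of the $\beta$-th chain encountered. By property (2), this segment is a subpath of $T^{C_1}$. It contains exactly one node of $C_1$ and exactly one node of its last chain, so it is a relevant subpath.

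\emph{Suffix subpath.} Symmetrically, near the end of $\pi$ choose a segment ending at the first node of the last-but-few chain. It starts at the last node of the chain lying $\beta$ chains earlier, and by property (2) it is a path in that chain's tree.

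By hypothesis, the prefix subpath contains a node $a\in C_a\in\chains'$ within the first $\BigO(\beta)$ chains of $\pi$. Likewise, the suffix subpath contains a node $b\in C_b\in\chains'$ within the last $\BigO(\beta)$ chains. The two subpaths are disjoint when $d'(x,v)>2\beta+2$, so $a$ precedes $b$ on $\pi$.

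The rerouted path is then:
\begin{enumerate}
\item follow $\pi$ from $x$ to $a$ using super-shortcuts, in $\BigO(\beta)$ hops;
\item take the edge $e(a,C_b)\in\hat H$ (valid since $a\in\cup\chains'$ and $C_b\in\chains'$), which lands on the first node $b'$ of $C_b$ reachable from $a$; since $b$ is reachable, $b'$ precedes or equals $b$ on $C_b$;
\item super-shortcut along $C_b$ from $b'$ to $b$ in at most $4$ hops;
\item follow $\pi$ from $b$ to $v$ in $\BigO(\beta)$ hops.
\end{enumerate}
The total is $\BigO(\beta)$ hops.

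I expect the main obstacle to be the verification that the extracted segments are genuinely relevant subpaths. This requires a path of the right tree that starts with exactly one node of its root chain and ends with a unique node of its last chain. The choice to start at the \emph{last} node of a chain block and end at the \emph{first} node of a block is designed to guarantee this. Property (2) supplies tree membership, and property (3) supplies the consecutiveness that makes the normalized length count well-defined. Everything else is routine hop counting.
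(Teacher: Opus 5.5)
Your proof is correct and follows the paper's argument. Both take a canonical tree path starting at the first chain node and use properties (2) and (3) of \Cref{lem:path-consistent-dfs} to cut out a $\beta$-chain prefix and suffix that are relevant subpaths, then reroute through a single $\hat H$ edge plus super-shortcuts. The differences are cosmetic. You run the tree path all the way to the destination and start the prefix at the last node of the first chain block, whereas the paper stops at the last chain node $w$ (moved to the first node of $C_w$ on the path) and first walks along $C_v$ to $v'$, the last node of $C_v$ that reaches $w$.
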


\begin{figure}
    \centering
    \includegraphics[page=3]{shortcut-drawings.pdf}
    \caption{The short path constructed in the proof of \Cref{lem:setcover-gives-shortcuts}. Every red edge represents a segment of the path that has length at most $\beta$ (or even constant for the parts that follow a chain). $C_\pre$ and $C_\suf$ are included in the set cover $\chains'$. The segments between $v'$ and $v_\pre$ as well as between $v_\suf$ and $w$ are the prefix and suffix of $\pi'$ covered by $\chains'$.}
    \label{fig:setcover-gives-shortcuts}
\end{figure}

\begin{proof}
    The size of the shortcut set is given by the size of the supershortcuts (in total $\BigO(n \log^* n)$) and the edges in $\hat H$.
    Note that by \Cref{lem:split-up-chains} we can assume that each chain has at most $\BigO(n/\ell) = \BigO(\beta)$ vertices and thus there are $\BigO(\setsize{\chains'}^2 \cdot n / \ell) = \BigO(\setsize{\chains'}^2 \cdot \beta)$ edges in $\hat H$.

    Let $(u,x) \in \tc(G)$.
    Note that we do not require $u \in \cup\chains$.
    We show that there is a path from $u$ to $x$ of length $\BigO(\beta)$ using the claimed shortcut set.

    Let $\pi$ be any path from $u$ to $x$.
    By definition of the $\ell$-chain cover $\chains$, there are at most $2\beta$ nodes on $\pi$ that are not in $\cup \chains$.
    Let $v$ and $w$ be the first and last nodes from $\cup \chains$ on $\pi$ respectively (if there are none, $\pi$ has at most $2\beta$ nodes).
    Note that we have $d_\pi(u,v) \le 2\beta$ and $d_\pi(w,x) \le 2\beta$ and $(v,w) \in \tc(G)$.
    Let $C_v$ be the chain of $v$ and $C_w$ be the chain of $w$.
    Let $v'$ be the last node on $C_v$ that can reach $w$. 
    Then there is a path $\pi' \colon v' \rightsquigarrow w$ in $T^{C_v}_{v'}$.
    If the $\pi'$ has normalized length less than $2\beta$, we are done.
    Observe that $\pi'$ can contain at most $2\beta$ nodes not from $\cup\chains$ and contains less than $2\beta$ chains.
    Using the edges from $\supershortcut(\chains)$, we can replace any number of consecutive nodes from the same chain with a path of length $4$.
    Thus, there is a path from $v'$ to $w$ with length $\BigO(\beta)$ in the shortcutted graph and thus from $u$ to $x$, there is a path with length at most $\BigO(\beta)$.

    Now consider the case where $\pi'$ has normalized length at least $2\beta$, so its prefix and suffix of normalized length $\beta$ are disjoint.
    Since $\chains'$ hits all relevant subpaths, it hits the prefix and suffix of $\pi'$ with normalized lengths $\beta$ (both are relevant subpaths: the suffix is a path in the tree of its first chain by \Cref{lem:path-consistent-dfs}~(2), and we may assume that $w$ is the only node of $C_w$ on $\pi'$, as by (3) we can move $w$ to the first such node and then reach the original $w$ within $4$ hops along $C_w$), assume it does so by chains $C_\pre, C_\suf \in \chains'$. 
    Then there are $v_\pre \in C_\pre$ and $w_\suf \in C_\suf$ s.t.
    \begin{enumerate}
        \item $(v_\pre, w_\suf)$ is an edge in $\hat H$,
        \item $v_\pre$ is on $\pi'$,
        \item $w_\suf$ is at most as late on $C_\suf$ as some node $y \in C_\suf$ on the suffix of $\pi'$.
    \end{enumerate} 

    From this, we can conclude that 
    \begin{align*}
    \pi'' \coloneqq
        u \overset{V \setminus \chains}{\rightsquigarrow}
        v \overset{C_v}{\rightsquigarrow}
        v' \overset{\beta\text{-prefix in } T^{C_v}_{v'}}{\longleadsto{1.9}}
        v_\pre \overset{\hat H}{\to}
        w_\suf \overset{C_\suf}{\rightsquigarrow} y \overset{\beta\text{-suffix in } T^{C_v}_{v'}}{\longleadsto{1.9}}
        w \overset{V \setminus \chains}{\rightsquigarrow}
        x
    \end{align*}
    is a path in $G \cup \supershortcut(\chains) \cup \hat H$ where every segment has length $\BigO(\beta)$ and so the total length is $\BigO(\beta)$.
    See \Cref{fig:setcover-gives-shortcuts} for an illustration of this path.
\end{proof}

The following lemma is the key to efficiently computing how many prefixes each chain is able to cover.

\begin{lemma}
    \label{lem:leafs-to-prefixes}
    There is a one-to-one correspondence between relevant subpaths starting on a chain $C \in \chains$ and leaves of normalized depth $\beta$ in the trees $\{T^C_i\}_{i \in \setsize{C}}$ if they are truncated to normalized depth $\beta$ (i.e., every root path is cut directly after its first node on the $\beta$-th chain).
\end{lemma}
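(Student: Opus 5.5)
We construct the bijection explicitly: send each relevant subpath to its endpoint, and send each depth-$\beta$ leaf to its root path. Fix $C \in \chains$ and recall that $T^C_i$ is the subtree of $T^C$ rooted at the $i$-th node $v_i$ of $C$, with the subtree of $v_{i+1}$ removed, so $v_i$ is its only node from $C$. Write $\widehat T^C_i$ for $T^C_i$ truncated at normalized depth $\beta$. For a node $x$ of $T^C_i$, its normalized depth is the number of distinct chains on the root path $v_i \rightsquigarrow x$. The leaves of normalized depth $\beta$ in $\widehat T^C_i$ are exactly the nodes $x$ whose root path meets $\beta$ distinct chains and for which $x$ is the first node of the $\beta$-th chain on that path.

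\textbf{Step 1: relevant subpath to leaf.} Let $\pi$ be a relevant subpath starting on $C$ and let $v_i$ be its first node. Since $\pi$ is a path in $T^C$ containing no other node of $C$, and by \Cref{lem:path-consistent-dfs}~(3) nodes of $C$ appear consecutively on root paths, $\pi$ leaves $C$ at $v_i$ and does not enter the subtree of $v_{i+1}$. Hence $\pi$ is exactly the root path of its endpoint $x$ in $T^C_i$. The definition of a relevant subpath says two things:
\begin{itemize}
\item $\pi$ has normalized length $\beta$;
\item $x$ lies on a chain $C''$ and is the only node of $C''$ on $\pi$.
\end{itemize}
Together these mean $C''$ is the $\beta$-th distinct chain on $\pi$ and $x$ is its first node there. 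So $x$ is a leaf of normalized depth $\beta$ of $\widehat T^C_i$.

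\textbf{Step 2: leaf to relevant subpath.} Conversely, let $x$ be a leaf of normalized depth $\beta$ of $\widehat T^C_i$, and let $\pi$ be its root path in $T^C_i \subseteq T^C$. Then $\pi$ starts at $v_i$, its only node from $C$. Its normalized length is $\beta$. Because truncation cuts directly after the first node on the $\beta$-th chain, $x$ lies on a chain and is the only node of that chain on $\pi$. Thus $\pi$ is a relevant subpath starting on $C$.

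\textbf{Step 3: the two maps are mutually inverse.} Paths in trees are unique, so composing the two maps in either order gives the identity. Different indices $i$ give different start nodes, and the $T^C_i$ are disjoint, so no subpath or leaf is counted twice across the trees $\widehat T^C_i$.

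The main point needing care is Step 1: we must check that a relevant subpath, which is only assumed to lie in $T^C$, actually lies in $T^C_i$ rather than passing through the subtree of $v_{i+1}$. This is exactly where the single-node condition on $C$ combines with the consecutiveness property \Cref{lem:path-consistent-dfs}~(3). A smaller point is that ``leaf of normalized depth $\beta$'' must be read in the truncated tree. A node whose root path reaches $\beta$ chains but lies deeper than the first node of the $\beta$-th chain is not a leaf there, and this matches the uniqueness of the last chain's node in the definition of a relevant subpath.
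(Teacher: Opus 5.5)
Your proof is correct and follows the paper's approach. The paper maps a relevant subpath to its last node $v_\beta$, observes that this node is a leaf of $T^C_{v_1}$, and notes that no other relevant subpath contains it; you additionally spell out the converse direction (every depth-$\beta$ leaf yields a relevant subpath) and why the subpath stays inside $T^C_i$. For that last point, the direct reason is that $v_{i+1}$ is a descendant of $v_i$ in $T^C$, so its subtree can only be entered through $v_{i+1}\in C$, which the single-$C$-node condition excludes; property (3) matters only insofar as it makes $C$ a subpath of $T^C$.
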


\begin{proof}
    This follows directly from the definition.
    Let $\pi$ be a relevant subpath and $C_1, \dots, C_\beta$ the chains on it.
    Write $v_1$ for the only node from $C_1$ on $\pi$ and $v_\beta$ for the only node from $C_\beta$ on $\pi$.
    Then it is easy to see (1) $v_\beta$ is a leaf in $T^C_{v_1}$ and (2) that it is the only relevant subpath containing this leaf, yielding the bijection.
\end{proof}

Before we proceed, we give a formal definition of the potential underlying the set cover instance.

\begin{definition}
    Given a subset of the chains $\chains' \subseteq \chains$, a chain $C \in \chains$ and a node $v \in \cup \chains$ from a chain $C_v$, write $\coverCount^C(v)$ for the number of relevant subpaths that are (1) not covered by $\chains'$, (2) originate in $C$, and (3) on which $v$ is the first node from $C_v$ that appears.
    Similarly, define $\coverCount(C)$ as the number of such uncovered relevant subpaths on which a node from $C$ appears.
    Call this the \emph{covering potential of nodes and chains}, respectively.
\end{definition}
Observe that in this counting all the prefixes originating on a specific chain $C$ are counted within $\coverCount^C(v_1)$ where $v_1$ is their first node.

While \Cref{lem:leafs-to-prefixes} will help us count how many prefixes can be covered using a specific node, the following lemma allows to deduce how useful an entire chain is from that information.

\begin{lemma}
    \label{lem:cover-count:nodes-to-chains}
    For any chain $C \in \chains$, we have that $\coverCount(C) = \sum_{v \in C} \sum_{C' \in \chains} \coverCount^{C'}(v)$.
\end{lemma}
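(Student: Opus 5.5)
The identity is a double count: we partition the set counted by $\coverCount(C)$ according to the node of $C$ that appears first on the path and the chain where the path originates. Concretely, $\coverCount(C)$ counts the uncovered relevant subpaths $\pi$ (with respect to the current $\chains'$) that contain at least one node from $C$. For each such $\pi$ we define its label as the pair $(C', v)$. Here $C'$ is the chain $\pi$ originates in, i.e., the chain of its first node, so $\pi$ lies in $T^{C'}$. The node $v$ is the first node of $C$ on $\pi$. This label is well defined, since every relevant subpath starts on a unique chain and, because it meets $C$, has a unique first node from $C$.

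Next we check that, for fixed $C' \in \chains$ and $v \in C$, the paths labelled $(C',v)$ are exactly the ones counted by $\coverCount^{C'}(v)$. By definition, $\coverCount^{C'}(v)$ counts the relevant subpaths that are uncovered by $\chains'$, originate in $C'$, and on which $v$ is the first node from $C_v = C$. These are precisely the three conditions defining the label $(C',v)$. Conversely, any path counted by $\coverCount^{C'}(v)$ with $v \in C$ contains a node of $C$, so it is counted by $\coverCount(C)$. The label map is therefore a bijection between the set counted by $\coverCount(C)$ and the disjoint union over $v \in C$ and $C' \in \chains$ of the sets counted by $\coverCount^{C'}(v)$. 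Summing the cardinalities gives the claimed formula.

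The only step needing care is to make sure no path is counted twice, for instance a relevant subpath meeting $C$ in several nodes. Taking the first node of $C$ in the label handles this directly. One could also cite \Cref{lem:path-consistent-dfs}~(3), which makes the nodes of $C$ consecutive on $\pi$, but that is not even needed. The case $C' = C$ is included consistently: then $v$ is the first node of $\pi$, matching the remark after the definition of the covering potential. A small point is whether the definition of $\coverCount^{C}(v)$ requires the path to contain $v$; I read condition (3) as asserting that $v$ is on the path and is the first node from $C_v$ there. Under this reading the argument is a routine partition, and I do not expect a real obstacle.
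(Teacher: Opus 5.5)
Your proof is correct and uses the same double-counting argument as the paper: each uncovered relevant subpath meeting $C$ is assigned to the pair (origin chain $C'$, first node $v$ of $C$ on it), and so it is counted in exactly one term $\coverCount^{C'}(v)$ on the right-hand side. You also state explicitly the converse inclusion, namely that every path counted in some $\coverCount^{C'}(v)$ with $v \in C$ meets $C$; the paper leaves this step implicit.
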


\begin{proof}
    Take an arbitrary uncovered relevant subpath hit by $C$.
    Assume it originates on $C'$ and $v$ is the first node on $C$ that appears in the relevant subpath.
    Then this relevant subpath is counted precisely in $\coverCount^{C'}(v)$ and in no other element of the right-hand side of the equation. 
\end{proof}

To execute the greedy algorithm for set cover efficiently, we need two important subroutines: (1) compute $\varphi^C(v)$ for all $C \in \chains$ and $v \in \cup \chains$, and (2) given a chain $C$ to include into the partial set cover $\chains'$, update these values.

The following lemma shows that the first of these tasks can be achieved in $\BigO(\ell m)$ time.
See \Cref{fig:set-cover-counting} for an illustration of the resulting data structure.

\begin{lemma}
    \label{lem:initial-potential}
    For the empty set of picked chains $\chains' = \emptyset$, we can compute all $\varphi^C(v)$ and $\varphi(C)$ values
    in $\BigO(\ell m)$ time.
\end{lemma}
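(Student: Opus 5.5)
We compute everything by one pass over each tree $T^C$ from \Cref{lem:path-consistent-dfs}. There are $|\chains| = \BigO(\ell)$ trees (after \Cref{lem:split-up-chains}), each a subtree of $G$ with at most $n$ nodes, and they can be built in $\BigO(\ell m)$ time. It therefore suffices to compute, for a fixed chain $C$, all values $\coverCount^C(v)$ for $v \in \cup\chains$ in $\BigO(n)$ time. Afterwards we sum these values via \Cref{lem:cover-count:nodes-to-chains} to obtain $\coverCount(C')$ for every $C'$, which costs $\BigO(\ell n)$ total.

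Fix $C$. We first annotate each node $x$ of $T^C$ with its normalized depth $d'(x)$, meaning the number of distinct chains on the root path within its tree $T^C_i$. We also record whether $x$ is the first node of its chain $C_x$ on that root path. By property (3) of \Cref{lem:path-consistent-dfs}, the nodes of any chain appear consecutively on a root path. Hence a single top-down traversal suffices: $x$ starts a new chain exactly when its parent lies on a different chain (or $x$ is the root $v_i$ of $T^C_i$). In that case $d'(x) = d'(\mathrm{parent}) + 1$; otherwise $d'(x) = d'(\mathrm{parent})$. To separate the subtrees $T^C_i$, we cut each edge from a node of $C$ to its successor on $C$.

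We then truncate each $T^C_i$ at normalized depth $\beta$: we keep the nodes with $d'(x) \le \beta$ and stop below any node that is the first node of the $\beta$-th chain. By \Cref{lem:leafs-to-prefixes}, the relevant subpaths originating in $C$ correspond bijectively to the leaves of normalized depth exactly $\beta$ in these truncated trees. We mark those leaves with weight $1$ and all other nodes with weight $0$.

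Now $\coverCount^C(v)$ counts the relevant subpaths originating in $C$ on which $v$ is the first node of $C_v$. This equals the number of marked leaves in the subtree of $v$, provided $v$ is a first-of-chain node on its root path; otherwise $\coverCount^C(v)=0$. Every relevant subpath through such a $v$ continues down to a unique leaf below $v$, and conversely, again using the consecutiveness from (3). Since $\chains'=\emptyset$, every relevant subpath is uncovered. A bottom-up subtree-sum pass therefore yields all $\coverCount^C(v)$ in $\BigO(n)$ time.

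Summed over all $\BigO(\ell)$ chains this is $\BigO(\ell n) \subseteq \BigO(\ell m)$, and with the aggregation via \Cref{lem:cover-count:nodes-to-chains} the whole procedure runs in $\BigO(\ell m)$ time.

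The step needing the most care is justifying that "first node of $C_v$ on a relevant subpath" coincides with the tree-local test "parent lies on a different chain". A node $v$ can appear in several $T^C_i$, but the trees $T^C_i$ are disjoint subtrees of $T^C$, so each $v$ appears at most once in $T^C$. Property (3) guarantees that chain segments on root paths are contiguous, so the first-of-chain test really is local. One subtlety must also be handled: the start of a relevant subpath is its unique $C$-node, and that root counts toward the normalized depth as the first chain.
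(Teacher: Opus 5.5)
Your proposal is correct and matches the paper's proof essentially step for step. Both build the trees $T^C_i$ via \Cref{lem:path-consistent-dfs} and truncate them at normalized depth $\beta$. Both use the consecutiveness property (3) to reduce the first-of-chain test to a parent check, and count leaves of normalized depth $\beta$ in each subtree with a bottom-up DP justified by \Cref{lem:leafs-to-prefixes}. Both then aggregate to the chain potentials via \Cref{lem:cover-count:nodes-to-chains} in $\BigO(\ell n)$ time. You add some implementation detail the paper leaves implicit, namely the top-down normalized-depth computation and cutting the chain edges of $C$ to separate the $T^C_i$.
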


\begin{figure}
    \centering
    \includegraphics[page=2]{shortcut-drawings.pdf}
    \caption{Result of computing $\coverCount^C(v)$ for the nodes in $T_3^C$. Observe that multiple nodes from the chain $C'$ appear in $T_3^C$ and only those occurrences that are not a descendant of another node from $C'$ in $T^C_3$ are assigned non-zero values.}
    \label{fig:set-cover-counting}
\end{figure}

\begin{proof}
    Using \Cref{lem:path-consistent-dfs}, we can compute the trees $\{T_i^C\}_{C \in \chains, i \in \setsize{c}}$
    in $\BigO(\ell m)$ time. 
    These we can truncate to normalized depth $\beta$ in time linear in their size, so $\BigO(\ell n)$ time.

    Additionally, for each such tree, say $T^C_i$ with root node $v_i$, we compute for each node $v \in T^C_i$ whether there is a node from the chain of $v$ on its paths from the root $v_i \to v$.
    Since all nodes from a chain must appear consecutively in the tree, this can be simply done by checking if $v$'s predecessor is on the same chain.
    In total, this again requires $\BigO(\ell n)$ time.

    Now, given a tree $T^C_i$, for each node $v$ that does not have a node from its chain as a predecessor in $T^C_i$, we can calculate $\coverCount^C(v)$ as the number of leafs of normalized depth $\beta$ in the subtree rooted at $v$.
    Using the standard dynamic program on trees, this can be done in $O(\setsize{T^C_i})$ time per tree, so again $\BigO(\ell n)$ time in total.

    This calculation is correct by \Cref{lem:leafs-to-prefixes}.
    See \Cref{fig:set-cover-counting} for an illustration of the result.
    
    As by \Cref{lem:split-up-chains}, we may assume that every chain has at most $\BigO(n/\ell)$ nodes and thus using \Cref{lem:cover-count:nodes-to-chains} we can compute the values for $\coverCount(C)$, that is the covering potential of the chains in $\BigO(\ell \cdot n)$ time (each node is on exactly one chain and we have to sum up its covering potential wrt. $\ell$ chains).
\end{proof}

\begin{lemma}
    \label{lem:update-potential}
    We can process a series of $\opCount$ operations of the following type in $\BigO(\ell n + \opCount \beta n)$ time:
    Given a set of picked chains $\chains'$, and corresponding values $\{\coverCount^C(v)\}_{C \in \chains, v \in \cup \chains}$ and $\{\coverCount(C)\}_{C \in \chains}$, and a chain $C \in \chains \setminus \chains'$, update the $\coverCount^C(v)$ and $\coverCount(C)$ values wrt. $\chains' \cup \{C\}$.
\end{lemma}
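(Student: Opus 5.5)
We process each operation by walking the truncated trees once per chain-appearance of the newly picked chain $C$. When $C$ is added to $\chains'$, the relevant subpaths that become covered are exactly those containing a node of $C$. By \Cref{lem:leafs-to-prefixes}, each such subpath corresponds to a leaf of normalized depth $\beta$ in some truncated tree $T^{C''}_i$. Here $C''$ is the origin chain, and the leaf has a node of $C$ among its ancestors.

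By property (3) of \Cref{lem:path-consistent-dfs}, the nodes of $C$ on any root path in $T^{C''}_i$ are consecutive. So the covered leaves are exactly the depth-$\beta$ leaves in the subtrees rooted at those occurrences $x$ of $C$-nodes whose tree parent is not on $C$. These are precisely the occurrences that carry non-zero $\coverCount^{C''}(x)$.

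The update therefore proceeds as follows. For every tree $T^{C''}_i$ and every such "top" occurrence $x$ of $C$, we traverse the subtree of $x$ in the truncated tree.
\begin{enumerate}
\item Every still-uncovered depth-$\beta$ leaf in that subtree is marked covered.
\item For each newly covered leaf, we walk back up to the root of $T^{C''}_i$. At every top occurrence $y$ of some chain $C_y$ on this root path, we decrement $\coverCount^{C''}(y)$ and $\coverCount(C_y)$ by one.
\end{enumerate}
Because leaves are marked, no subpath is decremented twice across operations. The decrements for $\coverCount(C_y)$ are exactly right by \Cref{lem:cover-count:nodes-to-chains}, since the leaf is counted once per chain it meets, at that chain's top occurrence.

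To keep this cheap, we first store for each node in each truncated tree a pointer to its parent and a flag for whether it is a top occurrence. For each chain $C$, we also keep a list of its occurrences across all trees. All of this is computed in $\BigO(\ell n)$ total, as in \Cref{lem:initial-potential}. This precomputation gives the $\ell n$ term.

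\textbf{Main obstacle: the per-operation cost $\BigO(\beta n)$.} Walking from a leaf to its root visits only $\BigO(\beta)$ distinct chains' top occurrences, but it may visit many more raw nodes, because chain segments repeat along the path. The fix is to compress each chain segment on a root path into a single step. We precompute, for each node, a pointer to the top occurrence of its chain segment, and from there jump to the parent of that top occurrence. This uses $\BigO(\beta)$ jumps per leaf, plus $\BigO(\beta)$ for the nodes off $\cup\chains$; the latter is only needed if they are present in the truncated trees, and they can also be skipped by the same compression.

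The remaining issue is bounding the total work per operation. I would argue that the number of leaves marked in a single operation, plus the size of the subtrees traversed, is $\BigO(n)$. The key point is that in each tree $T^{C''}$, the subtrees hanging from distinct top occurrences of $C$ are disjoint, and each tree has at most $n$ nodes. If we sum over trees naively, however, we only get $\ell n$, which is too much per operation.

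So the amortized accounting must instead charge the traversal to newly covered leaves, each costing $\BigO(\beta)$ to walk up. This requires the total number of leaves covered over all operations to be bounded by something like $\ell n/\beta$, or that we skip already-covered subtrees. To do the skipping, we would maintain a "fully covered" flag on subtrees, so that the traversal prunes them.

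If this amortization does not yield exactly $\BigO(\beta n)$ per operation, I would fall back on bounding the number of top occurrences of $C$ across all trees. Since $C$ has $\BigO(\beta)$ nodes by \Cref{lem:split-up-chains}, and each node $v$ appears at most once in each tree $T^{C''}$, this gives at most $\BigO(\beta\ell)=\BigO(n)$ occurrences. Each uncovered leaf below them then costs $\BigO(\beta)$, giving $\BigO(\beta n)$ per operation provided the leaves below each occurrence are charged appropriately.
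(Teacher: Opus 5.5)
Your proposal has a genuine gap, in the cost accounting, exactly where you hedge. You walk up once per newly covered leaf and pay $\BigO(\beta)$ for each. Bounding the number of top occurrences of $C$ by $\BigO(\beta\ell)=\BigO(n)$ does not bound the number of leaves beneath them. In each tree $T^{C''}$ the subtrees below the top occurrences of $C$ are disjoint, but the only available bound on their leaves is $n$ per tree. So a single operation may cover up to $\ell n$ leaves, and your fallback (``provided the leaves below each occurrence are charged appropriately'') has nothing to charge them to.

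Amortizing over the whole series, where each of the at most $\ell n$ leaves is covered once, only yields $\BigO(\ell n+\beta\ell n)=\BigO(n^2)$. That does not imply $\BigO(\ell n+\opCount\beta n)$ when $\opCount=o(\ell)$ and $\beta=\omega(1)$. For instance, in the application $\opCount=\BigO(n\log n/\beta^2)$ and the target is $\BigO(n^2\log n/\beta)$. Moreover, merely marking leaves keeps covered subtrees in place, so later operations may re-traverse them. The ``fully covered'' flags you mention are never specified or costed.

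The missing idea is to aggregate the decrements rather than propagate them leaf by leaf. For a top occurrence $u$, the value $\varphi^{C''}(u)$ already equals the number of still-uncovered depth-$\beta$ leaves below $u$. So for every $u\in C$ and every $C''$, do the following:
\begin{enumerate}
\item Locate $u$ in $\BigO(1)$ time via a precomputed table giving the tree $T^{C''}_v$ that contains $u$.
\item Walk once from $u$ to the root, subtracting $\varphi^{C''}(u)$ from every ancestor's value and from the corresponding chain potential.
\item Zero all values in the subtree of $u$ and delete that subtree.
\end{enumerate}
With the segment compression you already describe (it matches the paper's contraction of same-chain parent--child pairs), the upward walk costs $\BigO(\beta)$. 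This gives $\BigO(\beta)\cdot\BigO(\beta\ell)=\BigO(\beta n)$ per operation. The zeroing and deletion are paid once per tree node over the entire run, i.e.\ $\BigO(\ell n)$ in total. Deleting covered subtrees also maintains the invariant that the leaves remaining below any node are exactly the uncovered ones. This makes the subtracted amounts correct and rules out re-traversal.
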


\begin{proof}
    \emph{Preprocessing.}
    By a linear scan through all trees $\{T^C\}_{C \in \chains}$, precompute for each node $v \in V$ and each chain $C \in \chains$ the node $v' \in C$ s.t. $v \in T^C_{v'}$ (i.e., the last node on $C$ that can reach $v$).
    This can be stored in a simple lookup table $\treefinder$ of size $\BigO(\ell n)$.
    Additionally, for any $C$ and $C' \ne C$, if there are nodes $u,v \in C'$ s.t. $u$ is the direct parent of $v$ in $T^C$, then contract them into a single node.
    Note that in this case we had $\coverCount^C(v) = 0$ by definition.
    This changes nothing about the connectivity structure of $T^C$ and only removes some redundant nodes.
    This compaction can be implemented via a DFS through the trees and thus in total time $\BigO(\ell n)$.

    \emph{Update operation.}
    Relying on \Cref{lem:leafs-to-prefixes}, we will find all nodes $v$ whose covering potential $\varphi^{C'}(v)$ for some $C'$ needs to be updated by finding those whose covering potential depends on leaves in some $T^{C'}$ that are newly covered by $C$.
    We assume the trees are truncated to normalized depth $\beta$.
    \begin{algorithm}[tbhp]
        \DontPrintSemicolon
        \SetKwFunction{FMain}{UpdateCoveringPotential}
        \SetKwProg{Fn}{fun}{:}{}
    For every $u \in C$, $C' \in \chains$:

    1. Find the unique $v$ s.t. $u \in T^{C'}_v$ using $\treefinder$ (skip if there is none).

    2. For all ancestors $v'$ of $u$, we  reduce their value by $\coverCount^{C'}(u)$.

    3. For all descendants $v'$ of $u$ in $T^{C'}_v$ (including $u$), set their $\coverCount^{C'}(v')$ value to 0.

    4. Delete $u$ and its subtree from $T^{C'}$, updating $\treefinder$ in the process.

    \caption{Updating the covering potentials $\coverCount^{C'}(v)$ via the truncated trees $T^{C'}$.
        \label{alg:update-covering-potential}}
    \end{algorithm}

    To perform the updates, use \Cref{alg:update-covering-potential}.
    Note that through $u$ the chain $C$ covers all relevant subpaths corresponding to the leafs in the truncated subtree of $u$ in $T^{C'}_v$ (for $C'=C$, these are all relevant subpaths starting at $u$).
    These are counted precisely for the non-zero values $\coverCount^{C'}(v')$ of $v'$ in $T^{C'}_v$ such that $v'$ is an ancestor or a descendant of $u$ in $T^{C'}_v$.
    Also note that every time we update a covering potential for a node, we can easily update the covering potential for its chain in the process at no asymptotic increase to the running time.

    Let us now count the amortized cost of $\opCount$ of these operations.
    The loop at the beginning of the algorithm runs $\BigO(\beta \cdot \ell) = \BigO(n)$ times.
    Step 1 can be implemented in constant time since $\treefinder$ is precomputed.
    For Step 2, observe that since the trees are truncated to have normalized depth at most $\beta$, the path from $u$ to the root contains nodes from at most $\beta$ different chains and by the compaction above, only one node per chain.
    Additionally, it contains at most $\BigO(\beta)$ nodes in $V \setminus \cup \chains$.
    So $u$ has at most $\BigO(\beta)$ ancestors for which need to reduce their covering potential.
    Therefore, in total the Steps 1 to 2 take at most $\BigO(\beta)$ time per iteration of the loop, i.e., $\BigO(\beta n)$ per operation.

    For Step 3 and 4, we amortize over all operations.
    Since any node in any of the trees $T^C$ is deleted at most once, we incur total time at most $\BigO(\ell n)$.
\end{proof}

Now we have all pieces in hand to prove our main theorem.

\thmSetcover*

\begin{proof}
    Compute an $\ell$-chain cover by \Cref{lem:fast-ell-cover} and split it by \Cref{lem:split-up-chains}. Use \Cref{lem:path-consistent-dfs} to compute the trees $\{T^C\}_{C \in \chains}$, then use \Cref{lem:initial-potential} to compute the initial covering potential of all nodes and chains.
    Adding the running times from these lemmas shows that this setup time is within the claimed bounds. 

    Now we are ready to solve our set cover instance: Iterate over all chains to find the one with the maximal potential.
    This takes $\BigO(\ell)$ time.
    Then use the algorithm from \Cref{lem:update-potential} to update the covering potential of all nodes and chains.
    Repeat this process until no chain has positive covering potential anymore.
    Let the resulting set of chains be $\chains'$.

    We show that $\setsize{\chains'} = \BigO(n/\beta^2 \log n)$.
    Consider the following fractional solution to the set cover instance for covering all relevant subpaths with chains from $\chains$.
    Simply assign each chain the fractional value $\beta^{-1}$.
    Since each relevant subpath contains exactly $\beta$ different chains, the sum of the chains covering it, is $1$.
    Thus, this fractional assignment solves the set cover instance with total weight $\BigO(\ell \cdot \beta^{-1}) = \BigO(n/\beta^2)$.
    It is well known that the greedy solution to a set cover instance is within a {$\log$}-factor from the fractional optimum. 
    Since there are at most $n^2$ relevant subpaths, this $\log$-factor is at most $2\log n$.
    Thus, we have $\setsize{\chains'} = \BigO(n/\beta^2 \log n)$. 

    Plugging this into \Cref{lem:update-potential}, shows that we can process all updates in $\BigO(\ell m + n/\beta^2 \log n \cdot \beta \cdot n) = \BigO(mn/\beta + n^2 \log n / \beta)$ time.

    Plugging this into \Cref{lem:setcover-gives-shortcuts}, give size and quality of the shortcut set.
    In particular, the size is $\BigO(n \log^* n + \setsize{\chains'}^2 \cdot \beta) = \BigO(n \log^* n + (n/\beta^2 \log n)^2 \cdot \beta) = \BigO(n \log^* n + n^2/\beta^3 \cdot \log^2 n) \subseteq \Oish(n^2 / \beta^3)$, giving the desired bound.
\end{proof}

Note that the running time is in $\BigO(nm/\beta)$ instead of $\Oish(nm/\beta)$ if $\beta \in \Omega(\log n)$ and $m \in \Omega(n \log n)$, which is the interesting setting.
This is the same asymptotic bound (including the same $\log$ factors) as in \cite{KP22a}.
We believe that the techniques in this section can also be used to derandomize the algorithm from \cite{KP22a} for the regime where $\beta = \Omega(n^{1/3})$, matching the known bounds again, but we do not prove this here.

\section{Fast Sourcewise Approximate Distance Preservers}

\label{sec:fast-sourcewise-preservers}

The goal of this section is to present our fast algorithms for sourcewise approximate distance preservers.

\FastSourcewisePreservers*

We will prove this theorem by combining the fast hopset constructions from
\Cref{thm:hs-small-diam,thm:hs-large-diam} with the hopset-to-preserver
reduction of Kogan and Parter~\cite{KP22}. The only point in that reduction that is too
slow for our setting is the call to hop-bounded all-pairs shortest paths. We replace that APSP call in two different ways.

First, for internal hopset edges, we design a data structure that, when queried with a hopset edge $(u,v)$, outputs a low-hop $(1+\varepsilon)$-approximate shortest $(u,v)$-path in the underlying graph, which we call a \emph{witness path}. This data structure is summarized in \Cref{lem:witness-paths-for-hopset-edges}.

\begin{restatable}[Fast witness paths for hopset edges]{lemma}{WitnessPathsForHopsetEdges}
\label{lem:witness-paths-for-hopset-edges}
Let $G = (V, E, w)$ be an $n$-vertex DAG with $m$ edges, nonnegative edge weights, and polynomial aspect ratio. Let $H_i, H_{i-1} \subseteq \tc(G)$  be $(1+\delta)$-approximate hopsets for $G$ produced by
\Cref{thm:hs-small-diam} or \Cref{thm:hs-large-diam}, with hopbounds at most $\beta_i$ and $\beta_{i-1}$, respectively.
Then in $\widetilde O(n/\beta_i \cdot m)$ preprocessing time, we can equip hopset $H_i$ with a data structure guaranteeing that every  edge $e=(u,v)\in H_i$ can be expanded in $\widetilde O(|W(e)|)$ time into a path 
$W(e) \subseteq G \cup H_{i-1}$
of at most $\beta_{i-1}$ edges and weighted length at most $(1+\delta)d_G(u,v)$. 
Moreover, for any queried subset $F \subseteq H_i$, the total time to report all witness paths
$\{W(e) : e \in F\}$ is
$\widetilde O\!\left(\sum_{e\in F} |W(e)|\right).$
\end{restatable}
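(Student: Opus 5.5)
The plan is to build, during (or right after) the construction of $H_i$, a small family of rooted trees in $G \cup H_{i-1}$. Each tree stores, for every hopset edge it is responsible for, a low-hop approximate witness path as a root-to-node path. A query $e=(u,v)$ then walks parent pointers from the node representing $v$ up to the root $u$. Reversing the walk gives $W(e)$ in $O(|W(e)|)$ time, and a batch query $F$ costs $\Oish(\sum_{e\in F}|W(e)|)$.

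First I fix the structure of $H_i$ as produced by the Kogan--Parter style construction of \Cref{sec:kp-hopset}, run with the lightest chain cover of \Cref{lem:fast_chain_alg}. At each distance scale $L$ its edges are of two kinds. The first kind are super-shortcut edges along a lightest chain $Q$ (\Cref{lem:supershortcut}). Their witness is the subpath of $Q$ between the endpoints. It has weight exactly $d_G(u,v)$ because $Q$ is lightest (\ref{cond:d0}), and it can be reported by walking the chain array. Chains have at most $\cO(\beta_i)$ relevant hops after truncation. If that exceeds $\beta_{i-1}$, I instead apply the second-kind argument below to the chain subpath itself. The second kind are edges $(v', x)$ from a sampled vertex $v'\in V'$ to the first reachable node $x$ of a sampled chain. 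There are only $\Oish(n/\beta_i)$ distinct sources $v'$.

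For each such source $v'$ I compute a shortest path tree in $G$ with \Cref{lem:dag-sssp} in $O(m)$ time, for $\Oish(n/\beta_i\cdot m)$ total. Then I convert it into a tree of low hop-depth in $G\cup H_{i-1}$ by the rehanging idea from the overview. I process the SPT top-down. For a node $y$ whose tree path $P$ from $v'$ has too many hops, I find the first vertex $a$ of $V'_{i-1}$ on $P$ and the last chain $Q$ of $\chains'_{i-1}$ on $P$, at some node $b$. I reroute $P$ as the prefix of $P$ up to $a$, then the $H_{i-1}$ edge $(a, Q)$, then along $Q$ via supershortcuts to $b$, then the suffix of $P$ from $b$ to $y$. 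The routing lemma \Cref{lem:route-from-sampled-suffix} guarantees that this uses $\cO(\beta_{i-1})$ hops. It also guarantees weight at most $(1+\delta)d_G(v',y)$ at the correct scale.

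The tree structure is preserved because the choice depends only on the ancestor chain of $y$. All descendants of $b$ that share the same first sampled vertex $a$ get hung under the same rerouted prefix. So one DFS over the SPT that maintains the current nearest sampled ancestor and the current last chain performs the rehanging in $O(n)$ time per tree. The witness for the hopset edge $(v',x)$ is then the root path to $x$ in the rehanged tree of $v'$. For the distance scale, I take the tree at the scale in which $(v',x)$ was inserted, or I just use exact SPT distances, which covers all scales at once.

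The main obstacle is making the rerouting consistent across a whole tree. The hopset $H_{i-1}$ guarantees a good route only for the particular pair (first sampled vertex, last sampled chain) of each path, and "last chain on $P$" is a suffix property that changes along the tree. To handle this I would reroute lazily. Each node stores a pointer to its hopbounded parent, defined as either the SPT parent or the jump from $a$ into $Q$. I would argue via \Cref{lem:path-consistent-dfs}-style consistency, or directly from chain disjointness, that the chain segments used are subpaths of $Q$. Those segments form a path, so parent pointers are well defined and acyclic. A second concern is the weight error. Errors accumulate over only one level of rerouting, since $H_{i-1}$ edges are used as black boxes, so the factor stays $1+\delta$ and needs no compounding.
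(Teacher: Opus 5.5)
Your treatment of the sampled-vertex-to-chain edges matches the paper. The paper simply invokes \Cref{lem:rooted-path-retrieval} with $H_{i-1}$ from every $v'\in V'_i$, and that lemma is exactly your SPT rehanging. There each node is linked to a proper ancestor of the fixed tree, so acyclicity is automatic. One linked tree is built per distance scale of $H_{i-1}$, selected by $d_G(v',y)$, not by the scale of $H_i$ at which the edge was inserted.

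The genuine gap is in the chain-shortcut edges. A chain is a path in $\tc(G)$, not in $G$. Consecutive chain vertices are joined only by $\tc(G)$-arcs, which in general lie neither in $G$ nor in $H_{i-1}$; in \Cref{thm:hs-small-diam} they are themselves edges of $F_Q\subseteq H_i$. So ``walking the chain array'' is circular and does not produce a witness in $G\cup H_{i-1}$. Expanding to the underlying $G$-path from \Cref{alg:lightest-chain-cover} does not rescue it, for two reasons:
\begin{itemize}
\item The truncation there bounds the \emph{weight} of a chain by $\eta L$, not its number of vertices.
\item The $G$-path between two chain vertices may pass through many previously covered vertices.
\end{itemize}
Hence your claim of $\cO(\beta_i)$ relevant hops is unfounded. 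Since $\beta_{i-1}\ge\beta_i$, your fallback would never even trigger under your own criterion. If you instead apply the rehanging argument to chain edges, note that the sources of edges of $F_Q$ range over essentially all chain vertices. That is up to $\Theta(n)$ roots at $\Theta(m)$ each, i.e.\ $\cO(nm)$ rather than $\Oish(n/\beta_i\cdot m)$.

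The missing idea is to mirror the recursive halving that builds $F_Q$ in \Cref{lem:shortcut-one-chain}. Every edge of $F_Q$ is incident to the middle vertex of some recursive subchain. It therefore suffices to build, for each middle vertex, one out-tree and one in-tree (the latter on the reversed graph) via \Cref{lem:rooted-path-retrieval} with $H_{i-1}$. Moreover, the instance for a subchain $q_a,\dots,q_b$ can be restricted to the vertices $x$ with $q_a\leadsto x\leadsto q_b$, since every path between two vertices of that subchain stays inside this set. Because $G$ is acyclic, these sets are pairwise disjoint for disjoint subchains. So, after a one-time preprocessing of $H_{i-1}$, each recursion level costs $\Oish(m)$ in total. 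This gives $\Oish(m)$ per chain and $\Oish(n/\beta_i\cdot m)$ over the $\Oish(n/\beta_i)$ chains. Queries then follow links in the appropriate tree, giving the output-linear reporting bound you intended.
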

\begin{proof}
We defer the proof of this lemma to \Cref{app:hopset_data_structs}.
\end{proof}

Second, for each source $s \in S$ in our source-restricted distance preservers, we need to compute hop-bounded single-source shortest paths rooted at $s$. We prove this can be computed efficiently in \Cref{lem:rooted-path-retrieval}.
\begin{restatable}[Fast rooted path retrieval for hopsets]{lemma}{RootedPathRetrieval}
\label{lem:rooted-path-retrieval}
Let $H \subseteq \tc(G)$ be a $(1+\delta)$-approximate hopset of a graph $G$ produced by \Cref{thm:hs-small-diam} or \Cref{thm:hs-large-diam} with hopbound at most $\beta$. Then for every root $r \in V$, one can in $\widetilde O(m + |H|)$ time compute, for every vertex $v$ reachable from $r$, a path $\Pi_H(r,v) \subseteq G \cup H$ with at most $\beta$ edges and weighted length $w(\Pi_H(r,v)) \le (1+\delta)d_G(r,v)$.
\end{restatable}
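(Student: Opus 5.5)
The plan is to exploit the layered structure of the Kogan--Parter-style hopset from \Cref{sec:kp-hopset}, built with the chain family of \Cref{lem:fast_chain_alg}, and to avoid any hop-bounded Bellman--Ford. Hop-bounded Bellman--Ford over $G \cup H$ would cost $\beta(m+|H|)$. Instead, we work one distance scale $L$ at a time. At each scale the hopset consists of $\supershortcut(\chains)$ edges along lightest chains, together with the edges $e(v',Q)$ from sampled vertices $v' \in V'$ to sampled chains $Q \in \chains'$.

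\textbf{Shortest path tree and routing rule.} First, run the DAG DP of \Cref{lem:dag-sssp} from $r$ in $G$, in $O(m)$ time, to obtain exact distances $d_G(r,\cdot)$ and a shortest path tree $T_r$. For every $v$, the tree path $P_v$ is a lightest path. The routing argument behind \Cref{lem:route-from-sampled-suffix} says how to shorten such a path into at most $\beta$ hops. Take the first sampled vertex $u \in V'$ on $P_v$; by \Cref{lem:prefix-sampling} it lies within the first $O(\beta)$ vertices whp. Take the last sampled chain $Q$ hit by $P_v$, which lies within the last $O(\beta)$ uncovered vertices. Then reroute through the hopset edge $e(u,Q)$, follow supershortcuts along $Q$, and follow the original suffix. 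I would apply this rerouting to the whole tree at once, as suggested in the overview, by ``rehanging'' subtrees.

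\textbf{Rehanging the tree.} Process $T_r$ top-down. For each vertex $x$, maintain the first sampled vertex on its root path, $\mathrm{fs}(x)$. Whenever the DFS meets a vertex $y$ on a sampled chain $Q$ with $\mathrm{fs}(y)$ defined, compute a new parent pointer for $y$: the hop $e(\mathrm{fs}(y),Q)$ followed by the $O(1)$ supershortcut hops to $y$ along $Q$. Set the hop count of $y$ to $\mathrm{hops}(\mathrm{fs}(y)) + O(1)$, and keep the tree parent for vertices off sampled chains. Each update is $O(1)$ if the target node of $e(\mathrm{fs}(y),Q)$ can be found in constant time, which requires a lookup table keyed by (sampled vertex, chain). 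For the weight bound, $e(u,Q)$ carries weight $d_G(u,q)$, and $q$ precedes $y$ on the lightest chain $Q$, so $d_G(u,q)+w(Q[q,y]) \le d_G(u,y)$ up to the slack of the scale. This gives weight at most $(1+\delta)d_G(r,v)$. The hop count of the rehung tree is $O(\beta)$: a prefix up to $\mathrm{fs}$, $O(1)$ hops across, then at most $\beta$ uncovered vertices after the last sampled chain.

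\textbf{Distance scales.} There are $O(\log n)$ scales, and a path of weight in $[L,2L]$ should use the hopset layer for $L$. Since $d_G(r,v)$ is known exactly, I would assign each $v$ to its scale and run the rehanging above once per scale, costing $O(m+|H|)$ per scale and $\widetilde O(m+|H|)$ in total. The paths $\Pi_H(r,v)$ are then read off the parent pointers. If \Cref{thm:hs-small-diam} or \Cref{thm:hs-large-diam} use a recursive/hierarchical structure, I would apply the same rehanging recursively to the relevant subhopset.

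\textbf{Main obstacle.} The hardest step is guaranteeing that the rehung structure is still a valid tree, acyclic and with consistent prefix/suffix choices across vertices. I expect to resolve this by the DAG order: the new parent of $y$ is a vertex earlier in topological order. It is also hard to ensure that the choice of ``last sampled chain'' for $v$ coincides with a per-vertex top-down decision; that correspondence is exactly the content of \Cref{lem:route-from-sampled-suffix}.
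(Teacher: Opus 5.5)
Your framework matches the paper's: one fixed shortest path tree $T$ from $r$ per scale, ancestor links computed in a single DFS that tracks the first $V'$-vertex on the current root path, and at most one approximate jump per root path. Acyclicity, which you list as the main obstacle, is automatic because every link points to an ancestor in $T$.

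\textbf{Main gap: the hop bound.} You rehang only vertices of \emph{sampled} chains and keep tree parents everywhere else. You then count only the at most $\beta$ uncovered vertices below the jump point. But the upward walk from $v$ to the first sampled-chain vertex also passes every vertex of the \emph{unsampled} chains on $P_v$, and nothing bounds how many there are. A chain in $\mathcal C_L$ has weight at most $\eta L$ but may have $\Theta(n)$ vertices, and a shortest path can run along it.

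Relatedly, chain sampling does not put the last sampled chain ``within the last $O(\beta)$ uncovered vertices''. It only guarantees that, among $\beta$ \emph{distinct} chains on a fixed path, one is sampled w.h.p. That bounds hops only if each chain costs $O(1)$ hops. This is why \Cref{lem:route-from-sampled-suffix} follows the rerouted suffix $P'$ of \Cref{lem:reroute-through-chains}, not the original suffix.

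The missing ingredient is a second, exact kind of link, applied to \emph{every} chain $Q\in\mathcal C_L$. If $z\in Q$ and its highest ancestor $a$ on $Q$ is at least three hops above $z$ in $T$, link $z$ to $a$ via the exact two-hop path in $F_Q$; the sampled-chain jump takes precedence over this link. $T$ and these links are fixed before the chains are sampled. Hence the upward walk meets each chain in $O(1)$ nodes and at most $2\beta$ uncovered nodes, and w.h.p.\ it hits a sampled chain among the first $\beta$ distinct chains it meets. There the jump takes it to depth at most $\beta$.

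\textbf{Secondary gap: the jump edge.} The hopset has no single edge $e(u,Q)$ to the first vertex $q$ of $Q$ reachable from $u$; that is the shortcut-set edge of \Cref{lem:setcover-gives-shortcuts}. With such an edge, $d_G(u,q)$ can far exceed $d_G(u,y)$, so your inequality $d_G(u,q)+w(Q[q,y])\lesssim d_G(u,y)$ fails. You need the edge given by \Cref{lem:earliest-bucket} for $y$. It must be selected among the $\cO(\eps^{-1}\log(nW))$ bucket edges from $u$ to $Q$ according to $d_G(u,y)$, for example by binary search, not by a lookup keyed only on $(u,Q)$.

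\textbf{The portal variant.} \Cref{thm:hs-large-diam} does not use a recursive sub-hopset. Instead, $F_Q^{\delta}$ joins only portals, so both kinds of links must be attached to portals only. The chain link then goes to the highest proper ancestor that is a portal of the same chain, and the hop bound comes from the walk argument in the proof of \Cref{thm:hs-large-diam}.
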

\begin{proof}
We defer the proof of this lemma to \Cref{app:hopset_data_structs}.
\end{proof}

We are now ready to prove \Cref{thm:fast-sourcewise-preservers}.

\begin{proof}[Proof of \Cref{thm:fast-sourcewise-preservers}]
If $p=0$, then the empty subgraph is a valid output. Hence assume $p\ge 1$.
For each $s\in S$, let $T_s := \{t\in V : (s,t)\in P\}$, so $\sum_{s\in S}|T_s|=p$.
All hopbound parameters below are rounded up to integers. This changes all bounds only by constant factors. Also, $\beta_i$ denotes the parameter given to \Cref{thm:hs-small-diam,thm:hs-large-diam}, so the actual hopbound of $H_i$ and the lengths of the paths in \Cref{lem:witness-paths-for-hopset-edges,lem:rooted-path-retrieval} are $\cO(\beta_i)$.

\subparagraph{ Sourcewise reduction from a collection of hopsets~\cite{KP22}.}
Let $\eta>0$ be a positive real parameter that we will choose later.
Let $H_1,\ldots,H_L$ be a hopset hierarchy\footnote{A hopset hierarchy is just a collection of hopsets with different hopbounds that we will specify later.} of $(1+\eta)$-approximate hopsets for $G$, where $H_i$ has hopbound at most $\beta_i$ and $\beta_i \ge \beta_{i+1}$ for $i \in [1, L-1]$. Set $H_0:=\emptyset$ and $\beta_0:=n$. Assume that for each $i\in[L]$, the hopset $H_i$ is equipped with the witness-path data structure of \Cref{lem:witness-paths-for-hopset-edges} using $H_{i-1}$ as its auxiliary hopset (for $i=1$, as $H_0=\emptyset$, the witness paths are shortest paths in $G$), and that the last level $H_L$ is also equipped with \Cref{lem:rooted-path-retrieval}.

We build a subgraph $G_P\subseteq G$ as follows.
Intuitively, we route every demand pair through $H_L$ and then repeatedly replace each hopset edge by its witness path one level lower, until only $G$-edges remain.
For every source $s\in S$ with $T_s\neq\emptyset$, invoke \Cref{lem:rooted-path-retrieval} on $H_L$ rooted at $s$. This yields, in $\widetilde O(m+|H_L|)$ time, paths $\Pi_{H_L}(s,v)\subseteq G\cup H_L$ for all vertices $v$ reachable from $s$, with $|\Pi_{H_L}(s,v)|\le \beta_L$ and $w(\Pi_{H_L}(s,v))\le (1+\eta)d_G(s,v)$.
For every demand pair $(s,t)\in P$ with $t$ reachable from $s$, traverse $\Pi_{H_L}(s,t)$, insert every edge of $\Pi_{H_L}(s,t)$ that belongs to $G$ into the output graph $G_P$, and insert every edge of $\Pi_{H_L}(s,t)$ that belongs to $H_L$ into a set $F_L\subseteq H_L$.

Then, for $i=L,L-1,\ldots,1$, process every edge $e\in F_i$ once. Query the witness-path data structure of level $i$ for $e=(u,v)$, obtaining a path $\Pi_i(e)\subseteq G\cup H_{i-1}$ with at most $\beta_{i-1}$ edges and $w(\Pi_i(e))\le (1+\eta)d_G(u,v)$. Insert every $G$-edge\footnote{Throughout, a $G$-edge simply means an edge of the original graph $G$, as opposed to a hopset edge.} of $\Pi_i(e)$ into the output, and insert every $H_{i-1}$-edge of $\Pi_i(e)$ into a set $F_{i-1}\subseteq H_{i-1}$.
Let $G_P$ be the union of all inserted $G$-edges.

We now prove correctness. For every $i\in[L]$ and every $e=(u,v)\in F_i$, define recursively a $u$-$v$ path $\widehat \Pi_i(e)\subseteq G$ by fully expanding every hopset edge in $\Pi_i(e)$ down to level $0$. We claim that
\[
    w(\widehat \Pi_i(e)) \le (1+\eta)^i d_G(u,v)
    \qquad \text{for every } e=(u,v)\in F_i.
\]
The proof is by induction on $i$.

For $i=1$, we have $\Pi_1(e)\subseteq G$ because $H_0=\emptyset$, and therefore
\begin{equation*}w(\widehat \Pi_1(e)) = w(\Pi_1(e)) \le (1+\eta)d_G(u,v).\end{equation*}

Assume the claim holds for level $i-1$, and let $e=(u,v)\in F_i$. Each edge $f$ on $\Pi_i(e)$ is either a $G$-edge or an edge of $H_{i-1}$. If $f\in G$, we keep it as is. If $f=(x,y)\in H_{i-1}$, then $f\in F_{i-1}$ by construction, so we may replace it by $\widehat \Pi_{i-1}(f)$. Since hopset edges are weighted by exact $G$-distances, $w(f)=d_G(x,y)$; hence the induction hypothesis gives $w(\widehat \Pi_{i-1}(f)) \le (1+\eta)^{i-1}w(f)$. Because $(1+\eta)^{i-1}\ge 1$, every edge of $\Pi_i(e)$ contributes at most a factor $(1+\eta)^{i-1}$ to the weight after full expansion. Therefore,
\[
    w(\widehat \Pi_i(e))
    \le (1+\eta)^{i-1} w(\Pi_i(e))
    \le (1+\eta)^i d_G(u,v).
\]

Now fix a pair $(s,t)\in P$ with $t$ reachable from $s$, and let $\widehat\Pi(s,t)\subseteq G$ be the path obtained by fully expanding every $H_L$-edge of $\Pi_{H_L}(s,t)$ using the corresponding paths $\widehat \Pi_L(\cdot)$. By the same argument as above,
\[
    w(\widehat\Pi(s,t))
    \le (1+\eta)^L w(\Pi_{H_L}(s,t))
    \le (1+\eta)^{L+1} d_G(s,t).
\]
All edges of $\widehat\Pi(s,t)$ are inserted into $G_P$ by construction. Thus $G_P$ contains an $s$-$t$ path of this weight. If $t$ is not reachable from $s$ in $G$, then it is also unreachable in the subgraph $G_P\subseteq G$. Consequently, if
\begin{equation*} (1+\eta)^{L+1}\le 1+\varepsilon, \end{equation*}
then $G_P$ is a $(1+\varepsilon)$-approximate distance preserver for $P$.

The work done after these hopsets have been built is
\[
\widetilde O\!\left(
\sum_{s\in S}(m+|H_L|)
   + \sum_{(s,t)\in P} |\Pi_{H_L}(s,t)|
   + \sum_{i=1}^L \sum_{e\in F_i} |\Pi_i(e)|
\right),
\]
where the last term is exactly the reporting bound of \Cref{lem:witness-paths-for-hopset-edges}. Because every top-level path has at most $\beta_L$ edges, and because $|F_i|\le |H_i|$ and every witness path has at most $\beta_{i-1}$ edges, this becomes
\[
\widetilde O\!\left(
\sigma(m+|H_L|) + p\beta_L + \sum_{i=1}^L |H_i|\beta_{i-1}
\right).
\]
The output edge count is bounded by the same expression without the first summand:
\begin{equation*} |E(G_P)| \le p\beta_L + \sum_{i=1}^L |H_i|\beta_{i-1}. \end{equation*}

It remains to choose the hopbounds and corresponding hopsets.

\subparagraph{Case 1: $p\le n$.}
\begin{equation*} \beta := n/p^{2/3}. \end{equation*}
Since $p\le n$, we have $n^{1/3}\le \beta\le n$.
Construct a nonincreasing sequence $\beta_0=n\ge \beta_1 > \cdots > \beta_L = \beta$ as follows: while the current value $\beta_{i-1}$ is larger than $2\beta$, set
\begin{equation*} \beta_i := \beta_{i-1}^{2/3}\beta^{1/3} \end{equation*}
(chosen so that every level contributes the same amount $|H_i|\beta_{i-1}$ below).
When the current value first becomes at most $2\beta$, append one final level equal to $\beta$.
If $\beta_0,\beta_1,\ldots,\beta_{L-1}$ are the values before the final appended level, then $x_i:=\beta_i/\beta$ satisfies $x_i=x_{i-1}^{2/3}$ for every $1\le i\le L-1$, with $x_0=n/\beta=p^{2/3}$. Hence $L=O(\log\log n)$.

Every level satisfies $\beta_i\in[\beta,n]\subseteq[n^{1/3},n]$, so \Cref{thm:hs-large-diam} gives a $(1+\eta)$-hopset $H_i$ with hopbound at most $\beta_i$, size $|H_i| = \widetilde O\!\left((n/\beta_i)^{3/2}\right)$, and construction time $\widetilde O(n/\beta_i\cdot m)$. By \Cref{lem:witness-paths-for-hopset-edges}, augmenting level $i$ with witness-path queries costs another $\widetilde O(n/\beta_i\cdot m)$ time, which is absorbed in the same asymptotic bound.

For every nonfinal level, the recurrence gives $\beta_i^{3/2} = \beta_{i-1}\beta^{1/2}$, and hence
\[
|H_i|\beta_{i-1}
   = \widetilde O\!\left((n/\beta_i)^{3/2}\beta_{i-1}\right)
   = \widetilde O\!\left(n^{3/2}/\beta^{1/2}\right)
   = \widetilde O\!\left(np^{1/3}\right).
\]
The same bound also holds for the final level because $\beta_L=\beta$ and $\beta_{L-1}=O(\beta)$. Therefore,
\begin{equation*} \sum_{i=1}^L |H_i|\beta_{i-1} = \widetilde O\!\left(np^{1/3}\right). \end{equation*}
Moreover,
\begin{equation*} |H_L| = \widetilde O\!\left((n/\beta)^{3/2}\right) = \widetilde O(p), \end{equation*}
so the rooted retrieval stage costs
\begin{equation*} \widetilde O\!\left(\sigma(m+p)\right). \end{equation*}
Also,
\begin{equation*} p\beta_L = \widetilde O(p\beta)=\widetilde O(np^{1/3}), \end{equation*}
and the total time to build the hierarchy is
\[
\widetilde O\!\left(\sum_{i=1}^L n/\beta_i \cdot m\right)
   = \widetilde O\!\left((n/\beta)\cdot m\right)
   = \widetilde O\!\left(p^{2/3}m\right),
\]
because $L=O(\log\log n)$ and every $\beta_i\ge \beta$.

\begin{equation*} \eta := \varepsilon/(2(L+1)). \end{equation*}
Then $(1+\eta)^{L+1}\le 1+\varepsilon$, so the  reduction is correct. We obtain
\[
|E(G_P)| = \widetilde O\!\left(np^{1/3}\right),
\]
which is subsumed by the stated bound $\widetilde O\!\left(np^{1/3} + (np)^{2/3}\right)$, and the total running time is
\[
\widetilde O\!\left(p^{2/3}m + \sigma(m+p) + np^{1/3}\right).
\]

\subparagraph{Case 2: $p\ge n$.}
\begin{equation*} \beta_{\mathrm{I}} := n^{1/3}, \qquad \beta_{\mathrm{II}} := (n^2/p)^{1/3}. \end{equation*}
Since $P\subseteq S\times V\subseteq V\times V$, we have $p\le n^2$; together with $p\ge n$, this implies $1\le \beta_{\mathrm{II}}\le \beta_{\mathrm{I}}$.

\emph{Phase I.}
Construct a sequence $\beta_0=n > \beta_1 > \cdots > \beta_a = \beta_{\mathrm{I}}$ as follows: while the current value $\beta_{i-1}$ is larger than $2\beta_{\mathrm{I}}$, set
\begin{equation*} \beta_i := \beta_{i-1}^{2/3}\beta_{\mathrm{I}}^{1/3}. \end{equation*}
When the current value first becomes at most $2\beta_{\mathrm{I}}$, append one final level equal to $\beta_{\mathrm{I}}$.
Exactly as in Case~1, this gives $a=O(\log\log n)$. Every phase-I level lies in $[\beta_{\mathrm{I}},n]$, so \Cref{thm:hs-large-diam} and \Cref{lem:witness-paths-for-hopset-edges} give hopsets $H_1,\ldots,H_a$ with total construction time
\begin{equation*} \widetilde O\!\left((n/\beta_{\mathrm{I}})\cdot m\right) = \widetilde O\!\left(n^{2/3}m\right). \end{equation*}
Also, every phase-I level contributes
\begin{equation*} |H_i|\beta_{i-1} = \widetilde O\!\left(n^{3/2}/\beta_{\mathrm{I}}^{1/2}\right) = \widetilde O\!\left(n^{4/3}\right), \end{equation*}
so
\begin{equation*} \sum_{i=1}^a |H_i|\beta_{i-1} = \widetilde O\!\left(n^{4/3}\right). \end{equation*}

\emph{Phase II.}
Starting from $\gamma_0=\beta_{\mathrm{I}}$, construct a nonincreasing sequence $\gamma_0=\beta_{\mathrm{I}}\ge \gamma_1 > \cdots > \gamma_b = \beta_{\mathrm{II}}$ as follows: while the current value $\gamma_{j-1}$ is larger than $2\beta_{\mathrm{II}}$, set
\begin{equation*} \gamma_j := \gamma_{j-1}^{1/3}\beta_{\mathrm{II}}^{2/3}. \end{equation*}
When the current value first becomes at most $2\beta_{\mathrm{II}}$, append one final level equal to $\beta_{\mathrm{II}}$.
If $\gamma_0,\gamma_1,\ldots,\gamma_{b-1}$ are the values before the final appended level, then $y_j:=\gamma_j/\beta_{\mathrm{II}}$ satisfies $y_j=y_{j-1}^{1/3}$ for every $1\le j\le b-1$, with $y_0=\beta_{\mathrm{I}}/\beta_{\mathrm{II}}=(p/n)^{1/3}\le n^{1/3}$. Hence $b=O(\log\log n)$.
Rename the phase-II hopsets as
\[
H_{a+j}\text{ with hopbound }\beta_{a+j}:=\gamma_j
\qquad (j=1,\ldots,b),
\]
so that the full hierarchy is $H_1,\ldots,H_L$ with $L=a+b$ and final hopbound $\beta_L=\beta_{\mathrm{II}}$.

Every phase-II level lies in $[\beta_{\mathrm{II}},\beta_{\mathrm{I}}]\subseteq[1,n^{1/3}]$, so \Cref{thm:hs-small-diam} and \Cref{lem:witness-paths-for-hopset-edges} give hopsets with size $|H_{a+j}| = \widetilde O\!\left(n^2/\gamma_j^3\right)$ and construction time $\widetilde O(n/\gamma_j\cdot m)$. Since $\gamma_j^3=\gamma_{j-1}\beta_{\mathrm{II}}^2$ for every nonfinal level,
\[
|H_{a+j}|\beta_{a+j-1}
   = \widetilde O\!\left((n^2/\gamma_j^3)\gamma_{j-1}\right)
   = \widetilde O\!\left(n^2/\beta_{\mathrm{II}}^2\right)
   = \widetilde O\!\left((np)^{2/3}\right).
\]
The same estimate holds for the final level because $\gamma_b=\beta_{\mathrm{II}}$ and $\gamma_{b-1}=O(\beta_{\mathrm{II}})$. Therefore,
\begin{equation*} \sum_{j=1}^b |H_{a+j}|\beta_{a+j-1} = \widetilde O\!\left((np)^{2/3}\right). \end{equation*}
Similarly, the total construction time in phase II is
\begin{equation*} \widetilde O\!\left((n/\beta_{\mathrm{II}})\cdot m\right) = \widetilde O\!\left((np)^{1/3}m\right). \end{equation*}

Because the final hopset satisfies
\begin{equation*} |H_L| = \widetilde O\!\left(n^2/\beta_{\mathrm{II}}^3\right) = \widetilde O(p), \end{equation*}
the rooted retrieval stage again costs
\begin{equation*} \widetilde O\!\left(\sigma(m+p)\right), \end{equation*}
and the total length of all top-level source-to-target paths is
\begin{equation*} p\beta_L = p\beta_{\mathrm{II}} = (np)^{2/3}. \end{equation*}

\begin{equation*} \eta := \varepsilon/(2(L+1)). \end{equation*}
Then $(1+\eta)^{L+1}\le 1+\varepsilon$, so the  reduction is correct. Since $p\ge n$, we have $n^{4/3}\le (np)^{2/3}$, and therefore
\begin{equation*} \sum_{i=1}^L |H_i|\beta_{i-1} = \widetilde O\!\left((np)^{2/3}\right). \end{equation*}
Combining this with the  size bound yields
\[
|E(G_P)| = \widetilde O\!\left((np)^{2/3}\right),
\]
which is subsumed by the stated bound $\widetilde O\!\left(np^{1/3} + (np)^{2/3}\right)$. The total running time is
\[
\widetilde O\!\left(
(n/\beta_{\mathrm{I}})\cdot m + (n/\beta_{\mathrm{II}})\cdot m + \sigma(m+p) + p\beta_{\mathrm{II}}
\right)
=
\widetilde O\!\left(
(np)^{1/3}m + \sigma(m+p) + (np)^{2/3}
\right),
\]
because $(n/\beta_{\mathrm{I}})=n^{2/3}\le (np)^{1/3}$ when $p\ge n$.

This proves both running-time bounds and the stated edge bound.
\end{proof}

\section{Data Structure for Reporting Hopset Witnesses}

\label{app:hopset_data_structs}

In this section, we present the proofs of our hopset data structures that we use in the fast approximate distance preserver construction in \Cref{sec:fast-sourcewise-preservers}. We begin with \Cref{lem:rooted-path-retrieval}, which guarantees that we can compute hopbounded single-source (approximate) shortest paths efficiently, given our hopsets constructed in \Cref{thm:hs-large-diam} and \Cref{thm:hs-small-diam}.

\RootedPathRetrieval*

We want to find an efficient data structure to store the low-hop approximate shortest paths starting at $r$ guaranteed by the hopset.
The lemma shows this is possible by first computing a shortest path tree from $r$ and then rehanging parts of it using the chain shortcuts and the sampled-vertex-to-sampled-chain shortcuts from $H$. 
This reduces the height of the tree and thus the hop-length of the stored paths.
The properties of the hopset construction will guarantee that will not increase the weights of the paths beyond the desired approximation guarantee.

\begin{proof}
We build one structure per distance scale $L$ and use it for the pairs $(r,v)$ with $d_G(r,v)\in[L/(1+\eps_0),L]$. Let $V'$, $\mathcal C_L$, and $\mathcal R_L$ be the sampled vertices, the chains, and the sampled chains used to construct $H$ at this scale.
Let $T$ be the fixed shortest path tree from $r$ of \Cref{lem:prefix-sampling}, so its root paths are the paths $P_{r,v}$ and it does not depend on any random choice.
Every node $z$ stores a \emph{link}: an ancestor of $z$ in $T$ together with a path in $G\cup H$ from that ancestor to $z$. We obtain $\Pi_H(r,v)$ by following the links from $v$ up to $r$. By default, the link of $z$ is its tree edge. We change it in two cases, where (2) takes precedence over (1):
\begin{enumerate}
    \item If $z$ lies on a chain $Q\in\mathcal C_L$ and its highest ancestor $a$ on $Q$ is at least three hops above $z$, we link $z$ to $a$ via the exact two-hop path in $F_Q$.
    \item If $z$ lies on a sampled chain $Q\in\mathcal R_L$ and has depth more than $\beta$, let $u$ be the first node of $V'$ on its root path. We link $z$ to $u$ via the edge $(u,z')$ of \Cref{lem:earliest-bucket} (with $y=z$) and the exact two-hop path from $z'$ to $z$ in $F_Q$.
\end{enumerate}
By \Cref{lem:prefix-sampling}, w.h.p.\ such $u$ has depth at most $\beta$. Hence, at most one link of type (2) is used on $\Pi_H(r,v)$, all other links are exact, and exactly as in \Cref{lem:route-from-sampled-suffix}, we get $w(\Pi_H(r,v))\le(1+\delta)\,d_G(r,v)$.

For the hop bound, walk from $v$ upwards. By (1), we meet every chain in $\cO(1)$ nodes. We meet at most $2\beta$ nodes outside the chains: at most $\beta$ above the first node of $V'$ on $P_{r,v}$, and at most $\beta$ below it by \Cref{cond:d3}. Since $T$ and the links of type (1) do not depend on the sampling of the chains, w.h.p.\ one of the first $\beta$ distinct chains we meet is sampled.\footnote{This is for \Cref{thm:hs-small-diam}. For \Cref{thm:hs-large-diam}, only portals get links of types (1) and (2), and (1) points to the highest proper ancestor that is a portal of the same chain. Then following the links from $v$ is exactly the walk in the proof of \Cref{thm:hs-large-diam} for $P=P_{r,v}$, so its claim gives $\cO(\beta)$ hops.} At its node, we are either at depth at most $\beta$ or use a link of type (2) to jump to depth at most $\beta$. Hence, $\Pi_H(r,v)$ has $\cO(\beta)$ hops.

We compute $T$ by \Cref{lem:dag-sssp} and all links in one DFS over $T$, maintaining the highest node of every chain on the current root path and the first node of $V'$ on it, and undoing changes when backtracking. The edge $(u,z')$ is found by binary search over the $\cO(\eps_0^{-1}\log(nW))$ bucket edges from $u$ to $Q$, and the two-hop paths in $F_Q$ via its recursive construction. Hence, the total time is $\Oish(m+|H|)$.
\end{proof}

At a high level,  \Cref{lem:witness-paths-for-hopset-edges} guarantees that when queried with an edge from a hopset $H$ constructed using \Cref{thm:hs-large-diam} or \Cref{thm:hs-small-diam}, we can recover a low-hop path in the underlying graph between the endpoints of that hopset edge.

\WitnessPathsForHopsetEdges*

\begin{proof}
We assume our hopset hierarchy was constructed using \Cref{thm:hs-small-diam} and \Cref{thm:hs-large-diam}. Moreover, assume that our data structure has access to the sampled vertices and (sampled) chains used in the construction of these hopsets.

We divide the edges in hopset $H_i$ into two categories: edges added while super-shortcutting chains and edges added from sampled vertices (lets say $V'_i$) to sampled chains.
Call these \emph{type one} edges and \emph{type two} edges. We treat them differently.

For type one edges, recall how super-shortcutting a chain (for \Cref{thm:hs-large-diam}, the chain $\Pi_Q$ of portals) works: We find the middle vertex of the chain and add edges from all predecessors on the chain to it and add edges to all its descendants of the chain. Then, recurse on the half of the chain before the middle vertex and the half after separately.
We mirror this recursion and apply \Cref{lem:rooted-path-retrieval} with the middle node as $r$ with hopset $H_{i-1}$.
Afterwards, we split the instance into two halves $G \cup H_{i-1}[<r]$ and $G \cup H_{i-1}[>r]$.
Notice two things: (1) as we operate on DAGs, the only part of the graph that is necessary to find the witnessing path for an edge $(u,v)$ are the vertices between $u,v$ in the topological partial order, and (2) the sum of the sizes of these two instances is at most $\cO(m)$ (after a one-time preprocessing of $H_{i-1}$, \Cref{lem:rooted-path-retrieval} only scans the $G$-edges of an instance).
Thus, constructing the trees from \Cref{lem:rooted-path-retrieval} required to answer queries for edges of type one, takes $\Oish(m)$ time per chain used to construct $H_i$, thus $\Oish(n/\beta_i \cdot m)$ in total.

For type two, it is directly enough to run \Cref{lem:rooted-path-retrieval} with all vertices from $V'_i$ as roots.
This takes $\Oish(n/\beta_i \cdot m)$ time.
\end{proof}

\section{Reductions to DAGs due to~\cite{haeuplerJiangSaranurak26}}

\label{app:dagreduction}

In this section, we state the two lemmas due to \cite{haeuplerJiangSaranurak26} that allow us to assume our input graphs are DAGs in our fast hopset and distance preserver algorithms.

\begin{lemma}[\cite{haeuplerJiangSaranurak26}]
    Suppose there exists an algorithm that, given an $m$-edge, $n$-vertex DAG, computes a $(1+\varepsilon)$-approximate hopset of size $s(m, n)$ and hopbound $\beta$ in time $T(m, n)$, for $1/\polylog(n) < \varepsilon < o(1)$. Then there exists a randomized algorithm that, given an $m$-edge, $n$-vertex directed graph, computes a $(1+3\varepsilon)$-approximate hopset of size $s(m^{1+o(1)}, n^{1+o(1)})$ and hopbound $\beta$ in time $T(m^{1+o(1)}, n^{1+o(1)}) + m^{1+o(1)}$. 
\end{lemma}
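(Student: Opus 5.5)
The plan is to realize the reduction through the DAG projection of \cite{haeuplerJiangSaranurak26}, run the assumed DAG algorithm on the projected graph, and map its hopset back to $G$. Given a general directed graph $G=(V,E,w)$, we first invoke the projection routine. In $m^{1+o(1)}$ time and with randomness it produces a DAG $D$ with $n^{1+o(1)}$ vertices and $m^{1+o(1)}$ edges, together with a projection map $\phi: V(D)\to V$, such that the following hold.
\begin{enumerate}
\item Every edge $(a,b)$ of $D$ maps to a pair with $d_G(\phi(a),\phi(b))\le w_D(a,b)$. Hence every path in $D$ projects to a walk in $\tc(G)$ of no smaller weight.
\item For every pair $(u,v)$ reachable in $G$ there are copies $a\in\phi^{-1}(u)$ and $b\in\phi^{-1}(v)$ with $d_D(a,b)\le(1+\varepsilon)d_G(u,v)$.
\end{enumerate}
We then run the assumed algorithm on $D$. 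This yields a $(1+\varepsilon)$-hopset $H_D$ of size $s(m^{1+o(1)},n^{1+o(1)})$ and hopbound $\beta$, computed in time $T(m^{1+o(1)},n^{1+o(1)})$.

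Next we output $H=\{(\phi(a),\phi(b)) : (a,b)\in H_D\}$, weighting each edge by $w_{H_D}(a,b)$ or by $d_G$, and discarding self-loops. The size bound is immediate. For soundness, the projection property gives $d_G(\phi(a),\phi(b))\le d_D(a,b)\le w_{H_D}(a,b)$, so no distance in $G\cup H$ drops below $d_G$.

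For the approximation, fix a reachable pair $(u,v)$ and take the copies $a,b$ from property~2. The hopset property in $D$ gives a $\beta$-hop path in $D\cup H_D$ of weight at most $(1+\varepsilon)d_D(a,b)\le(1+\varepsilon)^2d_G(u,v)$. Projecting it edge by edge gives a $\beta$-hop walk in $G\cup H$, since edges of $D$ project to edges of $G$ or of the projection's shortcut structure. Because $(1+\varepsilon)^2\le 1+3\varepsilon$ for $\varepsilon<1$, the hopset has stretch $1+3\varepsilon$.

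The main obstacle is the precise guarantee of the projection. Edges of $D$ might project to pairs that are not edges of $G$, or to within-SCC connections. I would handle this by including the projected $D$-edges themselves in $H$ whenever they are not $G$-edges; they are weighted by their exact $G$-distance, which is at most their $D$-weight. This adds only $m^{1+o(1)}$ edges, which are absorbed into $s(m^{1+o(1)},n^{1+o(1)})$ since $s$ counts at least the graph size in the relevant regime. If the projection only guarantees stretch with high probability, we amplify by repetition over $O(\log n)$ independent projections and take the union. This is also where the restriction $\varepsilon>1/\polylog(n)$ enters: it keeps the projection's blowup at $n^{o(1)}$.
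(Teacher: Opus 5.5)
Your route is the right one. The paper gives no proof of this lemma and simply imports it from \cite{haeuplerJiangSaranurak26}, and the reduction behind that citation is exactly yours: project to a DAG $D$, build $H_D$, map it back via $\phi$, and pay $(1+\varepsilon)^2\le 1+3\varepsilon$. Your soundness argument, $d_G(\phi(a),\phi(b))\le d_D(a,b)\le w_{H_D}(a,b)$, and your stretch computation are fine.

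The step that fails is how you handle the edges of $D$. With only your property~1 ($d_G(\phi(a),\phi(b))\le w_D(a,b)$), a $\beta$-hop path in $D\cup H_D$ projects to a $\beta$-hop walk in $\tc(G)\cup H$, not in $G\cup H$, so the hopbound is not established. Your patch is to add all projected $D$-edges to $H$. That adds $m^{1+o(1)}$ edges, and these are \emph{not} absorbed into $s(m^{1+o(1)},n^{1+o(1)})$, because hopset sizes are typically far below $m$. Examples are $\Oish(n)$-size hopsets on dense graphs, or the $o(n)$-size hopsets of \Cref{thm:hs-large-diam}. The latter feed directly into the preserver edge bound, which your patch would destroy.

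The missing ingredient is the defining property of a DAG projection: $\phi$ sends every edge $(x,y)$ of $D$ to an edge $(\phi(x),\phi(y))$ of $G$ with weight at most $w_D(x,y)$. If $\phi(x)=\phi(y)$, the edge collapses to a vertex that the projected walk simply skips. The statement implicitly relies on exactly this, since it preserves both $\beta$ and $s(\cdot)$ with no additive overhead. With this property, each $D$-hop becomes at most one $G$-hop and each $H_D$-hop becomes one $H$-hop of no larger weight. Removing cycles from the resulting walk gives a path with at most $\beta$ hops, and your argument then goes through with no extra edges.
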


\begin{lemma}[\cite{haeuplerJiangSaranurak26}]
    Suppose there exists an algorithm that, given an $m$-edge, $n$-vertex DAG and a set of $p$ demand pairs with $\sigma$ sources, computes a $(1+\varepsilon)$-approximate distance preserver of size $s(n, p)$ in time $T(m, n, p, \sigma)$, for $1/\polylog(n) < \varepsilon < o(1)$. Then there exists a randomized algorithm that, given an $m$-edge, $n$-vertex directed graph, computes a $(1+3\varepsilon)$-approximate distance preserver of size $s(n^{1+o(1)}, p)$ in time $T(m^{1+o(1)}, n^{1+o(1)}, p, \sigma) + m^{1+o(1)}$. 
\end{lemma}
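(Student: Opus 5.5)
The plan is to go through the \emph{DAG projection} of \cite{haeuplerJiangSaranurak26} and treat it as a black box. Given the directed graph $G=(V,E,w)$ and $\varepsilon$, it produces in $m^{1+o(1)}$ randomized time a DAG $G'=(V',E',w')$ with $n^{1+o(1)}$ vertices and $m^{1+o(1)}$ edges, together with a projection map $\phi:V'\to V$. The two properties I need are the following.

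\begin{enumerate}
\item Every edge $(x',y')\in E'$ maps to an edge $(\phi(x'),\phi(y'))\in E$ of the same weight. Consequently every path in $G'$ projects to a walk in $G$ of equal weight, so $d_G(\phi(x'),\phi(y'))\le d_{G'}(x',y')$.
\item For every vertex $v$ there are designated copies $v^{\mathrm{out}},v^{\mathrm{in}}\in\phi^{-1}(v)$ such that, w.h.p., $d_{G'}(s^{\mathrm{out}},t^{\mathrm{in}})\le(1+\varepsilon)d_G(s,t)$ for all $s,t\in V$.
\end{enumerate}

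If the projection only gives a guarantee of the form ``some pair of copies'', I will fix the designated copies as whatever canonical copies the construction in \cite{haeuplerJiangSaranurak26} supplies. The range $1/\polylog(n)<\varepsilon<o(1)$ is exactly the range in which that construction has $n^{o(1)}$ blowup. Weights stay nonnegative, and the aspect ratio is unchanged, since edges are copied with their original weights.

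\textbf{Algorithm.} Given demand pairs $P\subseteq S\times V$, I form $P':=\{(s^{\mathrm{out}},t^{\mathrm{in}}):(s,t)\in P\}$. Then $|P'|=p$, and $P'$ has exactly $\sigma$ sources, namely the copies $s^{\mathrm{out}}$ for $s\in S$. This is why the time term has the form $T(m^{1+o(1)},n^{1+o(1)},p,\sigma)$ with $p,\sigma$ unchanged. I run the assumed DAG algorithm on $(G',P')$ with parameter $\varepsilon$, which yields $H'\subseteq G'$ with $|E(H')|\le s(n^{1+o(1)},p)$. The output is $H:=\{(\phi(x'),\phi(y')):(x',y')\in E(H')\}\subseteq G$, computed in time linear in $|E(H')|$. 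Clearly $|E(H)|\le|E(H')|$, so the size and running-time bounds follow directly.

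\textbf{Correctness.} Fix $(s,t)\in P$. If $t$ is unreachable from $s$ in $G$, there is nothing to prove, because $H\subseteq G$. Otherwise, property 2 gives $d_{G'}(s^{\mathrm{out}},t^{\mathrm{in}})\le(1+\varepsilon)d_G(s,t)<\infty$. The preserver guarantee then gives a path in $H'$ of weight at most $(1+\varepsilon)d_{G'}(s^{\mathrm{out}},t^{\mathrm{in}})\le(1+\varepsilon)^2d_G(s,t)$. By property 1, this path projects to an $s$--$t$ walk in $H$ of the same weight. Since $\varepsilon<1$, we have $(1+\varepsilon)^2\le 1+3\varepsilon$, which is the claimed stretch. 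The success probability is that of the projection, so the overall algorithm is randomized as claimed.

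\textbf{Main obstacle.} I expect the main difficulty to be matching the exact form of the projection guarantee: property 2 must hold \emph{simultaneously} for all pairs, with a \emph{single} designated source copy per $s$. If the projection instead provides several candidate copies per vertex, say one per distance scale, then the number of sources and demand pairs would grow by that factor. In that case I would first try to argue that the number of candidates is $n^{o(1)}$ or $\polylog(n)$. I would then absorb the extra factor by invoking the DAG algorithm once per scale on disjoint sub-demand sets and taking the union of the resulting preservers, which requires checking that $s$ and $T$ behave well enough under this splitting.
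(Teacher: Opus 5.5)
The paper gives no proof of its own here; it imports the lemma from \cite{haeuplerJiangSaranurak26}, and your project--solve--pull-back reduction, with stretch $(1+\varepsilon)^2\le 1+3\varepsilon$, is the natural argument behind it and is correct given the projection's guarantees. On your stated obstacle: you do not need the per-scale splitting, which would in any case break the stated form with $p$ and $\sigma$ unchanged. Even if the projection only guarantees that \emph{some} copies of $s$ and $t$ are close, add for each $s\in S$ a new vertex $s^*$ with $0$-weight edges to all of $\phi^{-1}(s)$, and for each target $t$ a new vertex $t^*$ with $0$-weight edges from all of $\phi^{-1}(t)$, then use the demand pairs $(s^*,t^*)$. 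This keeps the graph acyclic ($s^*$ is a source and $t^*$ a sink), adds only $n^{1+o(1)}$ edges, leaves $p$ and $\sigma$ unchanged, and the gadget edges are simply dropped when projecting back.
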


\section{A Simple Algorithm for $(1+\varepsilon)$-Hopsets on DAGs due to~\cite{KP22a}}
\label{sec:kp-hopset}

In this section, we present a simple algorithm for $(1+\varepsilon)$-hopsets on DAGs that is implicit in the work of \cite{KP22a}. The reason we present this algorithm is because it is the basis for our faster algorithms for $(1+\varepsilon)$-hopsets.

\begin{theorem}
Let $G=(V,E,w)$ be an $n$-vertex DAG with nonnegative edge-weights and polynomial aspect ratio $W=\poly(n)$. For every $\eps\in(0,1)$ there exists a $(1+\eps)$-hopset $H$ of size $\Oish(\eps^{-2} n)$ and hopbound $\Oish(\eps^{-1} n^{1/3})$.
\label{thm:easy_dag_hopset}
\end{theorem}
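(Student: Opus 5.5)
We build $H$ one distance scale at a time. Since the aspect ratio is polynomial, there are $\cO(\eps^{-1}\log n)$ scales $L=(1+\eps_0)^j$ with $\eps_0=\Theta(\eps)$. For each scale we construct a set $H_L$ that handles every pair $(u,v)$ with $d_G(u,v)\in[L/(1+\eps_0),L]$. We then take $H=\bigcup_L H_L$; this costs one logarithmic factor in size and nothing in hopbound. Fix $\beta=n^{1/3}$ and a scale $L$. The construction for this scale follows the five steps listed in \Cref{sec:lightest-chain-cover}.

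\textbf{Steps (1)--(3): sampling and chains.} Sample $V'$ by keeping each vertex with probability $C\log n/\beta$. Fix canonical shortest paths $P_{u,v}$ independently of the randomness, for instance via a fixed shortest-path tree per root. A Chernoff bound plus a union bound then shows that w.h.p. every canonical path with at least $\beta$ vertices meets $V'$ within its first $\beta$ vertices; this is the content of \Cref{lem:prefix-sampling}. Next compute a lightest chain cover $\chains_L$ from $V'$ as in \Cref{def:lightest-chain-cover}, using the greedy procedure of \Cref{alg:lightest-chain-cover}, with $\eta=\Theta(1)$. It has $\cO(n/\beta)$ vertex-disjoint chains, each a shortest path of weight at most $\eta L$. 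Super-shortcut every chain by \Cref{lem:supershortcut}, weighting each added edge by the exact chain distance. This contributes $\cO(n\log^*n)$ edges, and any two vertices on a chain are then joined by an exact path of at most $4$ hops.

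\textbf{Steps (4)--(5): sampled chains and connector edges.} Sample $\mathcal R_L\subseteq\chains_L$ by keeping each chain with probability $C\log n/\beta$, so w.h.p. $|\mathcal R_L|=\Oish(n/\beta^2)$. For every $u\in V'$ and $Q\in\mathcal R_L$, we cannot just add a single edge to the first reachable vertex of $Q$: the walk along $Q$ afterwards might be far from shortest. Instead, for each bucket $k=0,\dots,\cO(\eps^{-1})$, add an edge from $u$ to the first vertex $z\in Q$ with $d(u,z)\le d(u,Q)+k\eps_0L$, weighted by $d(u,z)$. This gives $\Oish(\eps^{-1}\cdot (n/\beta)\cdot (n/\beta^2))=\Oish(\eps^{-1}n)$ edges. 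Summing over scales gives size $\Oish(\eps^{-2}n)$.

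\textbf{Routing argument (the main step).} Fix a pair $(r,v)$ at scale $L$ and its canonical path $P$.
\begin{enumerate}
\item Let $u$ be the first vertex of $V'$ on $P$; it lies within $\beta$ hops of $r$.
\item By condition (A5), the suffix $P[u,v]$ has at most $\beta$ uncovered vertices. Every chain met on it is traversed in $\cO(1)$ hops via super-shortcuts, so the whole suffix costs $\cO(\beta+\#\text{chains met})$ hops.
\item If fewer than $\beta$ distinct chains are met, we are done. Otherwise, among the first $\beta$ chains met, w.h.p. one is sampled. For this to work, the chains must be fixed before $\mathcal R_L$ is sampled, and the canonical path must be fixed before that as well. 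Let $Q$ be the last sampled chain met on $P$, and $y$ the last vertex of $P$ on $Q$. Then only $\cO(\beta)$ chains appear after $y$.
\item Route $r\to u$ along $P$, then take the bucket edge from $u$ to $z\in Q$ for the bucket containing $d(u,y)$, then follow $Q$ exactly from $z$ to $y$, then continue along $P$ to $v$.
\end{enumerate}
The weight error is the following. The bucket choice gives $d(u,z)\le d(u,y)$ up to an additive $\eps_0 L$, and $z$ precedes $y$ on $Q$. Since $Q$ is a lightest chain, $d(u,z)+d(z,y)\le d(u,y)+\cO(\eps_0L)$. As $d(r,v)\ge L/(1+\eps_0)$, this is a multiplicative $(1+\cO(\eps))$ error. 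The hop count is $\cO(\beta)$ in total.

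The subtle point I expect to be the main obstacle is reconciling exactness with the bucket edge. One cannot route $u\to$ (first reachable vertex of $Q$) and then walk $Q$, since that path may be far from shortest. The bucketing by additive $\eps_0 L$ fixes this, which forces the $\eps^{-1}$ size factor and the $\eps^{-1}$ blowup in hopbound to $\Oish(\eps^{-1}n^{1/3})$ after accounting for truncating chains to weight $\eta L$. A second difficulty is keeping the randomness of $V'$, $\chains_L$, and $\mathcal R_L$ properly ordered. $\chains_L$ depends on $V'$, so the union bound over pairs must condition on $V'$ and $\chains_L$ before sampling $\mathcal R_L$.
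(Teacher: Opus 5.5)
\paragraph{Gap in the stretch analysis.} The inequality $d(u,z)+d(z,y)\le d(u,y)+\cO(\eps_0 L)$ in your routing step does not follow from $Q$ being a lightest chain. That property only says that walking along $Q$ from $z$ to $y$ costs exactly $d(z,y)$. The triangle inequality runs the other way: $d(u,y)\le d(u,z)+d(z,y)$.

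Your bucket edge bounds $d(u,z)$ by $d(u,y)+\eps_0L$ and ensures that $z$ precedes $y$. It gives no control on $d(z,y)$. For example, suppose every $u$-to-$Q$ path starts with one of two edges of weight $a$, one to $z$ and one to a later vertex $y$ of $Q$, and suppose $d(z,y)=L/2$. Then every bucket edge from $u$ to $Q$ points to $z$, and your route from $u$ to $y$ costs $a+L/2$ instead of $a$. With your choice $\eta=\Theta(1)$, chains may have weight $\Theta(L)$, so you only get stretch $1+\Theta(1)$.

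The missing ingredient, as in \Cref{lem:route-from-sampled-suffix}, is to charge the detour to the chain weight: $d(z,y)\le w(Q)\le\eta L$. This requires $\eta=\cO(\eps)$.

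A separate, smaller slip: to conclude that only $\cO(\beta)$ chains follow $y$, you need a sampled chain among the \emph{last} $\beta$ distinct chains of the suffix, not the first. The same union bound gives this.

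\paragraph{Consequences for the accounting, and comparison with the paper.} With $\eta=\Theta(\eps)$ your size and hopbound accounting must be redone, and this is where an $\eps^{-1}$ genuinely enters. Cutting the greedy chains of \Cref{alg:lightest-chain-cover} into pieces of weight at most $\eta L$ produces $\cO(n/(\eps\beta))$ chains. Sampling them at rate $C\log n/\beta$ with $\beta=n^{1/3}$ then gives $\Oish(\eps^{-3}n)$ edges overall, which exceeds the claimed size. You can repair this in either of two ways:
\begin{itemize}
\item Sample chains at rate $\Theta(\eps\log n/\beta)$. Then $|\mathcal R_L|=\Oish(n/\beta^2)$, but a sampled chain is only guaranteed among the last $\Theta(\beta/\eps)$ chains, giving hopbound $\cO(\eps^{-1}n^{1/3})$.
\item Keep the rate $C\log n/\beta$ and take $\beta=\Theta(\eps^{-1/3}n^{1/3})$.
\end{itemize}

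The paper avoids truncation altogether. \Cref{lem:maximal-cover} takes a maximal family of vertex-disjoint lightest chains, each with at least $\beta$ vertices and weight at most $\eps L$. This keeps $|\qset_L|\le n/\beta$ and moves the $\eps^{-1}$ into the $\cO(\beta/\eps)$ uncovered vertices of each suffix. Apart from this, your additive buckets relative to the fixed scale $L$ are a valid substitute for the paper's multiplicative ones.
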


Let $\tc(G)$ denote the metric closure of $G$: for every reachable pair $(u,v)$, $\tc(G)$ contains an arc $(u,v)$ of weight $d_G(u,v)$. A \emph{chain} is an arbitrary directed path in $\tc(G)$. We say that a chain is a \emph{lightest chain} if it is a shortest path in $\tc(G)$.

\begin{definition}[Lightest chain]
A chain $Q=(q_1,q_2,\dots,q_h)$ in $\tc(G)$ is a \emph{lightest chain} if
\[
w(Q)=\sum_{i=1}^{h-1} d_G(q_i,q_{i+1}) = d_G(q_1,q_h).
\]
\end{definition}

\subparagraph{Sampling vertices.}
Fix a parameter $\beta\ge 1$, which in the proof of \Cref{thm:easy_dag_hopset} we will later set to
\[
\beta:=\ceil{n^{1/3}}.
\]
Sample each vertex in $V(G)$ independently with probability
\[
p:=\min\!\left\{1,\frac{C\log n}{\beta}\right\},
\]
where $C$ is a sufficiently large absolute constant, and let $V'\subseteq V$ be the sampled set. For every $s$, fix a shortest path tree from $s$ independently of the sampling (e.g., by \Cref{lem:dag-sssp} with consistent tie-breaking), and let $P_{s,t}$ be its $s$-$t$ path. Our sampled vertices provide the following guarantee.

\begin{lemma}[Vertex sampling guarantee]
\label{lem:prefix-sampling}
With high probability, the following two events hold simultaneously:
\begin{enumerate}
    \item $|V'|=\Oish(n/\beta)$.
    \item For every reachable ordered pair $(s,t) \in \tc(G)$, if $P \coloneqq P_{s,t}$ has $|V(P)| > \beta$, then one of the first $\beta$ vertices of $P$ is contained in $V'$.
\end{enumerate}
\end{lemma}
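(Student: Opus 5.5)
The plan is to prove the two items separately and combine them with a union bound. The second item must be stated for the fixed family of paths $P_{s,t}$ coming from the shortest path trees, which are chosen independently of the sampling. This independence is exactly what lets us treat each path's vertex set as a fixed set when we compute probabilities.

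\emph{Item (1).} The quantity $|V'|$ is a sum of $n$ independent Bernoulli variables with parameter $p$. Its mean is therefore at most $Cn\log n/\beta$. If $p=1$, then $\beta\le C\log n$, so $n=\Oish(n/\beta)$ and the claim is trivial. Otherwise a multiplicative Chernoff bound gives $|V'|\le 2Cn\log n/\beta$ with probability at least $1-\exp(-\Omega(Cn\log n/\beta))$. Since $\beta\le n$, this failure probability is at most $n^{-\Omega(C)}$.

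\emph{Item (2).} Fix a reachable ordered pair $(s,t)$ with $|V(P_{s,t})|>\beta$. Let $F_{s,t}$ be the set of the first $\beta$ vertices of $P_{s,t}$; it is a deterministic set of exactly $\beta$ distinct vertices. If $p=1$, every vertex is sampled and there is nothing to show. Otherwise each vertex of $F_{s,t}$ is sampled independently with probability $C\log n/\beta$. Hence the probability that $F_{s,t}$ contains no vertex of $V'$ is
\[
(1-C\log n/\beta)^{\beta}\le \exp(-C\log n)=n^{-C}.
\]
There are at most $n^2$ ordered pairs $(s,t)$. A union bound over these pairs, together with the failure probability from item (1), bounds the total failure probability by $n^{2-C}+n^{-\Omega(C)}$. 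This is polynomially small once $C$ is a large enough constant.

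The only subtle point is that the argument really does need the paths to be fixed before sampling, which is why the paper fixes the trees via \Cref{lem:dag-sssp} with consistent tie-breaking. If the paths were allowed to depend on $V'$, we would have to union bound over all paths rather than over $n^2$ of them, and that union bound would fail. With the paths fixed, everything else is a routine Chernoff and union-bound calculation.
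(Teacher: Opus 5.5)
Your proof is correct and follows the paper's argument essentially verbatim. Item (1) is a Chernoff bound on $|V'|$. Item (2) uses the bound $(1-p)^\beta \le e^{-p\beta} = n^{-C}$ for each fixed path $P_{s,t}$, followed by a union bound over the at most $n^2$ ordered pairs; fixing one sampling-independent path per pair is the same point the paper stresses, and your handling of the $p=1$ case and use of $\beta \le n$ in the Chernoff step just spell out details the paper leaves implicit.
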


\begin{proof}
The bound on $|V'|$ follows from a standard Chernoff bound. For the second part, fix a reachable pair $(s,t)$ such that $P = P_{s,t}$ has more than $\beta$ vertices. If $p=1$, then the claim is immediate. Otherwise, the probability that none of the first $\beta$ vertices of $P$ is sampled is at most
\[
(1-p)^\beta \le e^{-p\beta}=n^{-C}.
\]
A union bound over all ordered pairs in $\tc(G)$ finishes the proof (this is why we fix one path per pair: there can be exponentially many shortest paths).
\end{proof}

We now build the hopset as a union over distance scales.
Let $W_{\min} := \min_{e \in \tc(G)} w(e)$ and $W_{\max} := \max_{e \in \tc(G)} w(e)$.\footnote{We assume positive edge weights. Zero weights can be handled by adding $\eps w^{+}/(4n)$ to every edge weight, where $w^{+}$ is the smallest positive weight, and computing the hopset of the resulting graph, with its edges weighted by their distances in $G$.} Let
\[
\lset=\set{W_{\min},(1+\eps)W_{\min},(1+\eps)^2W_{\min},\dots, W_{\max}}
\]
be a bucketing over all finite distances in $[W_{\min},W_{\max}]$, and note that
\[
|\lset| = \cO(\varepsilon^{-1}\log(Wn)),
\]
where $W$ is the aspect ratio of graph $G$. For each $L\in\lset$ we will construct a hopset $H_L$ that handles all ordered pairs $(s,t) \in \tc(G)$ with
\[
d_G(s,t)\in [L/(1+\eps),L].
\]
Since $|\lset|=\cO(\eps^{-1}\log(Wn))$, it will be sufficient to prove $|H_L|=\Oish(\eps^{-1}n)$ for one fixed scale.

For the remainder of this section we fix a scale $L \in \lset$ and focus on constructing $H_L$.

\begin{lemma}[Maximal lightest chain cover at scale $L$]
\label{lem:maximal-cover}
There exists a family $\qset_L$ of pairwise vertex-disjoint lightest chains in $\tc(G)$ such that every $Q\in\qset_L$ has at least $\beta$ vertices and weight at most $\varepsilon L$, and the family is maximal under inclusion. Consequently,
\[
|\qset_L|\le n/\beta.
\]
Moreover, if $P$ is a lightest chain with $w(P)\le L$, then at most $O(\varepsilon^{-1}\beta)$ vertices of $P$ lie outside $\bigcup_{Q\in\qset_L} V(Q)$.
\end{lemma}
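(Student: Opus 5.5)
The plan is to build $\qset_L$ greedily and then prove the ``moreover'' part by a charging argument against maximality. For existence, consider the ground set of all lightest chains in $\tc(G)$ with at least $\beta$ vertices and weight at most $\varepsilon L$. Start from the empty family. While some such chain is vertex-disjoint from all chains already chosen, add it. The process terminates because each step consumes at least $\beta$ new vertices, and the resulting family is maximal under inclusion. The same observation gives the count: the chains are disjoint and each has at least $\beta$ vertices, so $|\qset_L|\beta\le n$.

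For the covering property, fix a lightest chain $P$ with $w(P)\le L$, and let $U$ be its vertices outside $\bigcup_{Q\in\qset_L}V(Q)$, listed in the order of $P$. The key observation is that any subsequence of a lightest chain is again a lightest chain. This holds because $d_G$ satisfies the triangle inequality and $P$ is tight. So every run of consecutive elements of $U$ is a lightest chain in $\tc(G)$, and it is disjoint from all of $\qset_L$.

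Next, partition $U$ greedily along $P$ into maximal consecutive blocks $B_1,B_2,\dots$, each of weight (measured along $P$) at most $\varepsilon L$. Let $u_j$ be the first vertex of $B_{j+1}$. By maximality of the block $B_j$, the weight of $B_j\cup\{u_j\}$ measured along $P$ exceeds $\varepsilon L$. These extended segments occupy edge-disjoint stretches of $P$, apart from shared endpoints. Hence
\[
\#\text{blocks}\le \frac{w(P)}{\varepsilon L}+1\le \varepsilon^{-1}+1 .
\]
Each $B_j$ is a lightest chain of weight at most $\varepsilon L$ that is disjoint from $\qset_L$. If some $B_j$ had at least $\beta$ vertices, it could be added to the family, contradicting maximality. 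So $|B_j|<\beta$ for every $j$, and summing over the blocks gives
\[
|U|<(\varepsilon^{-1}+1)\beta=O(\varepsilon^{-1}\beta).
\]

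The main care point is the weight accounting for the blocks. Block weights are induced distances between chosen vertices, i.e.\ $d_G$ between them, and these must be compared with $w(P)$. Tightness of $P$ gives that the block weights and the weights of the connecting pieces between them sum to exactly $w(P)$. If zero-weight edges were present, a block could have weight $0$ with many vertices. However, the footnote's perturbation making all weights positive does not affect this argument, which only uses additivity along $P$.
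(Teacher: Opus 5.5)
Your proof is correct and follows the paper's argument: build $\qset_L$ greedily, split $P$ by weight into $O(\varepsilon^{-1})$ pieces of weight at most $\varepsilon L$, and use maximality to show that each piece has fewer than $\beta$ uncovered vertices. The differences are cosmetic: you partition the uncovered vertices directly rather than $P$ itself, and you spell out two things the paper leaves implicit, namely the block count $\le \varepsilon^{-1}+1$ and why subsequences of lightest chains are lightest. Your closing worry about zero weights is unnecessary, since a zero-weight block is still bounded by maximality.
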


\begin{proof}
Start with the empty family and keep adding eligible chains until no more can be added. This gives the maximal family $\qset_L$. Since the lightest chains are vertex-disjoint and each contains at least $\beta$ vertices, the size bound follows immediately.

Now let $P$ be a lightest chain with $w(P)\le L$. Partition $P$ into at most $\ceil{1/\varepsilon}$ consecutive subchains, each of weight at most $\varepsilon L$. If any such subchain contained at least $\beta$ vertices that do not lie in $\bigcup_{Q\in\qset_L} V(Q)$, then those uncovered vertices, in their order on $P$, would form a new lightest chain in $\tc(G)$ with at least $\beta$ vertices and weight at most $\varepsilon L$. It would also be vertex-disjoint from every lightest chain already in $\qset_L$, contradicting maximality. So each subchain contains fewer than $\beta$ uncovered vertices, and the total number of uncovered vertices on $P$ is at most $\ceil{1/\varepsilon}\beta$.
\end{proof}

\begin{lemma}[Shortcutting a chain]
\label{lem:shortcut-one-chain}
Let $Q=(q_1,q_2,\dots,q_h)$ be a lightest chain in $G$. There is a set $F_Q \subseteq \tc(G)$ of $|F_Q| = \cO(h\log h)$ edges such that for every $1\le i<j\le h$, $F_Q$ contains a $q_i\leadsto q_j$ path of weight exactly $d_G(q_i,q_j)$ and with at most two hops.
\end{lemma}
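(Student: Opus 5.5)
We will use the standard divide-and-conquer (``middle vertex'') construction, the same recursion that the witness-path data structure of \Cref{lem:witness-paths-for-hopset-edges} later mirrors. Let $Q=(q_1,\dots,q_h)$ and let $m=\ceil{h/2}$ be the middle index. We put into $F_Q$ the edges $(q_i,q_m)$ for every $i<m$ and $(q_m,q_j)$ for every $j>m$. Each such edge lies in $\tc(G)$ and gets weight equal to the $G$-distance between its endpoints. We then recurse on $(q_1,\dots,q_{m-1})$ and on $(q_{m+1},\dots,q_h)$. Each of these is again a subchain of a lightest chain, hence itself a lightest chain. If $h\le 2$ we simply add the single edge, if any.

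\textbf{Size.} Each level of the recursion adds at most $h$ edges in total. The subproblem sizes at least halve, so there are $\cO(\log h)$ levels, and therefore $|F_Q|=\cO(h\log h)$.

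\textbf{Two hops with exact weight.} Fix $i<j$ and follow the recursion down to the first subproblem whose middle index $m'$ satisfies $i\le m'\le j$. Such a subproblem exists, because before reaching it both $q_i$ and $q_j$ always fall into the same half.
\begin{itemize}
\item If $m'$ is strictly between $i$ and $j$, the path $q_i\to q_{m'}\to q_j$ uses two edges of $F_Q$.
\item If $m'=i$ or $m'=j$, a single edge of $F_Q$ suffices.
\end{itemize}

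For the weight, let $Q_{a,b}$ denote the subchain of $Q$ from $q_a$ to $q_b$. The key step is that lightest chains are closed under taking subchains. Suppose some subchain $Q_{a,b}$ had $w(Q_{a,b})>d_G(q_a,q_b)$. Replacing it by a shortest $q_a$-$q_b$ path would give a $q_1$-$q_h$ walk lighter than $w(Q)=d_G(q_1,q_h)$, which is impossible. Hence $w(Q_{a,b})=d_G(q_a,q_b)$ for every $a<b$. Applying this to $Q_{i,j}$, $Q_{i,m'}$ and $Q_{m',j}$ gives
\[
d_G(q_i,q_{m'})+d_G(q_{m'},q_j)=w(Q_{i,m'})+w(Q_{m',j})=w(Q_{i,j})=d_G(q_i,q_j),
\]
so the two-hop path has weight exactly $d_G(q_i,q_j)$.

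There is no real obstacle here. The only point that needs care is this subchain-of-lightest-is-lightest argument, which is what lets the intermediate vertex lie exactly on a shortest path.
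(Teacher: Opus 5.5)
Your proposal is correct and takes the same approach as the paper. The paper proves this lemma only by pointing to the standard recursive middle-vertex shortcutting procedure (the recursion it later describes in the proof of \Cref{lem:witness-paths-for-hopset-edges}); your argument that subchains of a lightest chain are lightest, which gives the exact two-hop weight, supplies a detail the paper leaves implicit.
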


\begin{proof}
This lemma can be proven using a recursive shortcutting procedure (see, e.g., \cite{Raskhodnikova10} for details).
\end{proof}

Add the edges of $F_Q$ to $H_L$ for every chain $Q\in\qset_L$. Since the chains in $\qset_L$ are vertex-disjoint, these exact chain shortcuts contribute only $\cO(n \log n)$ edges to hopset $H_L$.

\subparagraph{Sampling chains.}
Independently sample each chain in $\qset_L$ with the same probability
\[
p:=\min\!\left\{1,\frac{C\log n}{\beta}\right\},
\]
where $C>0$ is a sufficiently large constant, and let $\rset_L\subseteq \qset_L$ be the sampled subfamily.

\begin{lemma}[Sampled chains]
\label{lem:sampled-chains}
With high probability,
\[
|\rset_L|=\Oish\!\left(|\qset_L|\cdot \min\!\left\{1,\frac{\log n}{\beta}\right\}\right).
\]
In particular, if $|\qset_L|\le n/\beta$, then
\[
|\rset_L|=\Oish(n/\beta^2).
\]
\end{lemma}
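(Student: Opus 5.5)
The plan is a standard Chernoff bound on the number of sampled chains. Let $k := |\qset_L|$. The size $|\rset_L|$ is a sum of $k$ independent Bernoulli indicators, one per chain $Q \in \qset_L$, each with success probability $p=\min\{1, C\log n/\beta\}$. Hence
\[
\mathbb E\bigl[|\rset_L|\bigr] = kp \le C\,k\cdot \min\{1,\log n/\beta\}.
\]

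First I dispose of the trivial cases. If $p=1$, then $|\rset_L| = k$ deterministically, and $\min\{1,\log n/\beta\}\ge 1/\log n$ only when $\beta\le C\log n$; so $k = \Oish(k\min\{1,\log n/\beta\})$ with the polylog hidden in $\Oish$. If $kp$ is small, the additive term is what matters. I will therefore use the multiplicative Chernoff bound in the form
\[
\Pr\bigl[X \ge \max\{2\mu,\; c\log n\}\bigr]\le n^{-\Omega(c)}
\]
for a sum $X$ of independent indicators with mean $\mu$. This gives, with high probability,
\[
|\rset_L| \le 2kp + c\log n .
\]

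The main obstacle is this additive $\cO(\log n)$ term when $kp<\log n$. Since $\Oish$ hides polylog factors, $\Oish(\cdot)$ of a quantity at least $1$ absorbs it whenever $k\ge 1$; the case $k=0$ is trivial because then $\rset_L=\emptyset$. I treat the statement as holding in this polylog-absorbed sense.

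Finally, for the ``in particular'' part, I plug in $k\le n/\beta$ from \Cref{lem:maximal-cover}. This gives $|\rset_L| = \Oish\bigl((n/\beta)\cdot\min\{1,\log n/\beta\}\bigr) = \Oish(n/\beta^2)$. The regime $\beta\le C\log n$ is also covered, since there $n/\beta = \Oish(n/\beta^2)$. A union bound over the $\cO(\eps^{-1}\log(nW))$ scales keeps the high-probability guarantee.
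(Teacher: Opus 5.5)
Your approach is the same as the paper's, whose entire proof is ``another standard Chernoff bound,'' and your high-probability bound $|\rset_L|\le 2kp+c\log n$ is correct.

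The step that fails as written is absorbing the additive $c\log n$. Having $k\ge 1$ does not make $k\min\{1,\log n/\beta\}$ at least $1$, or even inverse-polylogarithmic: for $k=1$ and $\beta=n^{1/3}$ it equals $\log n/n^{1/3}$. No sharper argument can repair this, because the literal statement is false in that regime. With $k=1$ we get $|\rset_L|=1$ with probability $p=C\log n/\beta$, which is not negligible.

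So the lemma has to be read as $|\rset_L|=\Oish\bigl(|\qset_L|\min\{1,\log n/\beta\}+1\bigr)$, and that is exactly what your Chernoff bound gives. The ``in particular'' bound $\Oish(n/\beta^2)$ then follows only when $n/\beta^2$ is at least inverse-polylogarithmic, for example when $\beta=\cO(n^{1/3})$. That is the only regime in which the paper applies this lemma; the large-$\beta$ construction uses its own sampling rate $q$ and its own Chernoff bound.

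A minor point: when $p=1$ you have $\beta\le C\log n$, hence $\min\{1,\log n/\beta\}\ge 1/C$. That case therefore loses only a constant factor, not a polylogarithmic one.
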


\begin{proof}
This is another standard Chernoff bound.
\end{proof}

For every $v\in V'$ and every $Q\in\rset_L$, we add a small set of forward edges from $\{v\} \times Q$ to our hopset $H_L$. Let
\[
J:=\left\lceil \log_{1+\varepsilon}(W_{\max}/W_{\min}) \right\rceil.
\]
For every $j\in\{0,1,\dots,J\}$ define
\[
U_j(v,Q):=\set{ x\in V(Q) : d_G(v,x) \le (1+\varepsilon)^{j} W_{\min} }.
\]
If $U_j(v,Q)\neq\emptyset$, let $x_j(v,Q)$ be the earliest vertex of $U_j(v,Q)$ on $Q$, and add the edge $(v,x_j(v,Q))$ with weight $d_G(v,x_j(v,Q))$ to hopset $H_L$. We define our final hopset $H$ of $G$ to be $H := \cup_{L \in \lset}H_L$.

\begin{lemma}[Earliest-bucket edges]
\label{lem:earliest-bucket}
Fix $v\in V'$, $Q\in \rset_L$, and $y\in V(Q)$ reachable from $v$. Then $H_L$ contains an edge $(v,x)$ such that $x$ is not later than $y$ on $Q$ and
\[
d_G(v,x)\le (1+\eps)\,d_G(v,y).
\]
\end{lemma}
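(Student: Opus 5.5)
The plan is to pick the bucket index determined by $y$ and check that the earliest vertex of that bucket works. Since $y\in V(Q)$ is reachable from $v$, the distance $d_G(v,y)$ is finite. If $d_G(v,y)>0$, it is the weight of the arc $(v,y)$ of $\tc(G)$, so $W_{\min}\le d_G(v,y)\le W_{\max}$. (The case of zero distances is removed by the perturbation in the footnote, under which all weights are positive.) I then let $j$ be the smallest index in $\{0,\dots,J\}$ with $d_G(v,y)\le(1+\eps)^jW_{\min}$. Such a $j$ exists because $(1+\eps)^JW_{\min}\ge W_{\max}$ by the choice of $J$.

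With this $j$ we have $y\in U_j(v,Q)$, so $U_j(v,Q)\neq\emptyset$. The construction therefore added the edge $(v,x)$ with $x:=x_j(v,Q)$ and weight $d_G(v,x)$ to $H_L$. By definition $x$ is the earliest vertex of $U_j(v,Q)$ on $Q$, and since $y$ belongs to the same set, $x$ is not later than $y$ on $Q$.

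It remains to bound $d_G(v,x)$, and this splits into two cases.
\begin{itemize}
\item If $j=0$, then $d_G(v,x)\le W_{\min}$. Also $W_{\min}\le d_G(v,y)$, since $W_{\min}$ is the minimum weight over all arcs of $\tc(G)$. Hence $d_G(v,x)\le d_G(v,y)$.
\item If $j\ge1$, minimality of $j$ gives $d_G(v,y)>(1+\eps)^{j-1}W_{\min}$. Membership $x\in U_j(v,Q)$ gives $d_G(v,x)\le(1+\eps)^jW_{\min}$. Combining the two, $d_G(v,x)<(1+\eps)\,d_G(v,y)$.
\end{itemize}
In both cases the claimed inequality holds.

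I do not expect any real obstacle. The only points needing care are the boundary cases: that the index range $\{0,\dots,J\}$ covers $d_G(v,y)$ (handled by the definition of $J$), and that zero distances are excluded (handled by the positive-weight assumption from the footnote).
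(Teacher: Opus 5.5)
Your proof is correct and follows the paper's argument. Both choose the bucket index $j$ with $(1+\eps)^{j-1}W_{\min}<d_G(v,y)\le(1+\eps)^jW_{\min}$, note that $y\in U_j(v,Q)$ so that the earliest vertex $x_j(v,Q)$ is not later than $y$ on $Q$, and bound $d_G(v,x_j(v,Q))\le(1+\eps)^jW_{\min}\le(1+\eps)\,d_G(v,y)$. Your separate $j=0$ case and your check that $J$ is large enough are extra care the paper leaves implicit; its single two-sided inequality already covers $j=0$.
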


\begin{proof}
Let $j$ be the unique index with
\[
(1+\eps)^{j-1}W_{\min}<d_G(v,y)\le (1+\eps)^jW_{\min}.
\]
Then $y\in U_j(v,Q)$, so $x_j(v,Q)$ is defined. By construction it is the earliest vertex of $U_j(v,Q)$ on $Q$, hence it is not later than $y$ on $Q$. Also,
\[
d_G(v,x_j(v,Q))\le (1+\eps)^jW_{\min}\le (1+\eps)d_G(v,y).
\]
\end{proof}

\begin{lemma}[Rerouting through chain shortcuts]
\label{lem:reroute-through-chains}
Let $P$ be a shortest path in $G$ (viewed as a lightest chain in $\tc(G)$), and let $\mathcal Q$ be a family of pairwise vertex-disjoint lightest chains. After adding the shortcut sets $F_Q$ from \Cref{lem:shortcut-one-chain} for all $Q\in\mathcal Q$, there is a path $P'$ in $G\cup \bigcup_{Q\in\mathcal Q}F_Q$ with the same endpoints and the same weight as $P$ such that every chain $Q\in\mathcal Q$ contributes at most three vertices to $P'$. Moreover, every vertex of $P'$ outside $\bigcup_{Q\in\mathcal Q}V(Q)$ already belongs to $P$ outside that union.
\end{lemma}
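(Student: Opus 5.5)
The plan is a single left-to-right pass along $P$ that replaces each maximal run through a chain by a shortcut. Write $P=(p_1,\dots,p_k)$. For each chain $Q\in\mathcal Q$ that meets $P$, let $a_Q$ be the first and $b_Q$ the last vertex of $P$ lying on $Q$. The first step is to show that $a_Q$ precedes $b_Q$ on $Q$ as well, and that the portion of $P$ from $a_Q$ to $b_Q$ has weight exactly $d_G(a_Q,b_Q)$. The ordering follows from acyclicity: both $P$ and $Q$ are consistent with a topological order of the DAG, so two vertices lying on both appear in the same relative order on each. The weight claim holds because every subpath of a shortest path is itself a shortest path.

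The second step processes the chains greedily. Scan $P$ from $p_1$. At the first vertex $p_i$ that lies on some chain $Q$, jump directly from $p_i=a_Q$ to $b_Q$. By \Cref{lem:shortcut-one-chain}, $F_Q$ contains an $a_Q\leadsto b_Q$ path with at most two hops and weight exactly $d_G(a_Q,b_Q)$, which equals the weight of $P[a_Q,b_Q]$. Therefore the replacement keeps both the endpoints and the total weight unchanged. After the jump, continue scanning from $b_Q$ along $P$. Since $b_Q$ is the last occurrence of $Q$ on $P$, the chain $Q$ never appears again in the remainder. If $a_Q=b_Q$, nothing is replaced. The vertices $P'$ contributes from $Q$ are $a_Q$, possibly a middle vertex of the two-hop path, and $b_Q$. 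That middle vertex lies on $Q$ itself if the recursive shortcut construction only uses chain vertices as hubs, which I will assume from the \cite{Raskhodnikova10}-style construction; this gives at most three vertices from $Q$.

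The third step checks the two remaining properties. Every vertex of $P'$ outside $\bigcup V(Q)$ is a vertex of $P$ that was not skipped, because the inserted shortcut paths use only vertices of $Q$. A chain that is skipped over entirely contributes nothing, and a chain encountered later in the scan starts at a vertex after $b_Q$. Hence its occurrences in $P'$ are exactly its jump vertices plus possibly one hub. The chains are vertex-disjoint, so no vertex is counted for two chains.

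I expect the main obstacle to be the subtle point about the middle vertices of two-hop paths. A hub used for $Q$ must lie on $Q$, which guarantees that it belongs to no other chain and is not a non-chain vertex. I would verify this directly from the recursive midpoint construction of $F_Q$: the edges go from chain vertices to the midpoint and from the midpoint onward, so the hub is always a chain vertex.
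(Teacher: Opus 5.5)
Your proof is correct and is essentially the paper's argument: both replace the stretch between the first and last occurrence of a chain by the exact two-hop shortcut of \Cref{lem:shortcut-one-chain}, using acyclicity for order consistency and vertex-disjointness to rule out interference, and the paper also tacitly relies on the hub lying on $Q$ (``every replacement only shortcuts through vertices of the processed chain''). The only difference is bookkeeping: the paper processes chains one at a time on the evolving path (hence its remark that the current path stays a lightest chain), while your single left-to-right scan always cuts subpaths of the original $P$---just note that for a chain entered after an earlier jump the entry vertex need not be $a_Q$, though the same order and weight argument applies to it.
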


\begin{proof}
We process the chains of $\mathcal Q$ one by one. At every step the current path is still a lightest chain in $\tc(G)$, because all the edges we use have weight equal to the original graph distance between their endpoints.

Fix a chain $Q\in\mathcal Q$. If the current path meets $Q$ in at most three vertices, do nothing. Otherwise, let $a$ and $b$ be the first and last vertices of the current path that lie on $Q$. Since $G$ is acyclic, $a$ and $b$ appear in the same order on the current path and on $Q$. By \Cref{lem:shortcut-one-chain}, $F_Q$ contains an $a\leadsto b$ path of at most two hops and weight exactly $d_G(a,b)$. Replace the whole subpath between $a$ and $b$ by this exact shortcut. This preserves the total weight, and afterwards the current path uses at most three vertices of $Q$.

Because the chains in $\mathcal Q$ are vertex-disjoint, later replacements never touch the vertices of $Q$. Also, every replacement only shortcuts through vertices of the processed chain, so no new vertex outside $\bigcup_{Q\in\mathcal Q}V(Q)$ is introduced. After all chains have been processed, the resulting path is the desired path $P'$.
\end{proof}

\begin{lemma}[Routing from a sampled suffix]
\label{lem:route-from-sampled-suffix}
Suppose every chain $Q\in\qset_L$ has weight at most $\eta L$, where $\eta>0$. Let $P$ be a shortest $s\leadsto t$-path in $G$ with
\[
d_G(s,t)\in [L/(1+\eps),L].
\]
Suppose some vertex $v_1\in V'$ lies among the first $\beta$ vertices of $P$, and the suffix $P[v_1,t]$ contains at most $\kappa$ vertices outside $\bigcup_{Q\in\qset_L}V(Q)$, where $\kappa\ge \beta$. Then, with high probability, $G\cup H_L$ contains an $s\leadsto t$-path $\widehat P$ such that
\[
|\widehat P|=O(\kappa)
\qquad\text{and}\qquad
w(\widehat P)\le \bigl(1+\eps+\eta(1+\eps)\bigr)d_G(s,t).
\]
\end{lemma}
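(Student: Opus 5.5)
The plan is to follow the Kogan--Parter rerouting argument on the suffix $P[v_1,t]$. We first normalize the suffix and then look for a sampled chain near its end. From that chain we reroute via one earliest-bucket edge from $v_1$ and the exact two-hop shortcuts $F_Q$.

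First, apply \Cref{lem:reroute-through-chains} to the suffix $P[v_1,t]$ with $\mathcal Q=\qset_L$. This gives a path $P'$ from $v_1$ to $t$ of the same weight in which every chain of $\qset_L$ contributes at most three vertices. Every vertex of $P'$ outside $\bigcup\qset_L$ is among the at most $\kappa$ uncovered vertices of $P[v_1,t]$. Let $Q_1,\dots,Q_k$ be the distinct chains met by $P'$, listed in order of their last appearance on $P'$.

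If $k\le\beta$, then $|P'|\le 3\beta+\kappa+1=O(\kappa)$, so $\widehat P := P[s,v_1]\circ P'$ has $O(\beta+\kappa)=O(\kappa)$ hops and exact weight. Otherwise, consider the last $\beta$ chains met, $Q_{k-\beta+1},\dots,Q_k$. Since $P'$ depends only on $P$, $\qset_L$ and the fixed shortcut structure, and not on the chain sampling, one of these $\beta$ chains lies in $\rset_L$ with probability at least $1-(1-C\log n/\beta)^\beta\ge 1-n^{-C}$. To be safe in the union bound, I would fix $P=P_{s,t}$ as the canonical tree path, giving at most $n^2$ events.

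Let $Q$ be such a sampled chain and let $y$ be its last vertex on $P'$. Then $y$ is reachable from $v_1$, so \Cref{lem:earliest-bucket} gives an edge $(v_1,x)\in H_L$ with $x$ not later than $y$ on $Q$ and $d_G(v_1,x)\le(1+\eps)d_G(v_1,y)$. Since $Q$ is a lightest chain, \Cref{lem:shortcut-one-chain} gives a two-hop path from $x$ to $y$ of weight $d_G(x,y)\le w(Q)\le\eta L$. After $y$, the path $P'$ meets only chains among the last $\beta$ (chains last-met before $Q$ cannot reappear after $y$) plus at most $\kappa$ uncovered vertices, so $P'[y,t]$ has $O(\beta+\kappa)$ hops.

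Define $\widehat P := P[s,v_1]\circ(v_1,x)\circ(x\leadsto y)\circ P'[y,t]$. Its hop count is $\beta+1+2+O(\kappa)=O(\kappa)$. Its weight is at most
\[
d_G(s,v_1)+(1+\eps)d_G(v_1,y)+\eta L+d_G(y,t)\le(1+\eps)d_G(s,t)+\eta(1+\eps)d_G(s,t),
\]
using $d_G(s,v_1)+d_G(v_1,y)+d_G(y,t)=d_G(s,t)$ along the shortest path and $L\le(1+\eps)d_G(s,t)$.

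The main obstacle is the probabilistic step. The chain sequence of $P'$ must be independent of the chain sampling, and the number of events in the union bound must be polynomial. Fixing canonical shortest paths $P_{s,t}$ and a deterministic rerouting order handles both. A secondary subtlety is the case where $P'$ revisits a chain after $y$, which is handled by ordering the chains by last occurrence.
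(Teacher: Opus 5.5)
Your proposal is correct and follows essentially the same route as the paper: reroute $P[v_1,t]$ into $P'$ via \Cref{lem:reroute-through-chains}, find a sampled chain near the end of $P'$, and jump to it from $v_1$ with one earliest-bucket edge plus a two-hop $F_Q$ path. You select that chain differently, taking the last $\beta$ chains in order of last appearance and the last vertex $y$ of $Q$, where the paper takes the last $C_0\kappa$ vertices of $P'$ and any vertex $v_2$ of $Q$ there; also, since $y$ lies on $P'$ and not necessarily on $P$, the identity $d_G(s,v_1)+d_G(v_1,y)+d_G(y,t)=d_G(s,t)$ should be justified by $P[s,v_1]\circ P'$ being a lightest chain of weight $d_G(s,t)$, as the paper does.
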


\begin{proof}
Intuitively, we follow $P$ exactly up to $v_1$, jump from $v_1$ to a sampled chain near the end of $P$, and follow the (rerouted) path from there; only the jump loses a factor $1+\eps$. Apply \Cref{lem:reroute-through-chains} to the suffix $P[v_1,t]$ and the family $\qset_L$. We obtain a path $P'$ from $v_1$ to $t$ of the same weight such that every chain $Q\in\qset_L$ contributes at most three vertices to $P'$, and $P'$ still contains at most $\kappa$ vertices outside $\bigcup_{Q\in\qset_L}V(Q)$.

If $|P'|=\cO(\kappa)$, then the concatenation $P[s,v_1]\circ P'$ already has $\cO(\kappa)$ hops and exact weight $d_G(s,t)$, since $\kappa\ge \beta$.

Otherwise, let $P_2$ be the suffix of $P'$ consisting of its last $C_0\kappa$ vertices, for a sufficiently large absolute constant $C_0$. Since $P'$ has at most $\kappa$ vertices outside the chains, at least $(C_0-1)\kappa$ vertices of $P_2$ lie on chains of $\qset_L$. Because every chain contributes at most three vertices to $P'$, the suffix $P_2$ intersects $\Omega(\kappa)$ distinct chains of $\qset_L$.

By the chain-sampling probability $p=\min\{1,C\log n/\beta\}$ and the assumption $\kappa\ge \beta$, with high probability at least one of these chains belongs to $\rset_L$. Fix such a sampled chain $Q$, and let $v_2\in V(Q)\cap V(P_2)$. By \Cref{lem:earliest-bucket}, there is an edge $(v_1,x)\in H_L$ such that $x$ is not later than $v_2$ on $Q$ and
\[
d_G(v_1,x)\le (1+\varepsilon)d_G(v_1,v_2).
\]
Also, by \Cref{lem:shortcut-one-chain}, there is an $x\leadsto v_2$ path inside $F_Q$ of at most two hops and weight exactly $d_G(x,v_2)$. Since $Q$ is a lightest chain of weight at most $\eta L$,
\[
d_G(x,v_2)\le w(Q)\le \eta L \le \eta(1+\eps)d_G(s,t).
\]

Consider the path $\widehat P$ that follows $P[s,v_1]$, then the hopset edge $(v_1,x)$, then the at-most-two-hop exact route from $x$ to $v_2$ inside $Q$, and finally the suffix $P'[v_2,t]$. Its hop count is
\[
|\widehat P|
\le |P[s,v_1]|+1+2+|P'[v_2,t]|
=
\cO(\kappa),
\]
because $|P[s,v_1]|\le \beta\le \kappa$ and $v_2$ lies among the last $C_0\kappa$ vertices of $P'$.

For the weight, note that every edge of $P'$ has weight equal to the original distance between its endpoints, so the concatenation $P[s,v_1]\circ P'$ is a lightest chain from $s$ to $t$. Since $v_2$ appears after $v_1$ on this chain,
\[
w(P[s,v_1]) + d_G(v_1,v_2) + w(P'[v_2,t]) = d_G(s,t).
\]
Therefore
\begin{align*}
w(\widehat P)
&= w(P[s,v_1]) + d_G(v_1,x) + d_G(x,v_2) + w(P'[v_2,t])\\
&\le w(P[s,v_1]) + (1+\varepsilon)d_G(v_1,v_2) + d_G(x,v_2) + w(P'[v_2,t])\\
&= d_G(s,t) + \varepsilon d_G(v_1,v_2) + d_G(x,v_2)\\
&\le d_G(s,t) + \varepsilon d_G(s,t) + \eta(1+\eps)d_G(s,t).
\end{align*}
This proves the claimed stretch bound.
\end{proof}

\begin{lemma}[Constructing one scale once the chains are known]
\label{lem:build-one-scale}
Once the set $V'$ and a pairwise vertex-disjoint family $\qset_L$ are fixed, the corresponding edge set $H_L$ can be generated in $\Oish(|V'|\cdot m + |H_L|)$ time.
\end{lemma}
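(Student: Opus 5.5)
Once $V'$ and $\qset_L$ are fixed, $H_L$ has only two kinds of edges: the exact chain shortcuts $F_Q$ for $Q\in\qset_L$, and the earliest-bucket edges $(v,x_j(v,Q))$ for $v\in V'$, $Q\in\rset_L$, $j\in\{0,\dots,J\}$. I will generate each kind separately and bound the cost.

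\emph{Chain shortcuts.} For each $Q=(q_1,\dots,q_h)\in\qset_L$, I run the recursive construction of \Cref{lem:shortcut-one-chain}. Each edge $(q_i,q_j)$ it outputs needs the weight $d_G(q_i,q_j)$. Since $Q$ is a lightest chain, this equals a difference of prefix sums $\sum_{k<j}d_G(q_k,q_{k+1})-\sum_{k<i}d_G(q_k,q_{k+1})$. So I only need the consecutive distances $d_G(q_k,q_{k+1})$. If the chain comes from \Cref{alg:lightest-chain-cover}, these weights are already available from the SSSP that produced it. Otherwise I compute them another way; the fallback I have in mind is one DAG SSSP (\Cref{lem:dag-sssp}) from the first vertex of each chain. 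A difficulty is that $|\qset_L|$ may exceed $|V'|$, so I would charge this cost to the chain computation instead. After the prefix sums are in place, every edge of $F_Q$ is produced in $O(1)$ time, for a total of $O(\sum_Q|F_Q|)=O(|H_L|)$.

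\emph{Earliest-bucket edges.} For each $v\in V'$ I run one SSSP from $v$ in $O(n+m)$ time (\Cref{lem:dag-sssp}), which gives $d_G(v,x)$ for all $x$. Then for each sampled chain $Q\in\rset_L$ I scan $Q$ in chain order and keep the running minimum $\mu_k=\min_{i\le k}d_G(v,q_i)$. Since $U_j(v,Q)$ is the set of $x\in Q$ with $d_G(v,x)\le(1+\eps)^jW_{\min}$, the earliest vertex $x_j(v,Q)$ is the first $q_k$ with $\mu_k\le(1+\eps)^jW_{\min}$. The sequence $\mu_k$ is nonincreasing, so a single merge-style pass handles all thresholds $j$ at once: walk down the chain and the thresholds together, emitting an edge each time the running minimum crosses a new threshold. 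This costs $O(|Q|+J)$ per pair $(v,Q)$. Since the chains are vertex-disjoint, $\sum_Q|Q|\le n$, so the cost per $v$ is $O(m+n+|\rset_L|J)$. The $|\rset_L|J$ term can be reduced to the number of edges actually emitted plus $O(|\rset_L|)$ by jumping directly to the threshold index $\lceil\log_{1+\eps}(\mu_k/W_{\min})\rceil$ whenever $\mu_k$ decreases.

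Summing over $v\in V'$ gives $\Oish(|V'|\cdot m+|H_L|)$, as claimed. Duplicate edges, where several values of $j$ give the same $x$, can be removed on the fly because consecutive thresholds are processed together.

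The main obstacle is the accounting for the chain shortcut weights: they must not cost more than $O(|H_L|)$ plus the SSSP budget. I rely on lightness, which turns each weight into a prefix-sum difference. The remaining parts are routine linear scans on top of $|V'|$ SSSP computations.
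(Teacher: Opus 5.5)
Your proof is correct and follows the same route as the paper: generate each $F_Q$ in time proportional to its size, and obtain the bucket edges with one DAG SSSP per $v\in V'$ followed by a linear scan of the vertex-disjoint sampled chains. Your prefix-sum argument for the $F_Q$ weights, which uses lightness and the arc weights the chain already carries in $\tc(G)$ (e.g.\ from the SSSP in \Cref{alg:lightest-chain-cover}), supplies a detail the paper leaves implicit. Under that convention the per-chain SSSP fallback is unnecessary.
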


\begin{proof}
For every chain $Q\in\qset_L$, the set $F_Q$ from \Cref{lem:shortcut-one-chain} can be generated in time proportional to its size. Since the chains are vertex-disjoint, this costs $\Oish(n)$ in total.

For the bucket edges, run one shortest-path computation from each source vertex $v\in V'$. For each such source we then scan the sampled chains in their internal order and emit the earliest edge for every nonempty bucket. Because the chains are vertex-disjoint, this scan costs $\Oish(n)$ per source vertex, and the total time to emit the bucket edges is additionally proportional to their number. This gives total running time $\Oish(|V'|\cdot m + |H_L|)$.
\end{proof}

\begin{proof}[Proof of \Cref{thm:easy_dag_hopset}]
Set
\(
\beta:=\ceil{n^{1/3}}.
\) For every scale $L\in\lset$, construct the maximal family $\qset_L$ from \Cref{lem:maximal-cover}, add the exact shortcut sets $F_Q$ for all $Q\in\qset_L$, sample the subfamily $\rset_L$, and add the corresponding bucket edges.

Let us first bound the size. By \Cref{lem:prefix-sampling},
\[
|V'|=\Oish(n/\beta).
\]
By \Cref{lem:maximal-cover} and \Cref{lem:sampled-chains},
\[
|\rset_L|=\Oish(n/\beta^2).
\]
Thus the bucket edges contribute
\[
|V'|\cdot |\rset_L|\cdot \cO(\eps^{-1}\log(Wn))
=
\Oish(\eps^{-1}n^2/\beta^3)
=
\Oish(\eps^{-1}n)
\]
edges at scale $L$. The exact chain shortcuts contribute only $\Oish(n)$ more edges at that scale. Since $|\lset|=\cO(\eps^{-1}\log(Wn))$, altogether
\(
|H|=\Oish(\eps^{-2}n).
\)

Now fix a reachable ordered pair $(s,t)$, and let $P \coloneqq P_{s,t}$, viewed as a lightest chain in $\tc(G)$. Let $L\in\lset$ be the unique scale with
\[
d_G(s,t)\in [L/(1+\eps),L].
\]
If $|P|\le \beta$, then there is nothing to prove. Otherwise, \Cref{lem:prefix-sampling} gives a vertex $v_1\in V'$ among the first $\beta$ vertices of $P$. Since
\[
w(P[v_1,t])\le d_G(s,t)\le L,
\]
\Cref{lem:maximal-cover} implies that the suffix $P[v_1,t]$ contains at most
\(
\kappa:=\cO(\beta/\eps)
\)
vertices outside $\bigcup_{Q\in\qset_L}V(Q)$. Applying \Cref{lem:route-from-sampled-suffix} with $\eta=\eps$ and $\kappa = \cO(\beta/\eps)$ yields, with high probability, an $s\leadsto t$-path in $G\cup H_L$ with
\(
\cO(\beta/\eps)
\)
hops and stretch $1+\cO(\eps)$; rescaling $\eps$ by a constant gives the claim.
\end{proof}

\section{Deriving Fast Hopsets For All Parameters}
\label{sec:fast-all-params}

The construction for $(1+\varepsilon)$-hopsets in \Cref{sec:kp-hopset} (and in particular, \Cref{lem:build-one-scale}) suggests that to compute $(1+\varepsilon)$-hopsets in our desired time complexity, it is sufficient to compute a lightest chain cover quickly. In this section, we formally show that plugging in the light chain cover construction described in \Cref{alg:lightest-chain-cover} into the hopset construction described in \Cref{sec:kp-hopset} yields the following lemma.

\begin{lemma}
\label{thm:hs-small-diam}
    Let $G=(V,E,w)$ be an $n$-vertex, $m$-edge directed acyclic graph with nonnegative edge-weights and polynomial aspect ratio. For any $\varepsilon \in (0, 1)$ and hopbound $\beta = \BigO(n^{1/3})$, there is an
$\Oish_{\varepsilon}\!\left(\frac{n}{\beta}\cdot m \right)$-time algorithm computing
$\Oish_{\varepsilon}\!\left(\frac{n^2}{\beta^3}\right)$-size $(1+\varepsilon)$-hopsets with $\BigO(\beta)$ hopbound.
\end{lemma}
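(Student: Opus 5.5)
We prove \Cref{thm:hs-small-diam} by running the construction of \Cref{sec:kp-hopset} with a general hopbound parameter $\beta = \BigO(n^{1/3})$ in place of $\ceil{n^{1/3}}$. The maximal family $\qset_L$ of \Cref{lem:maximal-cover} is replaced by the lightest chain cover computed by \Cref{alg:lightest-chain-cover}. Concretely, for every scale $L \in \lset$ (there are $\cO(\eps^{-1}\log(Wn))$ of them), we proceed in five steps:
\begin{enumerate}
\item Sample $V'$ with probability $C\log n/\beta$; by \Cref{lem:prefix-sampling} we get $|V'| = \Oish(n/\beta)$.
\item Apply \Cref{lem:fast_chain_alg} with $\eta = \eps$ to obtain $\chains$ satisfying \Cref{def:lightest-chain-cover}. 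This takes $\Oish(n/\beta\cdot m)$ time.
\item Add the exact chain shortcuts $F_Q$ of \Cref{lem:shortcut-one-chain} for every $Q \in \chains$.
\item Sample $\rset_L \subseteq \chains$ with probability $C\log n/\beta$.
\item Add the earliest-bucket edges from every $v\in V'$ to every $Q\in\rset_L$.
\end{enumerate}

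For size, \Cref{cond:d2} gives $|\chains| = \cO(n/\beta)$, and \Cref{lem:sampled-chains} then gives $|\rset_L| = \Oish(n/\beta^2)$. The bucket edges therefore number $\Oish(\eps^{-1}\cdot n/\beta\cdot n/\beta^2) = \Oish_\eps(n^2/\beta^3)$. The chain shortcuts add $\cO(n\log n)$ edges per scale, and since $\beta = \BigO(n^{1/3})$ we have $n \le n^2/\beta^3$, so they are absorbed. Summing over scales gives the claimed size.

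For time, step 2 dominates at $\Oish(n/\beta\cdot m)$ per scale. \Cref{lem:build-one-scale} handles the rest in $\Oish(|V'|\cdot m + |H_L|) = \Oish(n/\beta\cdot m + n^2/\beta^3)$. The second term is at most $\Oish(n/\beta \cdot n)$, which is within budget since we may assume $m \ge n$ up to isolated vertices. If weights are zero, first apply the perturbation from the footnote in \Cref{sec:kp-hopset}.

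For correctness, fix a reachable pair $(s,t)$ with $d_G(s,t)\in[L/(1+\eps),L]$, and let $P = P_{s,t}$ be the fixed shortest path. If $|V(P)|\le\beta$ we are done. Otherwise \Cref{lem:prefix-sampling} gives $v_1 \in V'$ among the first $\beta$ vertices of $P$. The suffix $P[v_1,t]$ is a lightest path starting at $v_1\in V'$ with weight at most $L$, so \Cref{cond:d3} says it has at most $\beta$ vertices outside $\bigcup\chains$.

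This is the step I expect to need the most care. \Cref{cond:d3} in \Cref{def:lightest-chain-cover} is stated for the final family, but after the truncation step of \Cref{alg:lightest-chain-cover} the covered vertex set is unchanged, so it still holds. We therefore apply \Cref{lem:route-from-sampled-suffix} with $\kappa = \beta$. Its hypothesis that every chain has weight at most $\eta L$ holds by \Cref{cond:d1}, and lightness of the chains (\Cref{cond:d0}) is what \Cref{lem:reroute-through-chains} and \Cref{lem:shortcut-one-chain} require.

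Two points need checking here. First, the sampling of $\rset_L$ must be independent of $P$ and of the rerouted path $P'$, which holds because $\chains$ depends only on $V'$ and $P_{s,t}$ is fixed in advance. Second, a union bound over all $n^2$ pairs and all scales must go through.

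The result is an $s\leadsto t$ path with $\cO(\beta)$ hops and stretch $1+\eps+\eps(1+\eps)$. Rescaling $\eps$ by a constant completes the proof.
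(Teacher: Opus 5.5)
Your proposal is correct and matches the paper's proof essentially step for step. Like you, the paper runs the \Cref{sec:kp-hopset} construction with \Cref{lem:fast_chain_alg} in place of the maximal cover and bounds size and time the same way, assuming $n \le 2m$. It also gets correctness from \Cref{lem:prefix-sampling}, \Cref{cond:d3}, and \Cref{lem:route-from-sampled-suffix} with $\kappa=\beta$, followed by a union bound over all pairs and scales. The only difference is cosmetic: the paper fixes $\eps_0=\eps/20$ and $\eta=\eps_0/10$ up front, rather than taking $\eta=\eps$ and rescaling at the end.
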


Later in this section, in \Cref{thm:hs-large-diam}, we will give a separate proof of sublinear-size $(1+\varepsilon)$-hopsets for the case where $\beta \in \omega(n^{1/3})$. This  needs an additional subsampling trick and randomly chosen portals, which is  why we handle it separately. This idea is inspired by a similar trick from \cite{kogan_faster_nodate}, originally developed for shortcut sets.

\begin{proof}[Proof of \Cref{thm:hs-small-diam}]
We follow the construction from \Cref{sec:kp-hopset}, and only replace the chain-cover step by \Cref{lem:fast_chain_alg}. Set $\varepsilon_0:=\varepsilon/20$ and $\eta:=\varepsilon_0/10$, and let \begin{equation*}
\mathcal L:=\{W_{\min},(1+\varepsilon_0)W_{\min},(1+\varepsilon_0)^2W_{\min},\ldots,W_{\max}\}, \qquad H:=\bigcup_{L\in\mathcal L} H_L.
\end{equation*}
For each $L\in\mathcal L$, the scale-$L$ hopset handles all pairs $(s,t)$ with \begin{equation*}d_G(s,t)\in [L/(1+\varepsilon_0),L].\end{equation*}

Fix a scale $L$. Sample every vertex with probability $p:=\min\{1,C\log n/\beta\}$ and let $V'$ be the sampled set, where $C$ is a large enough constant again. By \Cref{lem:prefix-sampling}, with high probability
\begin{equation*}|V'|=\Oish(n/\beta).\end{equation*}
Apply \Cref{lem:fast_chain_alg} directly to $G$ and this set $V'$ to obtain a family $\mathcal C_L$ satisfying \Cref{def:lightest-chain-cover}. Thus \begin{equation*}|\mathcal C_L|=\Oish_\varepsilon(n/\beta),\end{equation*} every $Q\in\mathcal C_L$ is a lightest chain of weight at most $\eta L$, and the chains are pairwise vertex-disjoint.

Now build $H_L$ exactly as in \Cref{sec:kp-hopset}: add the exact shortcut sets $F_Q$, sample a subfamily $\mathcal R_L$, and add the bucket edges from $V'$ to $\mathcal R_L$. Using \Cref{lem:sampled-chains} and the disjointness of the chains,
\begin{equation*}|\mathcal R_L|=\Oish_\varepsilon(n/\beta^2), \qquad |H_L|=\Oish_\varepsilon(n^2/\beta^3).\end{equation*}
Since $|\mathcal L|=\Oish_\varepsilon(1)$, this gives \begin{equation*}|H|=\Oish_\varepsilon(n^2/\beta^3).\end{equation*}
By \Cref{lem:fast_chain_alg,lem:build-one-scale}, the total running time is
\begin{equation*}\Oish_\varepsilon\!\left(\frac{n}{\beta}\cdot m + \frac{n^2}{\beta^3}\right) = \Oish_\varepsilon\!\left(\frac{n}{\beta}\cdot m\right),\end{equation*}
where we assume w.l.o.g.\ that $n \le 2m$, as isolated vertices can be ignored.

For stretch and hopbound, fix a reachable pair $(s,t)$ and let $P$ be the fixed shortest $s\leadsto t$-path $P_{s,t}$ of \Cref{lem:prefix-sampling}, viewed as a lightest chain in $\tc(G)$. Choose the unique scale $L\in\mathcal L$ with \begin{equation*}d_G(s,t)\in [L/(1+\varepsilon_0),L].\end{equation*}
If $|P|\le \beta$, then $P$ itself already has at most $\beta$ hops. Otherwise, by \Cref{lem:prefix-sampling}, some vertex $v\in V'$ lies among the first $\beta$ vertices of $P$. Since \begin{equation*}w(P[v,t])=d_G(v,t)\le d_G(s,t)\le L,\end{equation*} the cover property of $\mathcal C_L$ implies that the suffix $P[v,t]$ contains at most $\beta$ vertices outside $\bigcup_{Q\in\mathcal C_L}V(Q)$. Applying \Cref{lem:route-from-sampled-suffix} with $\kappa=\beta$, the family $\mathcal C_L$, and parameter $\varepsilon_0$ yields an $s\leadsto t$-path in $G\cup H_L$ with $\cO(\beta)$ hops and weight at most
\begin{equation*}\bigl(1+\varepsilon_0+\eta(1+\varepsilon_0)\bigr)d_G(s,t)\le (1+\varepsilon)d_G(s,t).\end{equation*}

A union bound over all scales and ordered pairs shows that, with high probability, one random choice of the samples works simultaneously for every pair. Hence the algorithm computes a $(1+\varepsilon)$-hopset with the claimed size, hopbound, and running time.
\end{proof}

We also obtain a faster construction of $(1+\varepsilon)$-hopsets with $o(n)$ edges and hopbound $\beta  \in [n^{1/3}, n]$, by a slight modification of  the construction in \Cref{sec:kp-hopset}.

\begin{restatable}{lemma}{hslargediamthm}\label{thm:hs-large-diam}
Let $G=(V,E,w)$ be an $n$-vertex DAG with $m$ edges, nonnegative edge-weights, and polynomial
aspect ratio. Let $\varepsilon\in(0,1)$, and let $\beta$ be an integer with $n^{1/3} \le \beta \le n$.
Then one can compute a $(1+\varepsilon)$-hopset $H$ for $G$ with hopbound at most $O(\beta)$ and size at most $|H|=\widetilde O_{\varepsilon}\!\left(\left(\frac{n}{\beta}\right)^{3/2}\right)$
in running time
$
\Oish_{\varepsilon}(n/\beta \cdot m).
$
\end{restatable}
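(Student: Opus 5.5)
The plan is to reuse the one-scale construction of \Cref{thm:hs-small-diam}, but to sparsify the chain shortcuts by sampling \emph{portals} on each chain, in the spirit of the subsampling trick of \cite{kogan_faster_nodate}. I would choose two parameters: a portal gap $g:=\ceil{\beta^{3/2}/n^{1/2}}$, which is at least $1$ because $\beta\ge n^{1/3}$, and a chain-sampling rate $q:=\min\{1,C g\log n/\beta\}$, which is roughly $(\beta/n)^{1/2}\le 1$.

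\textbf{Construction at a fixed scale $L$.}
\begin{enumerate}
\item Sample $V'$ with rate $C\log n/\beta$ as before, so $|V'|=\Oish(n/\beta)$ and \Cref{lem:prefix-sampling} holds.
\item Compute $\mathcal C_L$ by \Cref{lem:fast_chain_alg} in $\Oish(n/\beta\cdot m)$ time.
\item On each chain $Q\in\mathcal C_L$, mark its first vertex and then every vertex independently with probability $\min\{1,C\log n/g\}$ as a portal. Let $\Pi_Q$ be the subchain of portals; it is again a lightest chain. W.h.p.\ any $g\log n$ consecutive vertices of $Q$ contain a portal, and in total there are $\Oish(n/g)$ portals.
\item Add the exact $2$-hop shortcuts $F_{\Pi_Q}$ of \Cref{lem:shortcut-one-chain} for every portal chain. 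Also add the consecutive chain edges of each $Q$, weighted by the true distances; these are at most $n$ edges and are needed for walking along a chain.
\item Sample chains with rate $q$ to get $\mathcal R_L$, of size $\Oish(nq/\beta)$.
\item For every $v\in V'$ and $Q\in\mathcal R_L$, add earliest-bucket edges as in \Cref{lem:earliest-bucket}, but only towards portals of $Q$.
\end{enumerate}
The size is then $\Oish(n/g+|V'|\cdot|\mathcal R_L|)=\Oish(n/g+n^2q/\beta^2)=\Oish(n^2g/\beta^3+n/g)$. The choice of $g$ balances the two terms at $\Oish((n/\beta)^{3/2})$. The running time is dominated by computing the chain cover and by one SSSP per vertex of $V'$, as in \Cref{lem:build-one-scale}, giving $\Oish(n/\beta\cdot m)$.

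\textbf{Correctness.} Fix $(s,t)$, its fixed path $P=P_{s,t}$, the scale $L$, and the sampled $v_1\in V'$ among the first $\beta$ vertices of $P$. By the cover property, the suffix $P[v_1,t]$ has at most $\beta$ uncovered vertices. I would prove a portal analogue of \Cref{lem:reroute-through-chains}: for each chain $Q$, replace the subpath of $P[v_1,t]$ between the first and last portal of $Q$ on it by the exact $2$-hop route in $F_{\Pi_Q}$. This keeps the weight exact. Afterwards $Q$ contributes only $O(1)$ portal vertices plus the non-portal vertices of $Q$ that occur before the first or after the last portal on the path. W.h.p.\ those are at most $O(g\log n)$, provided I order the path so that it walks along $Q$ between consecutive portals; this is possible using the consecutive chain edges, again without changing the weight because $Q$ is lightest. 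Hence each chain contributes $\Oish(g)$ vertices to the rerouted suffix $P'$.

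\textbf{Hop count and stretch.} If $|P'|=\Oish(\beta)$ we are done. Otherwise the last $\Theta(\beta\log n)$ vertices of $P'$ meet $\Omega(\beta/g)$ distinct chains. Since $P$, the chain cover, and the portals do not depend on the chain sampling, w.h.p.\ one of these chains, $Q\in\mathcal R_L$, is sampled at rate $q\approx g\log n/\beta$. Take a portal $v_2$ of $Q$ on $P'$ and route as in \Cref{lem:route-from-sampled-suffix}: follow $P[s,v_1]$, take the bucket edge to the earliest portal $x$ of $Q$ not later than $v_2$, use the $2$-hop exact route in $F_{\Pi_Q}$ to $v_2$, and then follow $P'[v_2,t]$. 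The stretch bound carries over verbatim, using $w(Q)\le\eta L$.

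\textbf{Main obstacle.} The step I expect to be hardest is this hop accounting. The vertices of a chain can interleave with other chains along $P$, and the path must enter and leave each chain at non-portal vertices. I would first try to handle this by the rerouting argument above, charging $O(g)$ hops per chain met and showing that $\Omega(\beta/g)$ distinct chains appear near the end of the path. If the interleaving breaks this, I would fall back to a walk that jumps from portal to portal along the portal chain, moving towards the highest ancestor portal on the same chain, and argue that such a walk has $O(\beta)$ hops.
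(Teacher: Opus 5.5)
\textbf{There is a genuine gap, in step 4.}

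You add the consecutive chain edges of every $Q\in\mathcal C_L$ to $H$. Consecutive chain vertices are in general joined only in $\tc(G)$, and the chains may cover $\Theta(n)$ vertices, so this step costs up to $n$ edges. The claimed size $\widetilde O((n/\beta)^{3/2})$, however, is $o(n)$ for every $\beta=\omega(n^{1/3})$; for example, it is $n^{3/4}$ at $\beta=\sqrt n$. Your tally $\widetilde O(n/g+|V'|\cdot|\mathcal R_L|)$ silently leaves these edges out. Sublinear size is exactly what this lemma adds over the small-diameter case, and the preserver reduction relies on it when bounding $|H_i|\beta_{i-1}$. This is why the paper explicitly refuses to walk along $Q$. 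Apart from this, your parameters match the paper: $g$ is its $\delta=\lceil\beta/r\rceil$ with $r=\lceil\sqrt{n/\beta}\rceil$, and $q\approx C\log n/r$. The rest of your size and time accounting is fine.

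Once those edges are removed, your hop accounting no longer works. You used them for two things:
\begin{itemize}
\item making each chain a contiguous block on $P'$ that contains a portal every $O(g)$ steps;
\item guaranteeing a portal $v_2$ of the sampled chain near the end of $P'$.
\end{itemize}
Without them, a sampled chain may meet the last $\Theta(\beta\log n)$ vertices of $P'$ only through non-portal vertices. Because chains interleave along $P$, its portals on the path can lie far earlier. Bucket edges only reach portals, so the remaining route to $t$ can be long. In addition, the chain-by-chain rerouting makes the path depend on the portal sampling: reroutings for other chains can cut out exactly the portals of $Q$. So the per-chain $O(g)$ bound no longer follows from independence.

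\textbf{The missing idea} is the paper's walk on $P$ itself, combined with deferred decisions.
\begin{itemize}
\item Walk backwards from $t$ along $P$. Whenever you are at a portal of $Q$ and $P$ contains an earlier portal of $Q$, jump via the two exact hops of $F^{\delta}_Q$ to the first portal of $Q$ on $P$.
\item Once a portal of $Q$ has been visited, the walk meets at most $2+\beta/(2r)$ further vertices of $Q$. These are non-portals of $Q$ on $P$ before its first portal, and there are few of them w.h.p.\ because portals are sampled independently of the fixed path $P$.
\item Apart from the at most $\beta$ uncovered vertices, every other hop lands on a chain none of whose portals has been revealed. Such a vertex is a portal with probability $\rho$, independently of the history.
\item Hence within $10\beta$ hops the walk either reaches $P[s,v_1]$ or visits portals of $r$ distinct chains. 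One of these chains lies in $\mathcal R_L$ w.h.p.
\item Take $v_2$ to be the first portal of that chain visited by the walk. Your stretch computation then goes through.
\end{itemize}
Your fallback in the last paragraph points in this direction, but this counting argument is the core of the proof and it is absent. As a smaller point, your $\Theta(\beta\log n)$-length suffix would only give hopbound $O(\beta\log n)$, not $O(\beta)$.
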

\begin{proof}
As in \Cref{sec:kp-hopset}, we work one distance scale at a time and then take the union over all scales.

Set $\eps_0:=\eps/20$, and
$\eta:=\eps_0/10$. Let
\begin{equation*}
\mathcal L:=\{W_{\min},(1+\eps_0)W_{\min},(1+\eps_0)^2W_{\min},\ldots,W_{\max}\},
\qquad
H:=\bigcup_{L\in\mathcal L} H_L.
\end{equation*}
Since the aspect ratio is polynomial, $|\mathcal L|=\Oish_\eps(1)$.
For each $L\in\mathcal L$, the hopset $H_L$ will handle all reachable pairs $(s,t)$ with 
$d_G(s,t)\in [L/(1+\eps_0),L]$.

Fix one scale $L$, and write
\begin{equation*}
r:=\left\lceil \sqrt{n/\beta}\right\rceil,
\qquad
\delta:=\left\lceil \beta/r\right\rceil.
\end{equation*}
Since $\beta \ge n^{1/3}$, we have $r\le \beta$, and therefore $1\le \delta\le \beta$.
Sample every vertex in $V(G)$ independently with probability
$
p_v:=\min\!\left\{1,\frac{C\log n}{\beta}\right\},
$ 
and let $V'$ be the sampled set.
By \Cref{lem:prefix-sampling}, with high probability, $
|V'|=\Oish(n/\beta)$,
and every path $P_{s,t}$ with more than $\beta$ vertices contains some vertex of $V'$ among its first $\beta$ vertices.

Apply \Cref{lem:fast_chain_alg} with parameter $\beta$ to obtain a family $\mathcal C_L$ satisfying \Cref{def:lightest-chain-cover}. Thus $|\mathcal C_L|=\Oish(n/\beta)$, 
every $Q\in\mathcal C_L$ is a lightest chain of weight at most $\eta L$, and the chains are pairwise vertex-disjoint.

\subparagraph{Sparse exact hop edges.}
Independently of all other random choices, make every vertex a \emph{portal} with probability $\rho:=\min\{1,\,C r\log n/\beta\}$, so $\rho\approx C\log n/\delta$. For each chain $Q\in\mathcal C_L$, the portals of $Q$ form a subsequence $\Pi_Q$, which is again a lightest chain. Apply \Cref{lem:shortcut-one-chain} to $\Pi_Q$, and let $F_Q^{\delta}$ be the resulting edge set. So any two portals of $Q$ are joined by an exact path with at most two hops. We cannot walk along $Q$ between other vertices, since consecutive vertices of $Q$ are in general only joined by edges of $\tc(G)$. Instead, the analysis below follows the shortest path itself and only jumps between portals that this path visits. Random portals ensure that it visits them often enough.

Since the chains are vertex-disjoint and, by Chernoff, w.h.p.\ there are $\Oish(\rho n)$ portals,
\begin{equation*}
\sum_{Q\in\mathcal C_L}|F_Q^{\delta}|
=\Oish\!\left(\frac{n}{\delta}\right)
=\Oish\!\left(\frac{nr}{\beta}\right)
=\Oish\!\left(\left(\frac{n}{\beta}\right)^{3/2}\right).
\end{equation*}

\subparagraph{Subsampling the chains.}
Sample each chain in $\mathcal C_L$ independently with probability
\begin{equation*}
q:=\min\!\left\{1,\frac{C\log n}{r}\right\},
\end{equation*}
and let $\mathcal R_L\subseteq \mathcal C_L$ be the sampled subfamily. By Chernoff, or trivially when $q=1$,
\begin{equation*}
|\mathcal R_L|=\Oish\!\left(\frac{|\mathcal C_L|}{r}\right)=\Oish\!\left(\sqrt{\frac{n}{\beta}}\right).
\end{equation*}
For every $v\in V'$ and every $Q\in\mathcal R_L$, add the earliest-bucket edges from $v$ to $\Pi_Q$, i.e., to the earliest portal of $Q$ in each bucket, exactly as in \Cref{sec:kp-hopset}. Let $H_L$ be the union of all edges in the sets $F_Q^{\delta}$ and all these bucket edges.

Using the same counting as in \Cref{sec:kp-hopset},
\begin{equation*}
|H_L|
=\Oish_\eps\!\left(\frac{n}{\delta}+|V'|\cdot |\mathcal R_L|\right)
=\Oish_\eps\!\left(\left(\frac{n}{\beta}\right)^{3/2}\right).
\end{equation*}
By \Cref{lem:fast_chain_alg} and the same implementation as in \Cref{lem:build-one-scale}, the scale-$L$ construction takes time
\begin{equation*}
\Oish_\eps\!\left(\frac{n}{\beta}\cdot m + |H_L|\right)=\Oish_\eps\!\left(\frac{n}{\beta}\cdot m\right),
\end{equation*}
because $|H_L|=\Oish_\eps((n/\beta)^{3/2})\le \Oish(n)$.
Summed over all scales, this is $\Oish_\eps(n/\beta\cdot m)$.

\subparagraph{Correctness.}
Fix a reachable ordered pair $(s,t)$, and let $P$ be the fixed shortest path $P_{s,t}$. Choose the unique scale $L\in\mathcal L$ for which
\begin{equation*}
d_G(s,t)\in [L/(1+\eps_0),L].
\end{equation*}
If $|P| \le 10\beta$, there is nothing to prove.
Assume therefore that $|P|> 10\beta$.
By \Cref{lem:prefix-sampling}, some vertex $v_1\in V'$ lies among the first $\beta$ vertices of $P$.
Since $w(P[v_1,t])\le L$, property~\ref{cond:d3} of \Cref{def:lightest-chain-cover} implies that the suffix $P[v_1,t]$ contains at most $\beta$ vertices outside $\bigcup_{Q\in\mathcal C_L}V(Q)$.

Consider the walk from $t$ backwards along $P$ that, whenever it is at a portal of a chain $Q$ and $P$ contains an earlier portal of $Q$, jumps to the first portal of $Q$ on $P$ using the two exact hops of $F_Q^{\delta}$. Read forwards, this is an exact path.

We claim that w.h.p., within $10\beta$ hops, the walk reaches $P[s,v_1]$ or visits portals of $r$ distinct chains. In the first case we are done, as the rest of $P$ has at most $\beta$ hops. For the claim, suppose that neither happens. After the walk has visited a portal of a chain $Q$, it jumps at most once and then only visits vertices of $Q$ on $P$ before its first portal. By the choice of $\rho$, w.h.p.\ there are at most $\beta/(2r)$ of them. Together with the at most $\beta$ vertices outside the chains, this accounts for at most $\beta+r(2+\beta/(2r))\le 4\beta$ of the first $10\beta$ hops. Each of the other hops reaches a vertex of a chain without a visited portal, which is a portal with probability $\rho$, independently of the walk so far. So by the choice of $\rho$, w.h.p.\ at least $r$ of these hops reach a portal, a contradiction.

Since the chains are sampled independently of the walk, w.h.p.\ one of these $r$ chains, say $Q$, belongs to $\mathcal R_L$. Let $v_2$ be the first portal of $Q$ that the walk visits; it lies on $P$ after $v_1$.
By \Cref{lem:earliest-bucket} applied to $\Pi_Q$, $H_L$ contains an edge $(v_1,x)$ to a portal $x$ such that $x$ is not later than $v_2$ on $Q$ and
\begin{equation*}
d_G(v_1,x)\le (1+\eps_0)d_G(v_1,v_2).
\end{equation*}
By construction of $F_Q^{\delta}$, there is an exact $x\leadsto v_2$ path in $G\cup F_Q^{\delta}$ using at most two hops, and
\begin{equation*}
d_G(x,v_2)\le w(Q)\le \eta L \le \eta(1+\eps_0)d_G(s,t).
\end{equation*}
Let $\widehat P$ be the path that follows $P[s,v_1]$, then the edge $(v_1,x)$, then the exact $x\leadsto v_2$ route inside $G\cup F_Q^{\delta}$, and finally the walk from $v_2$ to $t$.
Its hop count satisfies
\begin{equation*}
|\widehat P| \le |P[s,v_1]| + 1 + 2+10\beta = O(\beta).
\end{equation*}
For the weight, exactly as in \Cref{lem:route-from-sampled-suffix},
\begin{align*}
 w(\widehat P)
 &\le d_G(s,t) + \eps_0 d_G(v_1,v_2) + d_G(x,v_2)\\
 &\le \bigl(1+\eps_0+\eta(1+\eps_0)\bigr)d_G(s,t)
 \le (1+\eps)d_G(s,t).
\end{align*}
Thus $H_L$ supplies a $(1+\eps)$-approximate path with at most $O(\beta)$ hops for every pair at scale $L$.
A union bound over all scales and all ordered pairs shows that, with high probability, one random choice works simultaneously for every pair.
Therefore
\begin{equation*}
|H|=\Oish_\eps\!\left(\left(\frac{n}{\beta}\right)^{3/2}\right),
\end{equation*}
and $H$ is a $(1+\eps)$-hopset of hopbound at most $O(\beta)$ computed in time $\Oish_\eps(n/\beta\cdot m)$.
\end{proof}

\section*{AI Use Declaration}
An initial version of this paper was prepared entirely without the use of AI.
Afterwards, AI was used to implement reviewer feedback and minor fixes.

\printbibliography[title={References}]

\end{document}